\theoremstyle{plain}
\newtheorem{theorem}{Theorem}[section]
\newtheorem{lemma}[theorem]{Lemma}
\newtheorem{corollary}[theorem]{Corollary}
\theoremstyle{definition}
\newtheorem{claim}[theorem]{Claim}
\newtheorem{fact}[theorem]{Fact}
\newcommand {\br} [1] {\ensuremath{ \left( #1 \right) }}
\newcommand {\Br} [1] {\ensuremath{ \left[ #1 \right] }}
\newcommand {\norm}[1]{{\| #1 \|}}  
\newcommand {\bra} [1] {\ensuremath{ \left\langle #1 \right| }}
\newcommand {\ket} [1] {\ensuremath{ \left| #1 \right\rangle }}
\newcommand {\ketbratwo} [2] {\ensuremath{ \left| #1 \middle\rangle \middle\langle #2 \right| }}
\newcommand {\ketbra} [1] {\ketbratwo{#1}{#1}}
\DeclareMathOperator*{\Tr}{Tr}
\DeclareMathOperator*{\DL}{DL}
\newcommand {\id} {\ensuremath{\mathbbm{1}}}
\newcommand {\suppress}[1]{}
\def\max{\mathrm{max}}
\def\min{\mathrm{min}}
\newcommand{\Loc}{\mathrm{Loc}}
\newcommand{\Ind}{\mathrm{Ind}}
\newcommand{\hQ}{\widehat{Q}}
\newcommand{\op}{\mathrm{Op}}
\newcommand{\SR}{\mathrm{SR}}
\newcommand{\sign}{\mathrm{sign}}
\newcommand{\inte}{\mathrm{int}}
\newcommand{\EqDef}{\stackrel{\mathrm{def}}{=}}
\def\cT{\mathcal{T}}
\def\cG{\mathcal{G}}
\newcommand{\Eq}[1]{Eq.~(\ref{#1})}
\newcommand{\Ref}[1]{Ref.~\cite{#1}}
\newcommand{\Fig}[1]{Fig.~\ref{#1}}
\newcommand{\Sec}[1]{Sec.~\ref{#1}}
\newcommand{\Thm}[1]{Theorem~\ref{#1}}
\newcommand{\Lem}[1]{Lemma~\ref{#1}}
\newcommand{\App}[1]{Appendix~\ref{#1}}
\newcommand{\footremember}[2]{%
    \footnote{#2}
    \newcounter{#1}
    \setcounter{#1}{\value{footnote}}%
}
\newcommand{\footrecall}[1]{%
    \footnotemark[\value{#1}]%

} \title{Entanglement subvolume law for 2D frustration-free spin systems}
\author{%
  Anurag Anshu \footremember{iqc}{Institute for Quantum Computing, University of Waterloo, Canada} \footremember{co}{Department of Combinatorics and Optimization, University of Waterloo, Canada}\footremember{pi}{Perimeter Institute for Theoretical Physics, Canada}%
  \and Itai Arad\footremember{technion}{Physics Department, Technion, Israel}%
  \and David Gosset \footrecall{iqc} \footrecall{co}%
  }
\date{}
\begin{document}
\maketitle

\begin{abstract}
  Let $H$ be a frustration-free Hamiltonian describing a 2D grid of
  qudits with local interactions, a unique ground state, and local
  spectral gap lower bounded by a positive constant. For any
  bipartition defined by a vertical cut of length $L$ running from
  top to bottom of the grid, we prove that the corresponding
  entanglement entropy of the ground state of $H$ is upper bounded
  by $\tilde{O}(L^{5/3})$. For the special case of a 1D chain, our
  result provides a new area law which improves upon prior work, in
  terms of the scaling with qudit dimension and spectral gap.  In
  addition, for any bipartition of the grid into a rectangular
  region $A$ and its complement, we show that the entanglement
  entropy is upper bounded as $\tilde{O}(|\partial A|^{5/3})$ where
  $\partial A$ is the boundary of $A$. This represents the first
  subvolume bound on entanglement in frustration-free 2D systems.
  In contrast with previous work, our bounds depend on the local
  (rather than global) spectral gap of the Hamiltonian. We prove our results using a known method which bounds the
  entanglement entropy of the ground state in terms of certain
  properties of an approximate ground state projector (AGSP). To
  this end, we construct a new AGSP which is based on a robust
  polynomial approximation of the AND function and we show that it
  achieves an improved trade-off between approximation error and
  entanglement. 

\end{abstract}

%%%%%%%%%%%%%%%%%%%%%%%%%%%%%%%%%%%%%%%%%%%%%%%%%%%%%%%%%%%%%%%%%%%
\section{Introduction}

A regularly arranged collection of locally interacting spins may
hardly seem an accurate representation of the sea of molecules that
constitute a typical material. But the study of spin systems has
provided key insights into widely observed phenomena such as
ferromagnetism, superconductivity, superfluidity, and topological
order. Such insights have contributed to the technological progress
seen in materials science, electronics, and related areas.

Several universal features of quantum spin systems have been
discovered based on natural physical assumptions such as locality of
the interactions and/or the presence of a spectral gap in the
thermodynamic limit. Lieb and Robinson \cite{LiebR72} used locality
to conclude that, to a very good approximation, the support of local
observables expands at a constant rate as the system evolves in
time. For spin systems with a unique ground state and a spectral
gap, it has been shown that correlation functions decay
exponentially with distance \cite{Hastings04, HastingsK06,
NachtergaeleS06, aharonov2011detectability, GossetH15}. Hastings
\cite{Hastings07} proved that unique gapped ground states of
one-dimensional spin systems have bounded mutual information across
any bipartition of the lattice. This is the so-called \textit{area
law} for 1D quantum spin systems.

More generally, a quantum state of a system of qudits on a lattice
is said to obey an area law if for any bipartition of the lattice
into a region $A$ and its complement $\overline{A}$, the mutual
information between the parts scales as the size $|\partial A|$ of
the boundary of the bipartition. When the quantum state is pure, the
mutual information is twice the entropy of either bipartition, also
known as the \textit{entanglement entropy}. A quantum state
exhibiting an area law is markedly different from a random pure
quantum state, as the latter possesses an entanglement entropy that
scales as the volume of the smaller part. 

The study of the relationship between entanglement and geometry has
a long history \cite{eisert2008area}. Inspired by the work of
Bekenstein \cite{Bekenstein73} and Hawking
\cite{hawking1975particle}, which relates the entropy of a black
hole to its surface area, it was shown that the ground state of
several models of quantum field theories obey (or nearly obey) area
laws~\cite{bombelli86, srednicki1993, callan1994geometric,
KabatS94,HolzheyLW94}.  Later, a similar phenomenon was shown to
occur in the ground states of several systems with nearest-neighbor
interactions in one dimension \cite{audenaert2002entanglement,
latorre2003ground,VidalLRK03}, away from critical points where the
Hamiltonian becomes gapless and the entanglement may diverge. This
led to \textit{area law conjecture}, which states that the ground
states of gapped spin systems on a lattice of any dimension obey an
area law. 

This conjecture has led to a rich body of work connecting quantum
information science, condensed matter physics, and computer science.
Hastings' proof \cite{Hastings07} itself uses powerful
information-theoretic arguments inspired by the monogamy of quantum
entanglement \cite{Terhal01}. Brand{\~a}o and Horodecki
\cite{BrandaoH13} use ideas from the quantum communication task of
quantum state merging \cite{HorodeckiOW05} to obtain an area law for
any state satisfying an exponential decay of correlations in 1D. A
series of works \cite{AradLV12, AradKLV13} have obtained exponential
improvements to Hastings' entanglement upper bound, using the
\textit{polynomial method}, a widely used technique in theoretical
computer science and optimization theory.  These works have also led
to a rigorous proof that gapped ground states of 1D systems have an
efficient classical representation \cite{Hastings07, AradLV12,
AradKLV13} as Matrix Product States \cite{klumper1991equivalence,
klumper1992groundstate}, which explains the success of the DMRG
algorithm \cite{White93} in the numerical study of quantum spin
systems. The techniques developed in \cite{AradLV12, AradKLV13} have
also been used in the first provably efficient classical algorithm
for computing ground states of gapped 1D spin systems
\cite{LandauVV13, AradLVV17}. 

Despite these applications and extensions of Hastings' 1D result,
the area law conjecture for two (or higher) dimensional lattices has
thus far resisted all attacks.  Some works have described additional
physical assumptions which are sufficient to guarantee an area law.
For example, \Ref{WolfVHC08} proves an area law for the thermal
state of any spin system at sufficiently high temperatures.
Unfortunately, this bound diverges as the temperature approaches
zero, and hence does not provide any information about the ground
state. In \Ref{PhysRevA.80.052104} the area law was proved under the
assumption that the number of eigenstates with vanishing energy
density does not grow exponentially with the volume. In
\Ref{michalakis2012stability} it was proved under the assumption
that the Hamiltonian can be adiabatically connected to another
Hamiltonian in which the area-law holds, along a path of gapped
Hamiltonians. In \Ref{PhysRevLett.113.197204} the author assumed
that the ground state can be gradually built from a sequence of
ground states of smaller and smaller gapped Hamiltonians, and that
these ground states are, in some sense, very close to each other. In
\Ref{PhysRevB.92.115134} the area law was established under the
assumption of an exponential decay of the specific heat capacity of
the system with respect to the inverse temperature. A counterpart to the above positive results is provided by
\Ref{aharonov2014local}, which shows that a ``generalized" area law
for local Hamiltonian systems on arbitrary graphs is false.

In this work we establish a subvolume bound on the entanglement
entropy of the unique ground state of a frustration-free local
Hamiltonian in two dimensions with a local spectral gap. To state
our result, let us introduce some terminology. For the sake of being
concrete, we shall focus on a rather specific 2D setup. However,
many of the specific settings we assume can be easily generalized.

We consider a system of qudits of local dimension $d$ located at the
vertices of an $n\times L$ grid where $n\geq 2$ and $L\geq 1$, see
\Fig{fig:cut}.  We index qudits by their coordinates $(i,j)\in
[n]\times [L]$, where $[q]$ is the set of integers $\{1,2,\ldots q\}$. We define a
local Hamiltonian $H$ which acts on this system of qudits as a sum
of local projectors
\begin{align}
\label{def:2D-H}
  H = \sum_{i=1}^{n-1}\sum_{j=1}^{\max\{L-1,1\}} P_{ij} , 
\end{align}
where for $L\geq 2$, $P_{ij}$ is a projector ($P_{ij}^2=P_{ij}$)
that acts nontrivially only on the four qudits $\{(i,j),
(i,j+1),(i+1,j), (i+1,j+1)\}$. For the special case $L=1$,
\Eq{def:2D-H} describes a 1D chain $H=\sum_{i=1}^{n-1} P_{i1}$,
where $P_{i1}$ acts nontrivially only on qudits $i,i+1$. 

More generally, it will be convenient to view the system of qudits
as a 1D chain of ``columns". In particular, we define the
\textit{$i$th column} to be the set of qudits $\{(i,j): j\in [L]\}$,
and write the Hamiltonian as 
\begin{align}
\label{eq:twoloc}
  H=\sum_{i=1}^{n-1} H_i  \qquad \qquad  
    H_i\EqDef \sum_{j=1}^{\max\{L-1,1\}} P_{ij} ,
\end{align}
where the column Hamiltonian $H_i$ is the sum of all local
projectors which act nontrivially between qudits in columns $i$ and
$i+1$.

We further assume the Hamiltonian is frustration-free and has a
unique ground state $\ket{\Omega}$. Frustration-free means that the
ground state of the full Hamiltonian is also a ground state of each
of the individual local terms in the Hamiltonian, i.e., 
$P_{ij}\ket{\Omega}=0$ for all $i,j$.  This sort of Hamiltonian can
be viewed as a satisfiable instance of a \textit{quantum constraint
satisfaction} problem --- each local term is a constraint and the
ground state is a satisfying assignment \cite{bravyi2011efficient}.
Frustration-free quantum spin systems are widely studied in the
physics and quantum information literature (see e.g.,
Refs.~\cite{AffleckKLT87, HoussamLLNY19,PerezVWC08, SchuchCD10,
GottsteinW95,AlcarazSW95, KomaN97, Kitaev03, LevinW05}). 

The entanglement entropy of the ground state $|\Omega\rangle$ with
respect to some bipartition $[n]\times [L]=A\cup \overline{A}$ of the qudits is
\begin{align}
  S(\rho_A)\EqDef-\Tr(\rho_A \log(\rho_A)), 
    \qquad \text{ where} 
    \qquad \rho_A\EqDef\Tr_{\overline{A}}\ketbra{\Omega}.
\label{eq:entang}
\end{align}

Without any further assumptions, any nontrivial upper bound on
$S(\rho_A)$ must depend in some way on the spectral properties of
the Hamiltonian $H$. Indeed, ground states of gapless
frustration-free Hamiltonians in 1D can have very high entanglement
between the two halves of the chain~\cite{BravyiCMNS12,
MovassaghS16}, as large as the maximal linear scaling with chain
length~\cite{zhang2017entropy}.  The 1D area laws established in
Refs.~\cite{Hastings07, AradLV12, AradKLV13} depend on the (global) spectral gap of $H$, which for a
frustration-free Hamiltonian \Eq{eq:twoloc} is its smallest nonzero
eigenvalue. In contrast, the bounds we establish here depends on the
\textit{local spectral gap} $\gamma$ of $H$, equal to the minimum
spectral gap of any Hamiltonian describing a contiguous patch of the
system. In particular, for any contiguous subset $S\subseteq
[n]\times [L]$, define $\gamma(S)$ to be the smallest nonzero
eigenvalue of $\sum_{(i,j)\in S} P_{ij}$, and define
\begin{align*}
  \gamma=\min\{\min_{S} \text{$\gamma(S)$}, 1\}.
\end{align*}
Note that $0<\gamma \leq 1$.  It is slightly irksome that our results depend on the local rather
than the global spectral gap. The relationship between these two
quantities has been studied in Refs.~\cite{Knabe88, gosset2016local,
LemmM18, Lemm19}, see Section 2 of Ref. \cite{Lemm19} for a review. 
While in principle it is possible that the local gap is much smaller
than the global gap (potentially in exotic examples constructed in
\cite{CubittPW15, BauschCLP19}), we do not expect this to occur for
physically realistic systems. 

Our first result is a bound on the entanglement entropy of the
ground state with respect to a ``vertical cut" separating columns
$A=\{1,2,..., c\}$ from $\overline{A}=\{c+1,c+2,\ldots, n\}$ for
some $c\in[n-1]$. We will denote this vertical cut as $(c,c+1)$.
\begin{theorem}[\textbf{Subvolume scaling for a vertical cut}]
\label{thm:subv_cut}
  Let $\ket{\Omega}$ be the unique ground state of a
  frustration-free Hamiltonian \Eq{eq:twoloc} on an $n\times
  L$ grid of qudits with local dimension $d$. Its entanglement
  entropy across a vertical cut $(c,c+1)$ is at most
  \begin{align*}
    S(\rho_A)\leq \frac{CL^{5/3}}{\gamma^{5/6}} 
      \log^{7/3}(dL\gamma^{-1}).
  \end{align*}
where $C>0$ is a universal constant.
\end{theorem}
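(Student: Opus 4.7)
The plan is to apply the approximate ground state projector (AGSP) framework: construct an operator $K$ with $K\ket{\Omega} = \ket{\Omega}$ and small operator norm $\Delta$ on the orthogonal complement of $\ket{\Omega}$, whose Schmidt rank $D$ across the vertical cut is also controlled. The standard implication, which I will take as a black box, is that if $D\cdot \Delta \leq 1/2$ then $S(\rho_A) = \mathcal{O}(\log D)$. Given the hint in the abstract, the task is to build an AGSP from a \emph{robust polynomial approximation of the $\mathrm{AND}$ function} and verify that the resulting trade-off produces the claimed $L^{5/3}/\gamma^{5/6}$ scaling.

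As a baseline, a Chebyshev AGSP of the form $T_t(I - H/\|H\|)$ for the global Hamiltonian $H$ needs degree $t = \mathcal{O}(\sqrt{\log(1/\Delta)/\gamma})$ to achieve shrinking $\Delta$; since each factor of $H$ contributes $\mathcal{O}(L)$ local terms crossing the cut, this yields $\log D = \mathcal{O}(L t)$, and balancing $D\Delta \leq 1/2$ gives only $\tilde{\mathcal{O}}(L^2/\sqrt{\gamma})$, essentially a volume bound. The waste is that the global polynomial treats all columns uniformly. To do better, I would fix a width $w$ (to be optimized), focus on the band of columns within distance $w/2$ of the cut, and define for each such column $i$ an approximate column projector $\tilde{\Pi}_i$ built from a Chebyshev polynomial in $H_i$; since $H_i$ has spectral gap at least $\gamma$ above its kernel by the local-gap hypothesis, degree $t$ suffices to make $\|\tilde{\Pi}_i - \Pi_i\|$ exponentially small in $t\sqrt{\gamma}$. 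Because $\ket{\Omega}$ lies in the joint kernel of all $H_i$, it is preserved by each $\Pi_i$, so a polynomial in the $\tilde{\Pi}_i$ that behaves like the logical $\mathrm{AND}$ of indicator functions will nearly preserve $\ket{\Omega}$ while suppressing any component with energy in some column. Sherstov-type robust approximations of $\mathrm{AND}$ on $w$ bits achieve this with degree $\tilde{\mathcal{O}}(\sqrt{w})$ and tolerate bounded perturbations in each argument, so that operator-valued inputs $\tilde{\Pi}_i$ in place of true $\Pi_i$ are acceptable. Composition yields an AGSP $K = q(\tilde{\Pi}_1, \ldots, \tilde{\Pi}_w)$ whose effective degree in the underlying Hamiltonian terms is $\tilde{\mathcal{O}}(t\sqrt{w})$ rather than the $\mathcal{O}(t w)$ one would get from a literal product.

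The main technical obstacle is the Schmidt rank analysis. Only factors of $\tilde{\Pi}_c$ and $\tilde{\Pi}_{c+1}$ straddle the cut, and each such factor, being a degree-$t$ polynomial in a column Hamiltonian, multiplies the Schmidt rank by at most $d^{\mathcal{O}(Lt)}$. A monomial of $q$ has total degree $\tilde{\mathcal{O}}(\sqrt{w})$, but only some of its factors land on the two cut columns; a careful combinatorial accounting is needed to show that $\log D$ scales only as $\tilde{\mathcal{O}}(L\sqrt{w/\gamma})$ rather than the pessimistic $\mathcal{O}(Lw/\sqrt{\gamma})$. Given such a rank bound, optimizing over $w$ (and the inner Chebyshev degree $t$) subject to $D\Delta \leq 1/2$ should land at $w \sim L^{4/3}/\gamma^{2/3}$ and yield $\log D = \tilde{\mathcal{O}}(L^{5/3}/\gamma^{5/6})$. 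I anticipate two main challenges: first, the non-commuting Schmidt-rank accounting just described, in particular avoiding an avoidable blow-up when the robust $\mathrm{AND}$ polynomial is expanded into monomials of non-commuting operators; second, verifying that Sherstov's robustness window is wide enough to absorb the accumulated operator error from the inner Chebyshev approximations, so that $K$ indeed satisfies $K\ket{\Omega} = \ket{\Omega}$ up to negligible error rather than only approximately preserving some larger subspace.
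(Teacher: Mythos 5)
Your plan has the right high-level outline (AGSP framework plus robust AND), but several ingredients are missing or wrong in ways that would make the construction fail. First, an operator $K = q(\tilde\Pi_1, \ldots, \tilde\Pi_w)$ supported only on a band of $w$ columns near the cut cannot shrink a state $\ket{\psi}\in G_\perp$ whose excitations lie entirely \emph{far} from the cut: such a $\ket\psi$ is in the kernel of every $H_i$ in the band, so every $\tilde\Pi_i$ (approximately) fixes it and $K\ket\psi \approx \ket\psi$, giving $\Delta \approx 1$ no matter how good the $\mathrm{AND}$ approximation is. The paper handles the far region by multiplying the robust $\mathrm{AND}$ on the near-cut projectors by the rest of the coarse-grained detectability-lemma operator $Q_{\mathrm{rest}}$ (\Eq{def:Kmtl}), whose shrinking is controlled by \Lem{lem:DL}. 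Second, the degree estimate $\tilde{O}(\sqrt w)$ for the robust $\mathrm{AND}$ is not the right one: Sherstov's construction (\Thm{corrobust}) achieves error $e^{-m}$ at degree $\Theta(m)$, and that exponentially small error is essential to the shrinking factor; at degree $\tilde{O}(\sqrt w)$ only constant error is available, and since your proposal omits the powering step (the operator in \Eq{def:Kmtl} is raised to an $\ell$-th power) there is no mechanism to amplify a constant shrinking into $D\cdot\Delta\le 1/2$. Third, adjacent $H_i$'s have overlapping support, so the $\tilde\Pi_i$ do not commute, and the $\mathrm{AND}$-of-operator-arguments cannot be analyzed eigenvalue-by-eigenvalue in the $\pm 1/20$ robustness window. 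The paper coarse-grains into $m$ commuting projectors $Q_1,\ldots,Q_m$ onto ground spaces of disjointly-supported $4t$-column Hamiltonians; this coarse-graining is also what produces the $e^{-\Omega(t\sqrt\gamma)}$ detectability-lemma rate, whereas a single layer of non-coarse-grained projectors shrinks $G_\perp$ only by a factor $\approx 1/(1+\gamma/64)$. (You also over-engineer the inner approximations: they only need accuracy $\pm 1/20$, not exponential accuracy, which is why the paper takes a fixed step-polynomial degree $f=\lceil 4\sqrt{tL/\gamma}\rceil$.)

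The Schmidt-rank accounting you flag as the hard step is indeed the crux, and the missing idea is the interpolation trick of \Sec{sec:schmidt}. One attaches a formal variable $Z_j$ to each local term $H_j$, expands $K(\vec Z)$ as an operator-valued multinomial $\sum_{\vec\beta}\op(\vec\beta)\prod_j Z_j^{\beta_j}$, finds a column $k$ on which the degree $\beta_k$ is at most the average $rf\ell/|\Ind|$, and then uses a nonvanishing-determinant argument (Claim~\ref{clm:invertmat}) to express $\op(\vec\beta')$ as a short linear combination of evaluations $K_k^{\le R}(\vec X^{(\alpha)})$, each of which has small Schmidt rank across the cut $(k,k+1)$. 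This replaces a naive count of non-commuting monomials, which is far too large, by a count of degree tuples $\vec\beta$, which is bounded by a modest binomial coefficient. Without something like this, the monomial expansion of the robust $\mathrm{AND}$ blows up the rank bound by far more than the $\tilde O(L\sqrt{w/\gamma})$ you are hoping for.
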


The above result can be viewed as simultaneously generalizing and
improving upon the previous state-of-the art area law in
1D~\cite{AradKLV13}.  Indeed, taking a grid of dimensions $n\times
1$ we recover the 1D case and Theorem \ref{thm:subv_cut} provides
the expected $O(1)$ bound on entanglement entropy for (locally)
gapped 1D systems, for which $d=O(1)$ and $\gamma=\Omega(1)$.
Looking more closely we see that \Thm{thm:subv_cut} improves upon
\Ref{AradKLV13} both in terms of the dependence on the local
dimension $d$, from $\log^3(d)$ to $\log^{7/3}(d)$, and in terms of
the dependence on the spectral gap $\gamma$, from $\gamma^{-1}$ to
$\gamma^{-{5/6}}$ (here ignoring a polylogarithmic factor as well as
the difference between local and global spectral gaps). 
This is a step closer
to the conjectured scaling of $\approx \frac{1}{\sqrt{\gamma}}$ for
1D frustration-free systems \cite{GossetH15, gosset2016local} which
coincides with the optimal upper bound on correlation length
\cite{GossetH15}.

The proof of \Thm{thm:subv_cut} is the main technical content of
this paper. The proof itself is essentially one dimensional in the
sense that it is entirely based on the expression in
\Eq{eq:twoloc} for the Hamiltonian as a 1D chain of columns.
With only a small modification we are able to establish a similar
bound for any bipartition of the 2D grid corresponding to a
rectangular region and its complement. The bound is obtained by
viewing the Hamiltonian as a 1D chain of concentric rectangular
bands and using almost exactly the same proof, see
\Fig{fig:transformation} (c). 

\begin{theorem}[\textbf{Subvolume scaling for a rectangular region}]
  Let $\ket{\Omega}$ be the unique ground state of a
  frustration-free Hamiltonian \Eq{eq:twoloc} on an $n\times L$ grid
  of qudits with local dimension $d$. Its entanglement entropy with
  respect to a bipartition of the qudits into a rectangular region
  $A$ and its complement $\overline{A}$ is given by
  \begin{align}
    S(\rho_A) \leq \frac{C|\partial A|^{5/3}}{\gamma^{5/6}} 
      \log^{7/3}(d|\partial A|\gamma^{-1}).
\label{eq:sboundrec}
  \end{align}
  where $C>0$ is a universal constant.
\label{thm:subr_cut}
\end{theorem}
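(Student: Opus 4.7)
The plan is to reduce the rectangular-region setting to the 1D column picture underlying \Thm{thm:subv_cut}, by viewing the grid as a chain of concentric rectangular bands, as suggested by \Fig{fig:transformation}(c).

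\emph{Step 1: Band decomposition.} Set $B_0 \EqDef A$, and for each $k \geq 1$ let $B_k$ be the set of qudits in $\overline{A}$ at $L_\infty$-distance exactly $k$ from $A$. These bands partition the grid, and because the interactions $P_{ij}$ are supported on nearest-neighbor plaquettes, each $P_{ij}$ is supported either on a single band or on exactly two consecutive bands $B_k \cup B_{k+1}$. We can therefore rewrite the Hamiltonian as
\begin{align*}
  H = \sum_k \tilde H_k, \qquad
  \tilde H_k \EqDef \sum_{\mathrm{supp}(P_{ij}) \subseteq B_k \cup B_{k+1}} P_{ij},
\end{align*}
in direct analogy with \Eq{eq:twoloc}, where now the bands play the role of columns in a 1D chain.

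\emph{Step 2: Apply the AGSP argument from \Thm{thm:subv_cut}.} Treating each band as a single site in the resulting 1D chain, the target bipartition becomes the cut between $B_0$ and $B_1$. I would construct the same AGSP used in the proof of \Thm{thm:subv_cut} --- the robust polynomial approximation of the AND function applied to the band Hamiltonians $\{\tilde H_k\}$ --- and feed it into the standard entropy-from-AGSP machinery. The two quantities controlling the final entropy bound are the Schmidt rank of the AGSP across the cut and its shrinking factor on the orthogonal complement of $\ket{\Omega}$. The Schmidt rank is governed by the number of elementary projectors straddling the $(B_0, B_1)$ cut, which equals $|\partial A|$, playing the role of $L$ in the original argument. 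The shrinking factor depends on the local spectral gap.

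\emph{Step 3: Local gap and conclusion.} The local gap $\gamma$ in the band picture is at least as large as in the grid picture, since every contiguous subchain $B_k \cup \cdots \cup B_{k'}$ corresponds to a connected annular subregion of the grid, so the same bound $\gamma$ applies. Substituting $L \mapsto |\partial A|$ in the final estimate of \Thm{thm:subv_cut} then produces the claimed bound \Eq{eq:sboundrec}.

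\emph{Main obstacle.} The only nontrivial point is that, unlike in \Thm{thm:subv_cut}, the bands $B_k$ grow linearly in $k$ rather than having a uniform length $L$. One must therefore check that the proof of \Thm{thm:subv_cut} is genuinely local at the cut in question: specifically, that both the Schmidt-rank bound on the AGSP and the polynomial degree needed to achieve the requisite shrinking factor are controlled by the cut size $|\partial A|$, and not by the sizes of bands far from the cut. I expect this to follow from the fact that the AGSP's Schmidt rank factorizes over the cut and the polynomial approximation of AND is only sensitive to how many column Hamiltonians interact across the cut. Once this localization is verified, the rest of the argument transcribes directly from the proof of \Thm{thm:subv_cut} with $L$ replaced by $|\partial A|$.
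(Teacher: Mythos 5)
Your high-level plan --- reduce the rectangular bipartition to a 1D chain of concentric rectangular bands --- is the same strategy the paper uses, and you correctly flag the growing band sizes as the central obstacle. However, your specific band decomposition is flawed, and your heuristic for why the obstacle is surmountable is incorrect.

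First, the decomposition $B_0 \EqDef A$ puts the target cut at the left \emph{endpoint} of the resulting 1D chain. The AGSP $K(m,t,\ell)$ requires placing $m$ coarse-grained projectors $Q_1,\ldots,Q_m$ symmetrically around the cut, with the cut in the middle of their combined support (a region spanning roughly $6mt$ bands, about half of them on each side). With $A$ collapsed into a single band there is no room on the $A$-side to place any of them, so $K(m,t,\ell)$ is simply not defined for this cut. The paper instead defines the bands as concentric rectangles starting from a small band at the \emph{center} of $R$ and growing outward, so the cut of interest sits at band index $c \approx B/2$ (with $B$ the short dimension of $R$) and has bands on both sides. It also pads the grid with ancilla qudits (extending $H$ by trivially-satisfiable terms) to make the lattice large enough and to center $R$, so that the AGSP is well-defined. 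You do not perform either step.

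Second, the localization claim in your ``Main obstacle'' paragraph --- that the Schmidt rank and the polynomial degree are sensitive only to how many projectors straddle the cut --- is not true. The polynomial degree $f$ in the definition of $\widehat{Q}_j$ (see \Eq{def:Qhat}) scales as $\sqrt{\|h_j\|}$, where $\|h_j\|$ is bounded by the number of local terms in the support of $Q_j$; and $Q_j$ spans $4t$ bands, many of which lie far from the cut and hence (in the concentric-band picture) are substantially larger than $|\partial A|$. Similarly, the Schmidt rank bound in \Thm{thm:sr} involves factors such as $d^{6mtL}$ coming from \emph{all} bands between the cut of interest and the band $k$ chosen in Corollary~\ref{cor:SR2}, which again live over a region of width $O(mt)$. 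The paper handles this by noting that each successive band is at most $8$ qudits larger than the previous one, so that every band in the relevant range has size at most $L_0 + 24mt$ where $L_0 = 2(A+B)$, and then splitting into two cases: if $24mt \leq L'/2$ with $L' = 4(A+B)$ then all relevant band sizes are bounded by $L'$ and the proof of \Thm{thm:subv_cut} goes through verbatim with $L \mapsto L'$; if not, the parameters $m,t$ are so large that the trivial volume bound $|\partial A|^2 \log d$ is already within the claimed $\tilde O(|\partial A|^{5/3})$ budget. To turn your proposal into a correct proof you would need to (i) switch to the centered concentric-band decomposition (with ancilla padding), and (ii) carry out this quantitative band-size analysis and case split rather than relying on the localization heuristic.
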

The bound $\tilde{O}(|\partial A|^{5/3})$ on the right-hand side
represents an improvement over the trivial volume law scaling
of $|\partial A|^2$, and gives some movement towards the elusive
area law conjecture in two dimensions. 

While Theorems~\ref{thm:subr_cut} and~\ref{thm:subv_cut} are stated
in terms of the usual entanglement entropy of the ground state, we
are able to obtain similar subvolume bounds on other so-called
R{\'e}nyi entanglement entropies. In Appendix \ref{app:huangbound}
we show how this works for the R{\'e}nyi entanglement entropy of
order $1/2$.  An interesting consequence is then obtained following
an argument from the recent works in Refs.~\cite{Huang19,
DalzellB19}. Theorem~3 of \Ref{Huang19} shows that, if some
R{\'e}nyi entanglement entropy (of order less than one) of a quantum
state $\ket{\psi}$ on a 2D lattice satisfies an area law for any
bipartition of the lattice into a square region and its complement,
then there is a projected entangled pair (PEPS) state of bond
dimension $e^{O(\frac{1}{\delta})}$ which reproduces expectation
values of all local observables in the state $\psi$ up to an
additive error $\delta$. Following Huang's proof technique and using
our subvolume law one can reach almost the same conclusion --- but
with a weaker upper bound $e^{\tilde{O}(\frac{1}{\delta^5})}$ on the
bond dimension, see Appendix~\ref{app:huangbound} for details. 

Finally, we remark that it may be possible to extend our results to
degenerate ground states, using the techniques developed in
\Ref{AradLVV17}, although we do not pursue this direction here. 

To prove \Thm{thm:subv_cut} we use a method described in
\Ref{AradKLV13} which is based on the construction of a so-called
Approximate Ground State Projector (AGSP). In this context an AGSP
is an operator $K$ which fixes the ground state and its orthogonal
complement, i.e., 
\begin{align*}
  K\ket{\Omega} = K^{\dagger}\ket{\Omega}=\ket{\Omega} .
\end{align*}
The AGSP has two important parameters $D$ and $\Delta$ which are
defined with respect to a given bipartition of the qudits. The
parameter $D$ is an upper bound on the Schmidt rank of $K$ across
the bipartition. Recall that the Schmidt rank of an operator $K$
acting on two registers $A$ and $B$ is the smallest integer $R$ such
that $K= \sum_{s=1}^R K^s_A\otimes K^s_B$ for some operators
$\{K^s_A\}_{s=1}^R$ and $\{K^s_B\}_{s=1}^R$ that are supported only
on the registers $A$ and $B$, respectively. The parameter $\Delta$
is any number such that
\begin{align*}
  \norm{K\ket{\psi}}^2\leq \Delta 
    \quad \text{ for all } \quad \ket{\psi}\in G_{\perp},
\end{align*}
where $G_\perp$ is the subspace of $nL$-qudit states orthogonal to
the ground state $\ket{\Omega}$. In other words, $\Delta$ is a
\textit{shrinking factor} which measures the shrinkage of the space
orthogonal to $|\Omega\rangle$ when $K$ is applied.  An AGSP with
parameters $D$ and $\Delta$ is called a $(D,\Delta)$-AGSP. The
following theorem relates these AGSP parameters to a bound on the
entanglement entropy across the cut.
\begin{theorem}[\Ref{AradLV12}]
\label{thm:AGSParealaw}
  If there exists a $(D,\Delta)$-AGSP such that
  $D\cdot \Delta\le \frac{1}{2}$, then the entanglement entropy of
  $\ket{\Omega}$ across the cut is upper bounded by
  $10\cdot\log(D)$.
\end{theorem}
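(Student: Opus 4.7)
The plan is to use the AGSP to construct, for each $\ell \geq 1$, a low-Schmidt-rank approximation to $|\Omega\rangle$, then convert the resulting decay of Schmidt coefficients into an entropy bound.

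The first step is to show the existence of a product state (across the given cut) that has non-trivial overlap with the ground state. Let $\mu \EqDef \max |\langle \Omega | \phi_A \phi_B\rangle|^2$, where the maximum is over normalized product states $|\phi_A\rangle \otimes |\phi_B\rangle$ across the bipartition. This maximum is attained by compactness; call the maximizer $|\phi\rangle$. I would then show that $\mu \geq 1/(2D)$ by a bootstrapping argument. Apply the AGSP to $|\phi\rangle$ and normalize: $|\psi\rangle \EqDef K|\phi\rangle/\|K|\phi\rangle\|$. This state has Schmidt rank at most $D$ and, using $K^\dagger|\Omega\rangle = |\Omega\rangle$ together with the decomposition $|\phi\rangle = \sqrt{\mu}|\Omega\rangle + \sqrt{1-\mu}|\phi^\perp\rangle$, one sees that $|\langle\Omega|\psi\rangle|^2 \geq \mu/(\mu + (1-\mu)\Delta)$. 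On the other hand, expanding $|\psi\rangle = \sum_{i=1}^D \sqrt{p_i}|\alpha_i\rangle|\beta_i\rangle$ in Schmidt form and applying Cauchy--Schwarz yields $|\langle\Omega|\psi\rangle|^2 \leq D \max_i |\langle\Omega|\alpha_i\beta_i\rangle|^2 \leq D\mu$. Combining these and using $D\Delta \leq 1/2$ gives $\mu \geq 1/(2D)$.

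Next I would iterate. Since $K|\Omega\rangle=K^\dagger|\Omega\rangle=|\Omega\rangle$, the space $G_\perp$ is $K$-invariant, so $K^\ell$ is a $(D^\ell,\Delta^\ell)$-AGSP. Applying it to $|\phi\rangle$ and normalizing produces $|\psi_\ell\rangle$ of Schmidt rank at most $D^\ell$ with $|\langle\Omega|\psi_\ell\rangle|^2 \geq \mu/(\mu + (1-\mu)\Delta^\ell)$. Let $\lambda_1 \geq \lambda_2 \geq \ldots$ be the Schmidt coefficients of $|\Omega\rangle$ and $p_k \EqDef \sum_{i > k}\lambda_i^2$ the tail sum. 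Since the best rank-$r$ approximation error of $|\Omega\rangle$ equals $p_r$, we obtain
\begin{align*}
  p_{D^\ell} \leq 1 - \frac{\mu}{\mu+(1-\mu)\Delta^\ell}
    \leq \frac{\Delta^\ell}{\mu} \leq 2D\Delta^\ell.
\end{align*}

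Finally, I would convert this tail bound into an entropy bound. Rather than attack the von Neumann entropy directly, it is cleaner to use the R\'enyi-$1/2$ entropy $H_{1/2} \EqDef 2\log(\sum_k \lambda_k)$, which dominates the von Neumann entropy. Chop the index set into dyadic windows $I_\ell \EqDef \{k : D^\ell < k \leq D^{\ell+1}\}$, each of size at most $D^{\ell+1}$. By Cauchy--Schwarz,
\begin{align*}
  \sum_{k \in I_\ell} \lambda_k
    \leq \sqrt{|I_\ell|}\sqrt{\sum_{k \in I_\ell} \lambda_k^2}
    \leq \sqrt{D^{\ell+1} \cdot 2D\Delta^\ell}
    = \sqrt{2}\, D\, (D\Delta)^{\ell/2}.
\end{align*}
Summing over $\ell \geq 0$ and using $D\Delta \leq 1/2$, the geometric series converges and yields $\sum_k \lambda_k \leq \lambda_1 + O(D) = O(D)$. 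Hence $S(\rho_A) \leq H_{1/2} \leq 2\log\br{O(D)} = O(\log D)$, with the numerical constant $10$ coming out of a careful accounting.

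The only step requiring any real delicacy is the final conversion from Schmidt-tail decay to entropy. A naive bookkeeping using only the inequality $-x \log x \leq x \log(1/x)$ on each window leaves a spurious factor of $D$ in the final bound; the key move is to pass through a R\'enyi entropy of order less than one, where a Cauchy--Schwarz argument on dyadic windows exploits the geometric decay of $p_{D^\ell}$ to produce a bound proportional to $\log D$ rather than $D \log D$.
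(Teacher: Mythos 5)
Your proof is correct and is essentially the standard argument from \Ref{AradLV12}, which the paper cites for this theorem rather than re-proving. The three-step structure you use---(i) the bootstrap showing that the best product-state overlap $\mu$ satisfies $\mu\geq 1/(2D)$, (ii) iterating $K^\ell$ to obtain the Schmidt-tail bound $p_{D^\ell}\leq 2D\Delta^\ell$, and (iii) converting this geometric tail decay to an entropy bound via the R\'enyi-$1/2$ entropy with Cauchy--Schwarz on dyadic windows---is exactly the ALV route, and step (iii) coincides with the calculation the paper itself carries out in Appendix~C (Claim~\ref{clm:renyi}) when establishing the R\'enyi version of the bound. One small bookkeeping remark: to cleanly land on the constant $10$ one needs $D\geq 2$ (so that $2\log(cD)\leq 10\log D$ for the small constant $c$ that emerges); the degenerate case $D=1$ is handled separately since then $K^\ell|\phi\rangle$ is always a product state and $|\Omega\rangle$ is its limit, forcing $S(\rho_A)=0$.
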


\Thm{thm:AGSParealaw} states that the existence of an AGSP with the
right parameters implies a bound on the entanglement entropy of the
ground state $\ket{\Omega}$ of our quantum spin system. Most of our
work in the remainder of the paper will be to establish bounds on
the parameters $D,\Delta$ of a certain AGSP. At a high level, the AGSP we construct in this paper is based on the
detectability lemma operator introduced in \Ref{AharonovALV08} and
its coarse-grained version used in Refs.~\cite{AradLV12, AAV16}. We
are able to improve upon its performance in terms of the parameters
$D$ and $\Delta$ by modifying the construction using certain
polynomial approximations.

The remainder of the paper is organized as follows. In
\Sec{sec:preliminaries} we describe two families of polynomials.
These are building blocks used to construct the AGSP, which is our
main object of study, given in \Sec{sec:agsp}. We also include a
sketch of the proof of Theorem \ref{thm:subv_cut} in \Sec{sec:agsp}.
In Sections~\ref{sec:shrinking} and \ref{sec:schmidt} respectively
we upper bound the shrinking factor $\Delta$ and Schmidt rank $D$ of
this AGSP. In \Sec{sec:arealaw1D} we combine \Thm{thm:AGSParealaw}
and the bounds on $D$ and $\Delta$ to complete the proof of 
\Thm{thm:subv_cut}. Finally, in \Sec{sec:arealawrectangle} we
describe the minor modifications to the proof which result in
\Thm{thm:subr_cut}.

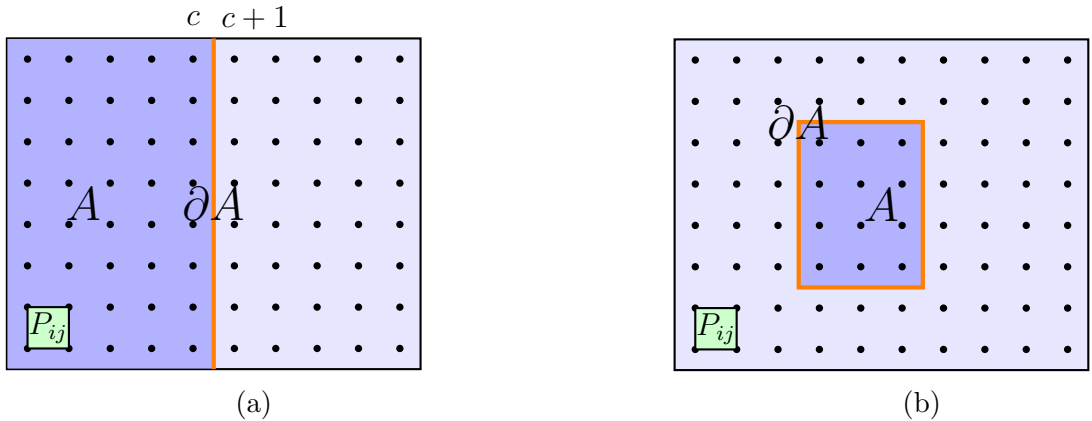
\begin{figure}
\centering
\begin{subfigure}[b]{0.4\textwidth}
\begin{tikzpicture}[xscale=0.55,yscale=0.55]

\draw [fill=blue!10!white, thick] (0.5,0.5) rectangle (10.5, 8.5);
\draw [fill=blue!30!white] (0.5, 0.5) rectangle (5.5,8.5);
\draw [orange, ultra thick] (5.5, 8.5)--(5.5,0.5);
\node at (5, 9){$c$};
\node at (6.5, 9){$c+1$};
\node[scale=1.5] at (2.35,4.5) {$A$};
\foreach \i in {1,...,10}
{
\foreach \j in {1,...,8}
   \draw (\i, \j) node[circle, fill=black, scale=0.25]{};
}

\draw [fill=white!80!green, thick] (1,1) rectangle (2,2);
\node at (1.5, 1.5) {$P_{ij}$};
\node[scale=1.5] at (5.5,4.5) {$\partial A$ };
\end{tikzpicture}
\caption{}
  \end{subfigure}
\hspace{2cm}
\begin{subfigure}[b]{0.4\textwidth}
\begin{tikzpicture}[xscale=0.55,yscale=0.55]

\draw [fill=blue!10!white, thick] (0.5,0.5) rectangle (10.5, 8.5);
\draw [fill=blue!30!white] (3.5, 2.5) rectangle (6.5,6.5);
\draw [orange, ultra thick] (3.5, 2.5) rectangle (6.5,6.5);
\node[scale=1.5] at (3.5,6.5) {$\partial A$};
\foreach \i in {1,...,10}
{
\foreach \j in {1,...,8}
   \draw (\i, \j) node[circle, fill=black, scale=0.25]{};
}

\draw [fill=white!80!green, thick] (1,1) rectangle (2,2);
\node at (1.5, 1.5) {$P_{ij}$};
\node[scale=1.5] at (5.5,4.5) {$A$ };
\end{tikzpicture}
  \caption{}
\end{subfigure}
\caption{Bipartitions considered in (a) Theorem \ref{thm:subv_cut}
and (b) Theorem \ref{thm:subr_cut} \label{fig:cut}}
\end{figure}

%%%%%%%%%%%%%%%%%%%%%%%%%%%%%%%%%%%%%%%%%%%%%%%%%%%%%%%%%%%%%%%%%%%
\section{Polynomials}
\label{sec:preliminaries}
%-----------------------------------------------------------------

Here we describe two families of polynomials, which are the building
blocks for our AGSP.

We first describe a univariate polynomial function of $x$ that takes
the value $1$ at $x=0$ but has a very small magnitude in some range
of $x$-values bounded away from $0$.  We shall colloquially refer to
this as a \textit{step polynomial}. It is well-known that Chebyshev
polynomials can be used for this purpose. Let $T_f$ be the
degree-$f$ Chebyshev polynomial of the first kind. For any positive
integer $f$ and $g\in (0,1)$, define
\begin{align}
  \mathrm{Step}_{f,g}(x) \EqDef 
    \frac{T_f\left(\frac{2(1-x)}{1-g}-1\right)}
      {T_f\left(\frac{2}{1-g}-1\right)}.
\label{eq:shiftedchebyshev}
\end{align}
The following fact is a special case of Lemma~4.1 from
\Ref{AradKLV13}\footnote{Fact~\ref{fact:polyD} is obtained from
Lemma 4.1 of \Ref{AradKLV13} by setting $\epsilon_0=0, \epsilon_1=g,
u=1, \ell=f$.}.
\begin{fact}[\textbf{Step polynomial} \cite{AradKLV13}]
\label{fact:polyD} 
  For every positive integer $f$ and $g\in (0,1)$, there exists a
  univariate polynomial $\mathrm{Step}_{f,g}:
  \mathbbm{R}\rightarrow\mathbbm{R}$ with real coefficients and degree
  $f$ such that $\mathrm{Step}_{f,g}(0)=1$ and
  \begin{align}
  \label{eq:step-poly}
   \left|\mathrm{Step}_{f,g}(x)\right| \leq 2\exp(-2f\sqrt{g})
      \qquad \text{for} \qquad g\leq x\leq 1.
  \end{align}
\end{fact}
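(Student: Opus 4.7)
The plan is to verify the three required properties directly from the explicit formula
$\mathrm{Step}_{f,g}(x) = T_f\left(\tfrac{2(1-x)}{1-g}-1\right)\big/T_f\left(\tfrac{2}{1-g}-1\right)$
and standard facts about the Chebyshev polynomial $T_f$. The polynomial degree is obviously $f$, and substituting $x=0$ into the numerator yields exactly the denominator, giving $\mathrm{Step}_{f,g}(0)=1$. So the real content is the decay bound on $[g,1]$.

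For the upper bound on the numerator, I would observe that the affine change of variables $y(x) = \tfrac{2(1-x)}{1-g}-1$ maps the interval $x\in [g,1]$ onto $y\in [-1,1]$. Since $|T_f(y)|\le 1$ on $[-1,1]$, the numerator has magnitude at most $1$ for every $x$ in the desired range. The bulk of the work is therefore a lower bound on the denominator $T_f(y_*)$, where $y_* = \tfrac{2}{1-g}-1 = \tfrac{1+g}{1-g}>1$.

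To lower bound $T_f(y_*)$, I would use the hyperbolic representation $T_f(\cosh\theta) = \cosh(f\theta)$, valid for $y_*=\cosh\theta$ with $\theta\ge 0$. Setting $\theta = \mathrm{arccosh}(y_*)$, a short simplification gives $y_*^2-1 = \tfrac{4g}{(1-g)^2}$, so
\begin{align*}
\theta \;=\; \log\!\left(y_* + \sqrt{y_*^2-1}\right)
\;=\; \log\!\left(\frac{1+\sqrt{g}}{1-\sqrt{g}}\right)
\;=\; 2\,\mathrm{arctanh}(\sqrt{g}).
\end{align*}
Using the elementary inequality $\mathrm{arctanh}(z)\ge z$ for $z\in[0,1)$ (immediate from its Taylor series), I get $\theta \ge 2\sqrt{g}$, and therefore
$T_f(y_*) = \cosh(f\theta) \ge \cosh(2f\sqrt{g}) \ge \tfrac{1}{2}\,e^{2f\sqrt{g}}$.

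Combining the numerator bound $|T_f(y(x))|\le 1$ with this denominator lower bound immediately yields $|\mathrm{Step}_{f,g}(x)| \le 2\,e^{-2f\sqrt{g}}$ for all $x\in[g,1]$, as required. The main (minor) obstacle is simply the algebraic simplification showing $\mathrm{arccosh}\!\left(\tfrac{1+g}{1-g}\right) = 2\,\mathrm{arctanh}(\sqrt{g})$; once that identity is in hand, the inequality $\mathrm{arctanh}(\sqrt{g})\ge\sqrt{g}$ produces the stated factor of $2\sqrt{g}$ in the exponent. No deeper machinery is needed, since this is a one-dimensional polynomial approximation claim built on standard Chebyshev extremal properties.
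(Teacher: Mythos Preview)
Your proof is correct and complete. The paper does not actually prove this fact; it states it as a citation to Lemma~4.1 of \cite{AradKLV13} (with the parameter specialization $\epsilon_0=0$, $\epsilon_1=g$, $u=1$, $\ell=f$) and only records the explicit formula~\eqref{eq:shiftedchebyshev} for completeness. Your argument is precisely the standard Chebyshev verification one would expect: the affine map sends $[g,1]$ to $[-1,1]$ so the numerator is bounded by $1$, and the denominator is lower bounded via the identity $\mathrm{arccosh}\bigl(\tfrac{1+g}{1-g}\bigr)=2\,\mathrm{arctanh}(\sqrt{g})\ge 2\sqrt{g}$ together with $\cosh(u)\ge \tfrac12 e^{u}$. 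This is exactly the argument underlying the cited lemma, so there is no meaningful difference in approach---you have simply supplied the proof that the paper chose to outsource.
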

The only properties of $\mathrm{Step}_{f,g}$ that we will use in the
following are summarized in Fact \ref{fact:polyD}. In particular, we
will not need its precise form~\eqref{eq:shiftedchebyshev}, which is
included only for completeness.

A key ingredient in our work is the construction of \textit{robust
polynomials} due to Sherstov~\cite{Sherstov12}. In this setting a
robust polynomial is a real multivariate polynomial that approximates a boolean function even when the input
$x\in \{0,1\}^m$ is corrupted by a real-valued error vector
$\epsilon\in [-1/20,1/20]^m$. We will use a robust polynomial for
the `AND' function on $m$ variables which has properties summarized
in the following theorem. The construction of this polynomial and
the proof of the theorem, which is provided in
Appendix~\ref{append:ANDpoly}, follow the technique used by Sherstov
in Theorem 3.2 of Ref.~\cite{Sherstov12} to construct a robust
polynomial approximation of the PARITY function. 
\begin{theorem}[\textbf{Robust AND polynomial}, following 
  Sherstov~\cite{Sherstov12}]
\label{corrobust}
  Let $m$ be a positive integer. There is a multivariate polynomial
  $p_{\mathrm{AND}}:\mathbbm{R}^m\rightarrow \mathbbm{R}$ with real
  coefficients and degree $11m$ satisfying \begin{align*}
  p_{\mathrm{AND}}(1,1,...,1)=1, \end{align*} such that for any bit-string
  $y\in \{0,1\}^m$ and real-valued error vector $\epsilon\in
  [-1/20,1/20]^m$, we have
  \begin{align}
    \label{eq:robusteq}
    \left|p_{\mathrm{AND}}(y+\epsilon)-y_1y_2\ldots y_m\right|
      \leq e^{-m}.
  \end{align}
  Moreover, there are univariate polynomials
  $A_i:\mathbbm{R}\rightarrow \mathbbm{R}$ of degree $2i+1$ for each
  $i\geq 0$ such that
  \begin{align*}
    p_{\mathrm{AND}}(x_1, \ldots x_m) 
      = \sum_{\{i_1, \ldots i_m\}: i_1+\ldots +i_m \le 5m}
        A_{i_1}(x_1)\cdot A_{i_2}(x_2)\ldots A_{i_m}(x_m).
  \end{align*}
\end{theorem}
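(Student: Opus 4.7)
Plan. The goal is to produce a multivariate polynomial $p_{\mathrm{AND}}$ with the three required properties: (i) the normalization $p_{\mathrm{AND}}(1,\ldots,1)=1$, (ii) the robustness guarantee $|p_{\mathrm{AND}}(y+\epsilon)-y_1\cdots y_m|\le e^{-m}$ on noisy Boolean inputs, and (iii) the tensor-product structure with $\deg A_i=2i+1$ and total weight $\sum i_j\le 5m$, which automatically implies total degree $\le 11m$. The construction proceeds in the standard Sherstov-style sequence: build a univariate cleaning polynomial, expand it in a geometrically decaying polynomial basis, distribute the resulting separable approximation across all $m$ coordinates, and truncate the tensor expansion at bounded total weight.

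Step 1 (univariate cleaning polynomial). Using Fact~\ref{fact:polyD} after an affine rescaling, obtain a univariate polynomial $r:\mathbb{R}\to\mathbb{R}$ with $\deg(r)=O(m)$ such that $r(1)=1$, $|r(z)|\le \eta$ on $[-1/20,1/20]$, and $|r(z)-1|\le \eta$ on $[19/20,21/20]$, with $\eta=e^{-c_0m}$ for a suitably large universal constant $c_0$. Then for any $y\in\{0,1\}^m$ and any $\epsilon\in[-1/20,1/20]^m$, the naive product satisfies
\begin{align*}
\Bigl|\prod_{j=1}^m r(y_j+\epsilon_j)\;-\;\prod_{j=1}^m y_j\Bigr|\;\le\;(1+\eta)^m\eta\;\le\;e^{-m}/2.
\end{align*}
The issue is that the naive product has multiplicative degree $O(m^2)$, far above the target $O(m)$.

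Step 2 (expansion with degree profile $2i+1$ and geometric decay). Following Sherstov's Theorem 3.2, decompose
\begin{align*}
r(z)\;=\;\sum_{i\ge 0}A_i(z),\qquad \deg A_i = 2i+1,\qquad |A_i(z)|\;\le\;C\rho^{-i}\ \text{ on }[-1/20,21/20],
\end{align*}
where $\rho>1$ can be made as large as desired at the cost of the constant prefactor $C=C(\rho)$. The intuition is to expand $r$ against a rescaled Chebyshev basis centered on $[-1/20,21/20]$: the coefficients of a polynomial of unit magnitude on this interval decay at the rate dictated by the surrounding Bernstein ellipse, and an "oddification" of the basis (e.g., pairing Chebyshev polynomials with an extra factor of $(z-1/2)$) produces the required $2i+1$ degree profile without harming the decay.

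Step 3 (tensor expansion, truncation, and degree count). Distribute the product over the sum,
\begin{align*}
\prod_{j=1}^m r(x_j)\;=\;\sum_{\vec{i}\in\mathbb{Z}_{\ge 0}^m}\prod_{j=1}^m A_{i_j}(x_j),
\end{align*}
and define $p_{\mathrm{AND}}$ as the restriction of the right-hand side to tuples with $\sum_j i_j\le 5m$. Every surviving monomial has total degree $\sum_j(2i_j+1)\le 2\cdot 5m+m=11m$, matching the claim. The tensor-product structure in the theorem's second display is literally this truncation.

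Step 4 (tail bound, combination, and normalization). Using the pointwise bound from Step 2 and the standard count $\#\{\vec{i}:\sum i_j=k\}=\binom{k+m-1}{m-1}$, bound the truncation error at any $x=y+\epsilon$ by the geometric tail
\begin{align*}
\sum_{k>5m}\binom{k+m-1}{m-1}C^m\rho^{-k}.
\end{align*}
Since $\binom{k+m-1}{m-1}^{1/k}$ is bounded for $k\ge 5m$, choosing $\rho$ sufficiently large makes the tail at most $e^{-m}/2$, which combined with Step 1 yields the required $|p_{\mathrm{AND}}(y+\epsilon)-y_1\cdots y_m|\le e^{-m}$. Finally, at $(1,\ldots,1)$ the above gives $p_{\mathrm{AND}}(1,\ldots,1)=1\pm O(e^{-m})$, so a rescaling by a constant in $1\pm O(e^{-m})$ enforces $p_{\mathrm{AND}}(1,\ldots,1)=1$ exactly, inflating the error by only a negligible factor that can be absorbed into $c_0$.

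Main obstacle. The delicate step is Step 2: producing a decomposition $r=\sum A_i$ in which the per-term degree grows only linearly in $i$ (with the prescribed $2i+1$) and, simultaneously, the per-term magnitude decays geometrically at a rate $\rho$ that outpaces the combinatorial factor $\binom{k+m-1}{m-1}\sim (em)^m/k$-terms arising in Step 4. This tension between the degree budget $(2i+1)$ and the required geometric decay is the heart of Sherstov's robust-polynomial construction for PARITY; the task is to verify that replacing his base cleaning polynomial with ours for AND preserves both constraints and that the numerical choices $5m$ and $11m$ are indeed feasible.
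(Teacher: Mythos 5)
Your overall pipeline (separable series for a one-variable ``cleaning'' function, tensor expansion, truncation at bounded total weight, geometric tail bound) is exactly the Sherstov-style skeleton the paper uses, and Steps~3--4 are fine in outline. The serious problem is Step~2, which you yourself flag as delicate, and I do not think it can be made to work in the form you propose.

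You take $r$ to be a \emph{polynomial} of degree $O(m)$ built from Fact~\ref{fact:polyD}, and then assert that its Chebyshev-basis coefficients on $[-1/20,21/20]$ decay geometrically ``at the rate dictated by the surrounding Bernstein ellipse.'' Bernstein-ellipse decay holds for functions that are analytic \emph{and bounded} in a fixed ellipse surrounding the interval; a degree-$D$ polynomial bounded by $O(1)$ on the interval generically grows like $\rho^D$ on the ellipse $E_\rho$, so its Chebyshev coefficients up to index $D$ have no reason to decay at all. In fact the offending example is the very polynomial you start from: $\mathrm{Step}_{f,g}$ is a rescaled Chebyshev polynomial, so essentially all of its Chebyshev-basis weight sits at the top index $f=\Theta(m)$. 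Re-expanding $r$ and truncating at weight $5m$ therefore does not give a tail of size $e^{-m}$, and the combinatorial factor $\binom{k+m-1}{m-1}$ in your Step~4 is never beaten. Choosing $\rho$ ``sufficiently large'' is not available to you, because $\rho$ is tied to the ellipse on which $r$ stays bounded, and for a degree-$O(m)$ polynomial that ellipse must shrink to the interval as $m$ grows.

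The paper (and Sherstov) avoids this by not starting from a polynomial at all. It expands the exact Boolean step \emph{function} $\inte(x)=\tfrac{1}{2}(1+\sign(2x-1))$ using the binomial series for $t(1+(t^2-1))^{-1/2}$, obtaining an explicit infinite series $\inte=\sum_{i\ge0}A_i$ with $A_0(x)=x$ and, for $i\ge1$,
\begin{align*}
A_i(x)=\frac{2x-1}{2}\binom{2i}{i}\bigl(x(1-x)\bigr)^i ,
\end{align*}
which has degree exactly $2i+1$. The geometric decay on $J=[-1/20,1/20]\cup[19/20,21/20]$ is then purely algebraic: $\binom{2i}{i}\le4^i$ and $|x(1-x)|\le(1/20)(21/20)$ on $J$ give $|A_i(x)|\le(21/20)(21/100)^i$, which beats the $\binom{k+m-1}{m-1}\le 2^{k+m}$ count and yields the $3^m(3/5)^n$ bound on $\xi_n$ and hence the $e^{-m}$ tail after truncating at $n\le5m$. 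This structure also gives $A_i(1)=0$ for $i\ge1$ and $A_0(1)=1$, so $p_{\mathrm{AND}}(1,\dots,1)=1$ holds exactly and your final rescaling step is unnecessary. To repair your argument you would need to replace Step~2 with an expansion of a genuine step function (or another function analytic and uniformly bounded on a fixed Bernstein ellipse), not a re-expansion of a degree-$O(m)$ Chebyshev step polynomial.
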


%%%%%%%%%%%%%%%%%%%%%%%%%%%%%%%%%%%%%%%%%%%%%%%%%%%%%%%%%%%%%%%%%%
\section{Approximate Ground State Projector}
\label{sec:agsp}

Let us begin by defining a simple AGSP as our starting point. The
AGSP depends on a positive integer $t$, which is a coarse-graining
parameter. For any $2t\leq k\leq n-2t$ define $Q'_k$ as the
projection onto the ground space of the Hamiltonian 
\begin{align*}
  h'_k \EqDef \sum_{j=k-2t+1}^{k+2t-1} H_j,
\end{align*}
which contains all terms of \Eq{eq:twoloc} supported entirely inside
the contiguous region of the $4t$ columns $\{k-2t+1,k-2t+2,\ldots,
k+2t\}$. Here we use a prime superscript because we will
soon slightly modify the notation for the subregion Hamiltonians
$h'_k$ and ground space projectors $Q'_k$. Note that
\begin{align}
  [Q'_a,Q'_b]=0 \quad \text{whenever} \quad |a-b|\geq 4t,
\label{eq:disjoint}
\end{align}
as the latter condition ensures the projectors have disjoint support.

Let us define the ($t$-coarse grained) \textit{Detectability lemma
operator} \cite{AharonovALV08} to be the product
\begin{align}
  \DL(t) \EqDef \br{Q'_{2t}\cdot Q'_{8t}\cdot Q'_{14t}
    \cdot\ldots}\cdot \br{Q'_{5t}\cdot Q'_{11t}
      \cdot Q'_{17t}\cdot\ldots} \label{eq:DL1} ,
\end{align}
where the terms within each of the parenthesized expressions are
mutually commuting. The following Lemma is a slight variant of one
established in \Ref{AAV16}. We provide a proof in \App{append:DLproof}.  Recall that $G_\perp$
is the subspace orthogonal to the ground state.
\begin{lemma}
  \label{lem:DL}
  For any normalized state $\ket{\psi}\in G_\perp$,
  \begin{align*}
    \norm{\DL(t)\ket{\psi}}
      \le 2e^{-t\sqrt{\gamma}/25} .
  \end{align*}
\end{lemma}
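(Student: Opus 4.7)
The plan is to prove the lemma by adapting the coarse-grained detectability lemma argument of~\cite{AAV16} to the local-spectral-gap setting of the present paper. I would begin with the factorization $\DL(t) = \Pi_1 \Pi_2$, where $\Pi_1 = Q'_{2t} Q'_{8t} Q'_{14t}\cdots$ and $\Pi_2 = Q'_{5t} Q'_{11t} Q'_{17t}\cdots$. By the commutativity relation~\eqref{eq:disjoint}, the projectors within each layer pairwise commute, so each $\Pi_i$ is itself an orthogonal projector, namely the projector onto the common ground space of the mutually commuting Hamiltonians $\{h'_k\}$ appearing in its layer. Since the two layers jointly cover every column of the grid and $H$ is frustration-free, the intersection $\mathrm{range}(\Pi_1) \cap \mathrm{range}(\Pi_2)$ is precisely $\mathrm{span}(\ket{\Omega})$, so $\DL(t)$ must strictly shrink any $\ket{\psi} \in G_\perp$.

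The exponential bound $2 e^{-t\sqrt{\gamma}/25}$ would be extracted by combining this geometric setup with step-polynomial approximations of the exact projectors $Q'_k$. Concretely, each $h'_k$ is frustration-free with local gap at least $\gamma$, so its nonzero spectrum is contained in $[\gamma, \|h'_k\|]$ with $\|h'_k\| \le 4t - 1$. Fact~\ref{fact:polyD} then provides, for any degree $f$, a polynomial $\mathrm{Step}_{f, g_k}(h'_k/\|h'_k\|)$ with $g_k = \gamma/\|h'_k\|$ that agrees with $Q'_k$ on the ground space of $h'_k$ and has operator norm at most $2 e^{-2f\sqrt{g_k}}$ on its orthogonal complement. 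Since $Q'_k$ is the limit of these polynomial approximations as $f\to\infty$, the norm of $\DL(t)\ket{\psi}$ can be controlled by analyzing the product of these polynomial approximations and passing to the limit. Applying the detectability lemma argument at the single-column level to this polynomial expression, with $f$ chosen proportional to $t$, produces the claimed exponential decay with constant $1/25$ absorbing the numerical factors.

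The main technical obstacle is controlling the step-polynomial approximation errors as they propagate through the non-commuting product of the two layers. A naive per-projector accounting would introduce a loss for every $Q'_k$ and could easily destroy the exponential scaling; careful application of the detectability lemma machinery of~\cite{AAV16}, which is adapted here only in replacing the global gap with the local gap $\gamma$, is needed to combine the per-layer shrinking with the polynomial error bounds so that the final estimate remains a clean $2 e^{-t\sqrt{\gamma}/25}$.
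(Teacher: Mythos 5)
Your proposal misidentifies the mechanism that produces the $e^{-t\sqrt{\gamma}}$ scaling, and as sketched it would not yield the stated bound. You propose to replace each coarse‑grained projector $Q'_k$ by a step polynomial $\mathrm{Step}_{f,g_k}(h'_k/\|h'_k\|)$ and then ``pass to the limit.'' This cannot work for two reasons. First, the exact projectors $Q'_k$ are already the $f\to\infty$ limit, so replacing them by finite‑degree approximations and re‑taking the limit gives you back exactly $\DL(t)=\Pi_1\Pi_2$ with no additional information; the step polynomial of a \emph{single} subregion Hamiltonian does not shrink $G_\perp$, since a state in $G_\perp$ need not have energy with respect to $h'_k$. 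Second, the scaling is wrong even if one takes $f\propto t$: you have $\|h'_k\|\le (4t-1)(L-1)$ (not $4t-1$ as you wrote --- each $H_j$ is a sum of up to $L-1$ plaquette projectors), so $g_k\approx\gamma/(tL)$ and the per‑projector error is of order $e^{-c\,t\sqrt{\gamma/(tL)}}=e^{-c\sqrt{t\gamma/L}}$, which is not $e^{-ct\sqrt{\gamma}}$.

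The paper's actual proof uses a qualitatively different combination of ingredients that your sketch does not mention. It introduces the \emph{fine‑grained} detectability lemma operator $\DL=\DL_4\DL_3\DL_2\DL_1$ built from single‑plaquette projectors grouped into four commuting layers, and invokes the result of \cite{AAV16} (Corollary~3) to get the constant‑factor shrinking $\|\DL\ket{\psi}\|^2\le 1/(1+\gamma/64)$ on $G_\perp$. The square root of $\gamma$ then comes from a \emph{light‑cone (absorption) argument} (Claim~\ref{clm:lightcone}): by frustration‑freeness, the coarse‑grained layer $\Pi_1=Q'_{2t}Q'_{8t}\cdots$ absorbs any plaquette projector contained in its support, and iterating this shows that for any polynomial $F$ of degree at most $t/6$ with $F(0)=1$,
\begin{align*}
\DL(t)=\Pi_1\,F\bigl(\id-\DL^{\dagger}\DL\bigr)\,\Pi_2 .
\end{align*}
Choosing $F=\mathrm{Step}_{t/6,\;\gamma/(64+\gamma)}$ and using that on $G_\perp$ the spectrum of $\id-\DL^{\dagger}\DL$ is contained in $[\gamma/(64+\gamma),1]$ gives precisely $2e^{-t\sqrt{\gamma}/25}$. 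The Chebyshev step polynomial is applied to $\id-\DL^{\dagger}\DL$, not to the $h'_k$; and the degree budget $t/6$ comes from the light‑cone absorption, which is the whole point of the coarse‑graining parameter $t$. Your last paragraph correctly flags that error propagation through the two non‑commuting layers is the difficulty, but the resolution is not a sharper accounting of step‑polynomial errors on the $Q'_k$ --- it is the absorption identity above, which your outline does not contain and without which there is no route to the claimed exponential.
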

The lemma states that, if the coarse-graining parameter $t$ is large
enough, then the operator $\DL(t)$ shrinks the space $G_{\perp}$ at
a rate which decreases exponentially with the \textit{square root}
of $\gamma$. This square-root is the reason why coarse-graining is
useful to us---without it the shrinkage would be quadratically worse
as a function of $\gamma$ (see, e.g., \Ref{AAV16}).

Recall that we are interested in the entanglement of the ground
state $\ket{\Omega}$ across some vertical cut $(c,c+1)$ where $c\in
[n]$. For now it will be convenient to assume that $c \mod 6t=2t$;
later, in \Sec{sec:arealaw1D}, we drop this assumption. In
this case the set of qudits $\{c,c+1\}\times [L]$ are contained in
the support of $Q'_c$ and the cut divides its support into two equal
parts, see Fig. \ref{fig:agsp}. Moreover, $Q'_c$ is the only projector in \Eq{eq:DL1} with
support intersecting this vertical cut.  It will be convenient to
rewrite \Eq{eq:DL1} using a different notation which singles out
some of the projectors that surround the cut. In particular, let $m$
be an odd positive integer and consider the $m$ projectors 
\begin{align}
  Q'_{c-3(m-1)t}, \ldots, Q'_{c-6t}, Q'_c, Q'_{c+6t}, 
    \ldots, Q'_{c+3(m-1)t}.
\label{eq:projold}
\end{align}
Recall that these operators project onto the ground spaces of
subregion Hamiltonians
\begin{align}
  h'_{c-3(m-1)t}, \ldots, h'_{c-6t}, h'_c, h'_{c+6t}, 
    \ldots, h'_{c+3(m-1)t}.
\label{eq:hamold}
\end{align}
We shall relabel the projectors \Eq{eq:projold} from left-to-right
as $Q_1,Q_2, \ldots, Q_m$ and the corresponding subregion
Hamiltonians \Eq{eq:hamold} as $h_1,h_2, \ldots, h_m$. Then $Q_k$ is
the ground space projector of $h_k$ for each $k=1,2,\ldots, m$.
Observe that
\begin{align*}
[Q_i, Q_j]=0 \quad \text{for all} \quad 1\leq i\leq j\leq m,
\end{align*}
which follows from \Eq{eq:disjoint} and our definition of the
projectors $Q_1,Q_2\ldots, Q_m$. We write
\begin{align}
  \DL(t)=Q_1Q_2\ldots Q_m Q_{\mathrm{rest}} ,
\label{eq:DL2}
\end{align}
where $Q_{\mathrm{rest}}$ contains the remainder of the terms in
\Eq{eq:DL1}, i.e., 
\begin{align*}
  Q_{rest}\EqDef \br{Q'_{2t}\cdot \ldots\cdot
    Q'_{c-3(m-1)t-6t}\cdot Q'_{c+3(m+1)t+6t}\cdot 
    \ldots}\br{Q'_{5t}\cdot Q'_{11t}\cdot Q'_{17t}\cdot\ldots}.
\end{align*}
Note that the difference between Eqs.~(\ref{eq:DL1},~\ref{eq:DL2})
is only notation and that the operator $\DL(t)$ does not have any
dependence on the parameter $m$. However, we will soon use
\Eq{eq:DL2} as a starting point in defining another AGSP which does
depend on this parameter.

\begin{figure}
\begin{tikzpicture}[xscale=0.4,yscale=0.6]
\foreach \k in {0,...,5}
{
\draw[draw=black, fill=green!30!white] (0.5+6*\k,-2) rectangle (4.5+6*\k,-1);

\draw[draw=black, fill=red!30!white] (3.5+6*\k,2) rectangle (7.5+6*\k,1);
}

\draw[draw=black, fill=green!30!white] (0.5+6*6,-2) rectangle (4.5+6*6,-1);
\draw[thick, |<->|] (3.5+6*2, 2.5)--(7.5+6*2, 2.5);
\draw (17.5, 3) node {$4t$};

\draw[thick, |<->|] (3.5+6*1,2.5)--(4.5+6*1,2.5);
\draw (10,3) node {$t$};
\draw (10,4) node {Overlap region};

\draw (20, 0) node[circle, fill=red, scale=0.5]{};

\draw (25.5,-1.5) node {$Q_3$};
\draw (19.5,-1.5) node {$Q_2$};
\draw (13.5,-1.5) node {$Q_1$};

\draw (19.9,-0.6) node {$c$};

\draw (22,-0.6) node {$c+1$};

\foreach \i in {1,...,40}
{
   \draw (\i, 0) node[circle, fill=black, scale=0.5]{};
}

\
\foreach \i in {14,...,15}
{
\draw (\i, 0) node[circle, fill=blue, scale=0.5]{};
}
\foreach \i in {20,...,21}
{
\draw (\i, 0) node[circle, fill=blue, scale=0.5]{};
}
\foreach \i in {26,...,27}
{
\draw (\i, 0) node[circle, fill=blue, scale=0.5]{};
}

\draw[thick, dotted] (20.5,3)--(20.5,-3);

\end{tikzpicture}
  \caption{Construction of the AGSP $K(m,t,\ell)$, for $t=1$ and
  $m=3$. Here each dot represents a column of $L$ qudits; the 2D
  $n\times L$ grid is represented as a 1D chain of $n$ columns. Each
  coarse-grained projector contains $4t$ columns in its support and
  neighboring coarse-grained projectors overlap in a region
  containing $t$ columns. The $m$ coarse-grained projectors around
  the cut are denoted $Q_1,Q_2,\ldots, Q_m$. The columns shown in
  blue are members of the set `$\Ind$' defined in Section
  \ref{sec:schmidt}.\label{fig:agsp}}
\end{figure}
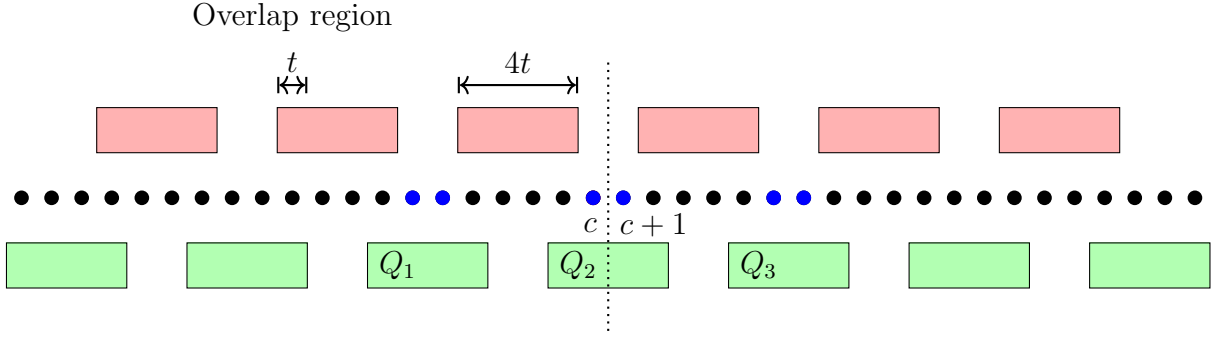

To this end, we shall first define polynomial approximations to each
of the projectors $Q_1,Q_2,\ldots, Q_m$. In particular, we use the
degree-$f$ \textit{step polynomial} $\mathrm{Step}_{f,g}(x)$ of
Fact~\ref{fact:polyD} to define
\begin{align}
\label{def:Qhat}
  \widehat{Q}_j 
    \EqDef \mathrm{Step}_{f,g}\left(\frac{1}{4tL} h_j\right)
    \qquad \text{where} \quad f
      = \left\lceil4\sqrt{tL/\gamma}\right\rceil
    \quad \text{and} \quad g=\frac{\gamma}{4tL}.
\end{align}
Here $\lceil x \rceil$ indicates the smallest integer which is at
least $x$. Note that $\widehat{Q}_j$ is Hermitian, since
$\mathrm{Step}_{f,g}$ is a polynomial with real coefficients, and
$h_j$ is a Hermitian operator. In addition, $\|h_j/4tL\|\leq 1$
since $h_j$ is a sum of at most $4tL$ projectors, and the smallest
nonzero eigenvalue of $h_j/4tL$ is at least $\gamma/4tL$ (by
definition of the local spectral gap $\gamma$). We use Fact~\ref{fact:polyD}
to establish the following properties of the spectrum of
$\widehat{Q}_j$.
\begin{lemma}
  \label{lem:qeig} 
  For each $j=1,2,\ldots, m$, the projector onto the eigenspace of
  $\widehat{Q}_j$ with eigenvalue $+1$ is equal to $Q_j$, and
  \begin{align}
    \|\widehat{Q}_j-Q_j\|
      \leq \frac{1}{20}.
    \label{eq:qeig}
  \end{align}
\end{lemma}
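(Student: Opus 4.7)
The plan is to prove both claims by passing through the spectral decomposition of $h_j$ and applying Fact~\ref{fact:polyD} via functional calculus. First I would verify the spectral properties of the rescaled operator $h_j/(4tL)$: since each $h_j$ is (by construction) one of the subregion Hamiltonians appearing in \Eq{eq:hamold}, it is a sum of at most $4tL$ projectors $P_{ij'}$ (there are $4t-1$ column Hamiltonians $H_i$ in its support, each containing at most $L-1$ local projectors), so $0 \le h_j \le 4tL \cdot \id$ and hence the spectrum of $h_j/(4tL)$ lies in $[0,1]$. By frustration-freeness of $H$ the ground state $\ket{\Omega}$ lies in the kernel of $h_j$, so $0$ is an eigenvalue and the associated eigenspace is exactly the image of $Q_j$. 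By the definition of the local spectral gap $\gamma$, every nonzero eigenvalue of $h_j$ is at least $\gamma$, so every nonzero eigenvalue of $h_j/(4tL)$ is at least $g = \gamma/(4tL)$. In summary, the spectrum of $h_j/(4tL)$ is contained in $\{0\}\cup[g,1]$.

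Next I would apply the step polynomial via functional calculus. Diagonalizing $h_j/(4tL) = \sum_\lambda \lambda\, \Pi_\lambda$ with projectors $\Pi_\lambda$, we get
\begin{align*}
\widehat{Q}_j = \mathrm{Step}_{f,g}\!\left(\tfrac{1}{4tL}h_j\right) = \sum_\lambda \mathrm{Step}_{f,g}(\lambda)\,\Pi_\lambda.
\end{align*}
On the ground space ($\lambda=0$), Fact~\ref{fact:polyD} gives $\mathrm{Step}_{f,g}(0)=1$, so $\widehat{Q}_j$ acts as $+\id$ on the image of $Q_j$. On the orthogonal complement (the span of $\Pi_\lambda$ for $\lambda\in[g,1]$), Fact~\ref{fact:polyD} gives $|\mathrm{Step}_{f,g}(\lambda)|\le 2\exp(-2f\sqrt{g})$. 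This immediately yields the operator norm identity
\begin{align*}
\|\widehat{Q}_j - Q_j\| = \max_{\lambda\in [g,1]} \bigl|\mathrm{Step}_{f,g}(\lambda)\bigr| \le 2\exp(-2f\sqrt{g}).
\end{align*}

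Finally, I would plug in the chosen parameters. Since $f \ge 4\sqrt{tL/\gamma}$ and $g = \gamma/(4tL)$,
\begin{align*}
2f\sqrt{g} \;\ge\; 2\cdot 4\sqrt{tL/\gamma}\cdot\sqrt{\gamma/(4tL)} \;=\; 4,
\end{align*}
so $\|\widehat{Q}_j - Q_j\| \le 2e^{-4} < 1/20$, proving \Eq{eq:qeig}. This same estimate also confirms the first assertion: since every eigenvalue of $\widehat{Q}_j$ outside the image of $Q_j$ has magnitude strictly less than $1$, the $+1$ eigenspace of $\widehat{Q}_j$ is exactly the image of $Q_j$, which is the ground space of $h_j$.

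There is no real obstacle in this argument; the only thing requiring care is the bookkeeping of the two bounds on the spectrum of $h_j/(4tL)$, namely the upper bound $\|h_j\|\le 4tL$ from counting local projectors and the lower bound $\gamma$ on the smallest nonzero eigenvalue from the definition of the local gap. The choices $f=\lceil 4\sqrt{tL/\gamma}\rceil$ and $g=\gamma/(4tL)$ are precisely calibrated so that the exponent $2f\sqrt{g}$ equals $4$, comfortably producing the constant $1/20$.
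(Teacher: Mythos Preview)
Your proposal is correct and follows essentially the same approach as the paper's own proof: apply $\mathrm{Step}_{f,g}$ to the spectral decomposition of $h_j/(4tL)$, use $\mathrm{Step}_{f,g}(0)=1$ on the ground space and the bound $|\mathrm{Step}_{f,g}(\lambda)|\le 2e^{-2f\sqrt{g}}$ on the excited spectrum $[g,1]$, then plug in $f\ge 4\sqrt{tL/\gamma}$ and $g=\gamma/(4tL)$ to get $2e^{-4}\le 1/20$. Your version simply spells out in more detail the two spectral facts (the norm bound $\|h_j\|\le 4tL$ and the gap bound from the definition of $\gamma$) that the paper states just before the lemma.
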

\begin{proof}
Recall that $Q_j$ projects onto the zero energy ground space of
$h_j$, which is mapped to the $+1$-eigenspace of $\widehat{Q}_j$
since $\mathrm{Step}_{f,g}(0)=1$. On the other hand, all nonzero
eigenvalues of $h_j$ are at least $\gamma/4tL$ and using
\Eq{eq:step-poly} and the choices \Eq{def:Qhat} of $f$ and $g$ we
see that 
\begin{align*}
  \|\widehat{Q}_j-Q_j\|\leq 2e^{-2f\sqrt{\gamma/4tL}}
    \leq 2e^{-4}\leq \frac{1}{20}.
\end{align*}
\end{proof}
\noindent The AGSP we will define is similar to the $\DL(t)$
operator of \Eq{eq:DL2}, with the following modifications:
\begin{itemize}

  \item \textbf{Outer polynomial approximation using the robust AND:}
   Use $p_{\mathrm{AND}}$ described in \Thm{corrobust} to
   approximate the product $Q_1Q_2\ldots Q_m$.

  \item \textbf{Inner polynomial approximation using the step 
    polynomials:} 
    Use the operators $\widehat{Q}_j$ to approximate $Q_j$. 
    
  \item \textbf{Powering:} 
    Amplify the effect of the operator by raising it to a power
    $\ell\geq 1$.
\end{itemize}

In particular, the AGSP which is the main object of study in this
paper is defined as follows:
\begin{align}
\label{def:Kmtl}
\boxed{  K(m,t,\ell) \EqDef \left(p_{\mathrm{AND}}
    \left(\widehat{Q}_1,\widehat{Q}_2,\ldots, 
    \widehat{Q}_m\right) Q_{\mathrm{rest}}\right)^\ell.}
\end{align}
It depends on the choices of coarse-graining parameter $t$ (a
positive integer), the odd positive integer $m$ describing the
number of coarse-grained projectors of interest near the cut, and the positive integer $\ell$ which is
the powering parameter.  Note that if $n$ is too small we may not be
able to fit $m$ coarse-grained projectors around the cut as shown in
\Fig{fig:agsp}, and in this case strictly speaking we cannot define
$K(m,t,\ell)$ as above. In the following, we shall without loss of
generality assume that $n$ is sufficiently large that $K(m,t,\ell)$
is well-defined \footnote{We can always form a new Hamiltonian $H'$
on an $n'\times L$ grid for any $n'>n$ which has (a) the same local
spectral gap $\gamma$ as $H$, and (b) a unique ground state
$|\Omega\rangle\otimes |0^{(n'-n)L}\rangle$ and therefore exactly
the same entanglement entropy across the given cut. $H'$ is obtained
from $H$ by adding new local projectors which act on all the newly
added plaquettes of the lattice. For each plaquette with $q<4$ old
qudits from the original lattice and $4-q$ new qudits, we add the
projector $I^{\otimes q}\otimes (I-|0\rangle\langle0|^{\otimes
4-q})$ to the Hamiltonian.}.  

To confirm that the operator $K(m,t,\ell)$ is an AGSP, we need to
check that it fixes the ground state $|\Omega\rangle$ and its
orthogonal space $G_\perp$, that is, 
\begin{align}
     K(m,t,\ell)^{\dagger}\ket{\Omega}
       = K(m,t,\ell)\ket{\Omega} = \ket{\Omega}
    \label{eq:Kpart2}
  \end{align}
It suffices to check \Eq{eq:Kpart2} with $\ell=1$ since the result
for higher $\ell$ follows from this special case. Using the fact
that $Q_{\mathrm{rest}}\ket{\Omega}=\ket{\Omega}$ we get
\begin{align*}
  K(m,t,1)\ket{\Omega}=p_{\mathrm{AND}}\left(\widehat{Q}_1,
    \widehat{Q}_2,\ldots, \widehat{Q}_m\right)\ket{\Omega}
    = p_{\mathrm{AND}}(1,1,\ldots, 1)\ket{\Omega}=\ket{\Omega},
\end{align*}
where in the second equality we used the fact that
$\widehat{Q}_j\ket{\Omega}=\ket{\Omega}$ for all $j=1,2,\ldots, m$,
and in the last equality we used the fact that
$p_{\mathrm{AND}}(1,1,\ldots, 1)=1$ as stated in Thm{corrobust}. A
very similar argument shows $ K(m,t,1)^{\dagger}
\ket{\Omega}=\ket{\Omega}$.

In the next two sections we bound the shrinking factor $\Delta$ and
Schmidt rank $D$ of the AGSP $K(m,t,\ell)$ across the vertical cut
$(c,c+1)$.  We now provide an overview of these bounds and how they
are used to establish Theorem \ref{thm:subv_cut}.

%--------------------------------------------------------------------
\subsection*{Overview of the proof of Theorem \ref{thm:subv_cut}:}

In \Sec{sec:shrinking} we use the error bound
\Eq{eq:robusteq} for the robust polynomial $p_{\mathrm{AND}}$
to show that $K(m,t,1)$ approximates the coarse-grained
detectaibility lemma operator $\DL(t)$ in the sense that
$\norm{K(m,t,1)-\DL(t)}\leq e^{-m}$. In particular, choosing 
\begin{align}
  t=\Theta(m\gamma^{-1/2}),
\label{eq:tas}
\end{align}
is enough to ensure that the shrinking factor $\Delta$ of
$K(m,t,\ell)$ is asymptotically the same as that of
$(\DL(t))^{\ell}$ (from Lemma \ref{lem:DL}) , that is
\begin{align}
  \Delta=e^{-\Omega(m\ell)},
\label{eq:shrink1}
\end{align}
see Theorem \ref{thm:shrink}. 

Next, we need to understand the Schmidt rank $D$ of $K(m,t,\ell)$.
Fixing $t$ as in \Eq{eq:tas}, in \Sec{sec:schmidt} we
show that if
\begin{align}
  \ell =\Theta(m^{5/2}L^{1/2}\gamma^{-1/4}),
\label{eq:ellas}
\end{align}
then we have the upper bound
\begin{align}
  D = e^{\tilde{O}\left(mtL+\ell\right)}
    = e^{\tilde{O}(m^2L\gamma^{-1/2}+\ell)},
\label{eq:rank1}
\end{align}
see \Thm{thm:sr} (below we give some high level explanation for
\Eq{eq:rank1}). We then choose $m$ to satisfy $D\cdot
\Delta<1/2$ so that \Thm{thm:AGSParealaw} can be applied.
Comparing Eqs.~(\ref{eq:shrink1}, \ref{eq:ellas},\ref{eq:rank1}) we
see that this leads to
\begin{align}
  m=\tilde{\Theta}(L^{1/3}\gamma^{-1/6}).
\label{eq:mtilde}
\end{align}
The entanglement entropy of the ground state $\ket{\Omega}$ is
upper bounded using \Thm{thm:AGSParealaw} as 
\begin{align*}
  10\log(D)=\tilde{O}(L^{5/3}\gamma^{-5/6}) 
\end{align*} 
as claimed in \Thm{thm:subv_cut}. Here we are hiding factors
polylogarithmic in $d,L, \gamma^{-1}$ in the $\tilde{O}(\cdot)$
notation, while in \Sec{sec:arealaw1D} we give a more explicit proof
which carries them around.

The most involved technical component of this work is to establish
the bound \Eq{eq:rank1} on $D$. We use a variant of an argument from
\Ref{AradKLV13}, which can be understood at a high level as follows.
Imagine starting with the definition~\eqref{def:Kmtl} of our AGSP and
then expanding the degree-$11m$ polynomial $p_{\mathrm{AND}}$ and
the degree-$f$ polynomials $\{\widehat{Q}_j\}$ where $f$ is given by
\Eq{def:Qhat}. Looking at the total degree of the polynomials we are
expanding and multiplying by the power $\ell$, we see that
$K(m,t,\ell)$ can be written as a sum of terms, each of which is a
product $P$ of at most $O(m\cdot f\cdot \ell)$ operators from the
set $Q_{rest}\cup \{H_i\}_{i\in \Loc}$, where $\Loc\subset [n]$ is a
set of $\sim 4mt$ column indices centered around the cut of
interest. Consider now a \emph{single} such product $P$ in the expansion. 
Since $|\Loc|=O(mt)$, we can always find an index $k\in \Loc$ such
that the number of times $H_k$ occurs in $P$ is at most
$O(mf\ell/(mt))=O(f\ell/t)$. Therefore $P$ has Schmidt rank at most
$e^{\tilde{O}(\frac{f\ell}{t})}$ across the cut $(k,k+1)$. Since $k$
is at most $4mt$ columns away from $c$, and since each column
contains $L$ qudits, the operator $P$ has Schmidt rank at most 
\begin{align}
  e^{\tilde{O}(\frac{f\cdot \ell}{t}+mtL)}
\label{eq:pbound}
\end{align}
across the cut $(c,c+1)$ of interest. With our choice of $m,t,\ell$
given in Eqs.~(\ref{eq:mtilde},\ref{eq:tas},\ref{eq:ellas}) and with
$f$ given by \Eq{def:Qhat}, one can confirm that the expression
\Eq{eq:pbound} coincides with \Eq{eq:rank1} which is the bound we
are trying to establish. Unfortunately, \Eq{eq:pbound} is only an
upper bound on the Schmidt rank of each product $P$, while we are
interested in an upper bound on the Schmidt rank of $K(m,t,\ell)$
which is a sum of many such products. It turns out that naively
bounding the latter quantity by the number of products times
\Eq{eq:pbound} is not good enough to obtain the desired result. In
other words, the only problem with the above proof technique is that
the decomposition of the AGSP as a sum of products $P$ has too many
such terms. In \Sec{sec:schmidt} we prove the bound \Eq{eq:rank1}
using a variant of the above strategy which is based on an expansion
of $K(m,t,\ell)$ as a sum of (far fewer) well-structured operators
of a certain form, which take the place of the products $P$
considered above. Since we were initially guided by the back-of-the
envelope estimate \Eq{eq:pbound}, it is fortunate that the actual
proof is close enough in spirit that it provides the same asymptotic
bound on Schmidt rank of our AGSP.

%%%%%%%%%%%%%%%%%%%%%%%%%%%%%%%%%%%%%%%%%%%%%%%%%%%%%%%%%%%%%%%%%%%%%
\section{ Shrinking factor of the AGSP}
\label{sec:shrinking}
In this Section we use the properties of the robust polynomial
$p_{\mathrm{AND}}$ summarized in \Thm{corrobust} to upper bound the
shrinking factor $\Delta$ of our AGSP.
\begin{theorem}[\textbf{AGSP shrinking bound}]
\label{thm:shrink}
  Let $\ket{\psi}\in G_\perp$ be a normalized state.
  Then for all $\ell\geq 1$ we have
  \begin{align}
    \norm{K(m,t,\ell)\ket{\psi}}^2\le \Delta 
    \qquad \text{where} \qquad 
    \Delta=\left(e^{-m}
      + 2e^{-t \sqrt{\gamma}/25}\right)^{2\ell}.
  \label{eq:Kpart1}
\end{align}
   \end{theorem}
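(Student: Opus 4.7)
The plan is to first prove the statement for $\ell = 1$ and then bootstrap to general $\ell$ by observing that $M \EqDef p_{\mathrm{AND}}(\widehat{Q}_1,\ldots,\widehat{Q}_m)Q_{\mathrm{rest}}$ preserves the subspace $G_\perp$. The $\ell=1$ case will follow from comparing $M$ with the coarse-grained detectability lemma operator $\DL(t) = Q_1 Q_2 \cdots Q_m Q_{\mathrm{rest}}$ from \Eq{eq:DL2}: if I can show $\|M - \DL(t)\| \leq e^{-m}$, then the triangle inequality together with \Lem{lem:DL} immediately yields
\[
  \|M\ket{\psi}\| \leq \|M - \DL(t)\| + \|\DL(t)\ket{\psi}\| \leq e^{-m} + 2e^{-t\sqrt{\gamma}/25}
\]
for every unit $\ket{\psi}\in G_\perp$, which is the square root of the claimed $\Delta$.

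The key step is the operator norm bound $\|p_{\mathrm{AND}}(\widehat{Q}_1,\ldots,\widehat{Q}_m) - Q_1 Q_2\cdots Q_m\| \leq e^{-m}$. The crucial structural observation is that, by the $6t$ spacing chosen in the relabeling of \Eq{eq:projold}, the subregion Hamiltonians $h_1,\ldots,h_m$ have pairwise disjoint supports; hence $[h_j,h_{j'}]=0$ for $j\neq j'$ and the $\widehat{Q}_j = \mathrm{Step}_{f,g}(h_j/4tL)$ are mutually commuting Hermitian operators that also commute with the $Q_j$. I would therefore pass to a joint eigenbasis and read off, in each common eigenspace, an eigenvalue $y_j \in \{0,1\}$ of $Q_j$ and a perturbed eigenvalue $\alpha_j = y_j + \epsilon_j$ of $\widehat{Q}_j$; by \Lem{lem:qeig} the deviation satisfies $|\epsilon_j|\leq 1/20$, which puts us exactly in the regime of \Thm{corrobust}. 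The robust AND bound \Eq{eq:robusteq} then gives $|p_{\mathrm{AND}}(\alpha_1,\ldots,\alpha_m) - y_1\cdots y_m| \leq e^{-m}$ in every joint eigenspace, and simultaneous diagonalizability upgrades this to the desired operator norm bound after multiplying on the right by $Q_{\mathrm{rest}}$, which has operator norm at most $1$.

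To lift the bound from $\ell=1$ to general $\ell$, I would verify that $M$ fixes $\ket{\Omega}$ (already noted in the excerpt when checking the AGSP property) and that $M^\dagger$ also fixes $\ket{\Omega}$: since the $\widehat{Q}_j$'s commute and $p_{\mathrm{AND}}$ has real coefficients, $p_{\mathrm{AND}}(\widehat{Q}_1,\ldots,\widehat{Q}_m)$ is Hermitian and fixes $\ket{\Omega}$, and $Q_{\mathrm{rest}}$ is a product of commuting projectors that individually fix $\ket{\Omega}$. Hence for any $\ket{\psi}\in G_\perp$, $\langle \Omega | M\psi\rangle = \langle M^\dagger \Omega|\psi\rangle = 0$, so $M$ maps $G_\perp$ into itself. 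Applying the $\ell=1$ bound iteratively yields
\[
  \|K(m,t,\ell)\ket{\psi}\| = \|M^\ell \ket{\psi}\| \leq \bigl(e^{-m}+2e^{-t\sqrt{\gamma}/25}\bigr)^\ell,
\]
and squaring gives \Eq{eq:Kpart1}.

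The main obstacle is really just the first step, namely obtaining $\|p_{\mathrm{AND}}(\widehat{Q}_1,\ldots,\widehat{Q}_m)-Q_1\cdots Q_m\|\leq e^{-m}$; the robust polynomial was designed with exactly this kind of operator-valued substitution in mind, but it requires the pieces being substituted to commute and to deviate from $\{0,1\}$ by at most $1/20$. The commutativity holds because of the explicit $6t$ spacing (so that the $h_j$ have disjoint supports, not merely the $Q_j$), and the perturbation bound is provided by \Lem{lem:qeig}; once both are verified, everything else is triangle inequalities and iteration.
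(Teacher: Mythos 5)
Your argument is correct and matches the paper's proof in all essentials: you reduce to $\ell=1$, compare $p_{\mathrm{AND}}(\widehat{Q}_1,\ldots,\widehat{Q}_m)Q_{\mathrm{rest}}$ to $\DL(t)=Q_1\cdots Q_m Q_{\mathrm{rest}}$ via a joint eigenbasis and the robust AND bound together with \Lem{lem:qeig}, and finish with the triangle inequality and \Lem{lem:DL}. The only cosmetic difference is that you phrase the key estimate as $|p_{\mathrm{AND}}(y+\epsilon)-y_1\cdots y_m|\le e^{-m}$ directly, whereas the paper splits off the $(1,\ldots,1)$ term and bounds the remainder; you also spell out the lift to general $\ell$ (that $M$ preserves $G_\perp$), which the paper leaves implicit.
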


\begin{proof}
  Note that it suffices to prove the claim for $\ell=1$ since the
  result for higher $\ell$ follows straightforwardly from this
  special case. Recall that the Hermitian operators
  $\widehat{Q}_1,\widehat{Q}_2,\ldots, \widehat{Q}_m$ mutually
  commute and therefore can be simultaneously diagonalized. Since
  $p_{\mathrm{AND}}$ is a polynomial with real coefficients, 
  $p_{\mathrm{AND}}\left(\widehat{Q}_1,\widehat{Q}_2,\ldots,
  \widehat{Q}_m\right)$ is a Hermitian operator. Let us write
  $\Pi_j^{(x)}$ for the projector onto the eigenspace of
  $\widehat{Q}_j$ with eigenvalue $x$. Note that Lemma
  \ref{lem:qeig} states that $\Pi_{j}^{(1)}=Q_j$ and all eigenvalues
  of each operator $\widehat{Q}_j$ lie in the range
  \begin{align}
  \label{eq:epsrange}
    x \in [-1/20,1/20]\cup \{1\}.
  \end{align}
  Thus
  \begin{align}
    &p_{\mathrm{AND}}(\hQ_1, \hQ_2, \ldots \hQ_{m}) 
    = \sum_{x_1, x_2, \ldots x_{m}}
      p_{\mathrm{AND}}\br{x_1, x_2, \ldots x_m}
        \Pi_1^{(x_1)}\Pi_2^{(x_2)}\ldots 
        \Pi_{m}^{(x_{m})}\nonumber\\
    &= p_{\mathrm{AND}}(1,1,\ldots 1)Q_1Q_2\ldots Q_m 
      +\sum_{\stackrel{x_1, x_2, \ldots 
        ,x_m:}{ \exists i \text{ with } x_i\in [-1/20, 1/20]}}
          p_{\mathrm{AND}}\br{x_1, x_2, \ldots x_m}
          \Pi_1^{(x_1)}\Pi_2^{(x_2)}
            \ldots \Pi_m^{(x_m)}
  \label{eq:sump}
  \end{align}
  Using \Thm{corrobust} and \Eq{eq:epsrange} we bound each term
  appearing the sum on the right-hand-side as
  \begin{align}
  \label{eq:killx}
    \left|p_{\mathrm{AND}}\br{x_1, x_2, \ldots x_m} \right|
      \leq e^{-m} 
        \qquad \text{ whenever} \qquad  \exists i \text{ with }
          x_i \in [-1/20, 1/20].
  \end{align}
  Therefore, using the fact that $p_{\mathrm{AND}}(1,1,1\ldots,
  1)=1$ in \Eq{eq:sump} and the mutual orthogonality of the
  operators $\{ \Pi_1^{(x_1)}\Pi_2^{(x_2)}\ldots
  \Pi_m^{(x_m)}\}_{x_1,x_2,\ldots x_m}$, we get
  \begin{align*}
    \norm{p_{\mathrm{AND}}(\hQ_1, \hQ_2, \ldots \hQ_{m}) -Q_1Q_2\ldots Q_m}
      \le e^{-m},
  \end{align*}
  and so
  \begin{align}
    \norm{K(m,t,1)-\DL(t)} 
      &\le \Big\|\left(p_{\mathrm{AND}}(\hQ_1, \hQ_2, \ldots \hQ_{m}) 
        - Q_1Q_2\ldots Q_m\right)Q_{\mathrm{rest}}\Big\|\nonumber\\
    &\le \norm{p_{\mathrm{AND}}(\hQ_1, \hQ_2, \ldots \hQ_{m}) 
      - Q_1Q_2\ldots Q_m} \le e^{-m},
  \label{eq:pprime}
  \end{align}
  where we used the fact that $\norm{Q_{\mathrm{rest}}}=1$. 
  Finally, for $\ket{\psi}\in G_\perp$ we get, using the triangle
  inequality and \Eq{eq:pprime},
  \begin{align*}
    \norm{K(m,t,1)\ket{\psi}} 
      \le \norm{\DL(t)\ket{\psi}} 
        + e^{-m}
        \le 2e^{-t \sqrt{\gamma}/25} + e^{-m},
  \end{align*}
  where we used \Lem{lem:DL}. Squaring both sides completes
  the proof.
\end{proof}

%%%%%%%%%%%%%%%%%%%%%%%%%%%%%%%%%%%%%%%%%%%%%%%%%%%%%%%%%%%%%%%%%%%%
\section{Schmidt Rank of the AGSP}
\label{sec:schmidt}

\begin{comment}
We now consider a more general family of \textit{pseudo-1D} local Hamiltonian systems that includes the above as a special case. In particular, we view the above Hamiltonian 
\begin{align}
H=\sum_{i=1}^{n-1} H_i
\label{eq:1dchain}
\end{align}
as a 1D chain of ``columns". Here the $i$th column is the subsystem which contains the $L$ qudits labeled $(i,j)$ with $j\in [L]$.  We shall view each column as a qudit of dimension $d^L$, although in the following we shall refer to them as columns to avoid confusion with the underlying qudits of dimension $d$.  Note that the Hamiltonian \Eq{eq:1dchain} has the following four properties:
\begin{enumerate}
\item{Each column $i=1,2,\ldots, n$ has local dimension at most $d^L$.}
\item{Each term $H_i$ is a positive semidefinite operator which acts nontrivially only on columns $i$ and $i+1$, and satisfies $\|H_i\|\leq L$.}
\item{The term $H_i$ has Schmidt rank at most $L\cdot d^4$ across the cut separating columns $1,\ldots, i$ from $i+1,\ldots, L$}
\item{The Hamiltonian \Eq{eq:1dchain} has a unique zero energy ground state $\ket{\Omega}$.}
\end{enumerate}
\end{comment}

In this section we bound the Schmidt rank of the operator
$K(m,t,\ell)$ across a vertical cut $(c,c+1)$.  Let us begin by
introducing some additional terminology. In the following we shall
use the notation $\SR(O)$ to denote the Schmidt rank of an operator
$O$ across the vertical cut $(c,c+1)$. 

For each coarse-grained projector $Q_i$ with $i\in [m]$, there is a collection of $2t$ column indices $j\in [n]$, such that
column $j$ is in the support of $Q_i$ and every other coarse-grained
projector $Q_{i'}$ (with $i'\neq i$) acts trivially on the $L$ qudits
$j\times [L]$ in the column, see \Fig{fig:agsp}.  Let these columns
be $\Ind_i\subset [n]$ and define $\Ind\EqDef \cup_{i=1}^m\Ind_i$,
so that
\begin{align}
  |\Ind|=2mt.
\label{eq:ind}
\end{align}
The set $\Ind$ is depicted in blue in \Fig{fig:agsp}. 

For each $i=1,2,\ldots, m$ we also define the set of indices
$\Loc_i\subset [n-1]$ such that
\begin{align}
  h_i=\sum_{j\in \Loc_i} H_j
\label{eq:hiloc}
\end{align}
and let $\Loc\EqDef \cup_{i=1}^{m} \Loc_i$. Recall that $Q_i$ is the
projector onto the ground space of the Hamiltonian $h_i$. Note that
$|\Loc_i|=4t-1$ and therefore
\begin{align}
  |\Loc|\leq 4tm.
\label{eq:sizeloc}
\end{align}

Recall that
\begin{align}
  f\EqDef \left\lceil 4\sqrt{tL/\gamma}\right\rceil
\label{eq:f}
\end{align}
is the degree of the polynomial $\mathrm{Step}_{f,g}(x)$ that was
used in the definition \Eq{def:Qhat} of $\widehat{Q}_j$. We shall
also write
\begin{align}
\label{eq:r}
  r\EqDef 11m
\end{align}
for the degree of the polynomial $p_{\mathrm{AND}}$ defined in
\Thm{corrobust}.

Our bound on $\SR\big(K(m,t,\ell)\big)$ is summarized as follows.
\begin{theorem}
  \label{thm:sr}
  Let $c\in [n-1]$ be a column label such that $c \mod
  6t=2t$. Let $\ell,m,t$ be chosen such that
  \begin{align}
   \ell \leq \frac{m^2t^2L}{fr},
  \label{eq:frlbound}
  \end{align}
  where $r,f$ are defined by Eqs. (\ref{eq:f}, \ref{eq:r}).  Then
  the Schmidt rank of $K(m,t,\ell)$ across the cut $(c,c+1)$ is bounded as
  \begin{align}
    \SR\big(K(m,t,\ell)\big)\leq (6mtr)^{3\ell}(6mtdL)^{16mtL}.
  \label{SRmainbound}
  \end{align}
\end{theorem}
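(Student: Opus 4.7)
The plan is to bound $\SR(K(m,t,\ell))$ by writing $K(m,t,\ell)$ as a sum of ``structured'' operators, with the number of summands bounded by $(6mtr)^{3\ell}$ and the Schmidt rank of each summand bounded by $(6mtdL)^{16mtL}$. First I would expand each of the $\ell$ copies of $p_{\mathrm{AND}}(\widehat{Q}_1, \ldots, \widehat{Q}_m)$ using the decomposition from \Thm{corrobust} as $\sum_{\vec{i}: \sum_j i_j \le 5m} \prod_{j=1}^m A_{i_j}(\widehat{Q}_j)$, and then expand each $A_{i_j}(\widehat{Q}_j)$, a polynomial of degree $(2i_j+1)f$ in $h_j = \sum_{k \in \Loc_j} H_k$, further as a polynomial in the individual $H_k$. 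Two structural facts will be used throughout: the operators $h_1, \ldots, h_m$ have pairwise disjoint supports (centers spaced $6t$ apart, each of width $4t$) and hence commute, so the factors $A_{i_j}(\widehat{Q}_j)$ can be reordered freely within each copy of $p_{\mathrm{AND}}(\widehat{Q})$; and under the hypothesis $c \bmod 6t = 2t$, no projector appearing in $Q_{\mathrm{rest}}$ spans the cut $(c,c+1)$, so $Q_{\mathrm{rest}}$ factorizes across the cut as $Q_{\mathrm{rest}}^L \otimes Q_{\mathrm{rest}}^R$.

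For the per-type Schmidt rank bound, I would use the ``best cut'' argument sketched in the proof overview of \Sec{sec:agsp}. Any summand in the expansion is a product of $O(mf\ell)$ operators drawn from $\{H_k : k \in \Loc\} \cup \{Q_{\mathrm{rest}}\}$. Since $|\Ind| = 2mt$, pigeonhole gives a column $k^* \in \Ind$ with $H_{k^*}$ appearing at most $O(f\ell/t)$ times. Because every other local factor either acts entirely on one side of $(k^*, k^*+1)$ or commutes with an operator that does, the Schmidt rank of the summand across $(k^*, k^*+1)$ is at most $(Ld^4)^{O(f\ell/t)}$, and translating to the cut $(c,c+1)$ costs an additional $d^{O(mtL)}$ factor for bridging the $\le 4mt$ columns between $c$ and $k^*$. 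Under the hypothesis $\ell \le m^2 t^2 L/(fr)$, the first factor is dominated by $d^{O(mtL)}$, and allowing polynomial structural overhead yields the desired per-type bound $(6mtdL)^{16mtL}$.

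The main technical obstacle, and where the proof must go beyond the naive argument of the overview, is organizing the expansion so that the number of distinct structured summands is at most $(6mtr)^{3\ell}$. A fully expanded monomial-level enumeration gives $\gtrsim (mt)^{rf\ell}$ terms; even stopping at the level of degree tuples, one still has $\binom{6m}{m}^\ell = 2^{\Theta(m\ell)}$ possibilities, far exceeding the target count. The crux is therefore to introduce a coarse ``structural signature'' for each of the $\ell$ copies — recording at most a constant (three) number of ``anchor'' choices per copy out of $\le 6mtr$ options — such that all monomials sharing a signature bundle into a single operator that still satisfies the per-type Schmidt rank bound above. Designing such a signature so that both the count bound and the Schmidt rank bound hold simultaneously, exploiting commutativity of the $h_j$'s together with the freedom to absorb left/right tensor factors into the $d^{O(mtL)}$ Schmidt-rank budget, is the most delicate step and the real technical heart of the argument.
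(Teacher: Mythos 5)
Your setup and the first two-thirds of the argument closely track the paper's: you expand $p_{\mathrm{AND}}$ via the product form of \Thm{corrobust}, observe that the $h_j$'s commute and that $Q_{\mathrm{rest}}$ factorizes across the cut, pigeonhole over $\Ind$ to find a column $k^*$ on which the power of $H_{k^*}$ is at most $O(f\ell/t)$, bound the Schmidt rank across $(k^*,k^*+1)$ by $(Ld^4)^{O(f\ell/t)}$, and pay $d^{O(mtL)}$ to bridge to $(c,c+1)$. That matches the paper's Lemma~\ref{lem:SR3}, its pigeonhole in Lemma~\ref{lem:SR2}, and the back-of-the-envelope estimate in the proof overview. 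However, you explicitly flag the remaining step — organizing the expansion into at most $(6mtr)^{3\ell}$ ``structured'' summands that each satisfy the per-summand Schmidt rank bound — as unsolved (``the most delicate step and the real technical heart of the argument''), and you do not supply a mechanism for it. That step is precisely what makes the theorem nontrivial, so as written the proposal has a genuine gap.

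The paper closes the gap with a polynomial interpolation trick that you do not describe. It introduces a formal complex variable $Z_j$ for each $j\in\Loc$ and the deformed operators $\widehat Q_k(\vec Z)=\mathrm{Step}_{f,g}\bigl(\tfrac{1}{4tL}\sum_{j\in\Loc_k}Z_jH_j\bigr)$, then writes $K(\vec Z)=\sum_{\vec\beta}\op(\vec\beta)\prod_j Z_j^{\beta_j}$. Lemma~\ref{lem:SR2a} bounds the number of multinomial coefficients by $M=(3+3rf\ell/4tm)^{4tm}$, and Lemma~\ref{lem:SR2} uses a Vandermonde-style nonvanishing-determinant argument (Claim~\ref{clm:invertmat}) to show that any single $\op(\vec\beta')$ is a linear combination of at most $M$ evaluations $K_k^{\le R}(\vec X^{(\alpha)})$, where $K_k^{\le R}$ collects the terms with $\beta_k\le R=fr\ell/(2mt)$. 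Crucially, it is $K_k^{\le R}(\vec X)$ — a full structured operator in its own right, not an individual monomial — whose Schmidt rank is then bounded in Lemma~\ref{lem:SR3} by re-expanding $p_{\mathrm{AND}}$ in powers of the single $\widehat Q_u(\vec Z)$ with $k\in\Ind_u$, isolating left and right factors $\mathcal L(s),\mathcal R(s)$ that do not cross $(k,k+1)$. The $r^{3\ell}$ in the final bound comes from the three small indices $(i_u,s,\text{degree in }A_{i_u})$ of that re-expansion, each ranging over at most $11m=r$ values, while the $(6mt)^{3\ell}$ and $(6mtdL)^{16mtL}$ come from the two copies of $M$ and from the $(Ld^4)^N d^{6mtL}$ Schmidt-rank cost, all simplified under the hypothesis $\ell\le m^2t^2L/(fr)$. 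Your ``three anchor choices per copy out of $\le 6mtr$ options'' is a reasonable guess at the bookkeeping, but without the $\vec Z$-deformation and the determinant-invertibility argument there is no way to bundle the exponentially many $\vec\beta$-tuples into only $M$ structured operators, and the count you need does not follow from commutativity or factorization of $Q_{\mathrm{rest}}$ alone.
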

In the remainder of this section we prove \Thm{thm:sr}.  We shall
use a variant of the polynomial interpolation technique introduced
in \Ref{AradKLV13}. We introduce a formal complex variable $Z_j$ for
each $j\in \Loc$, and generalize Definition~\ref{def:Qhat} to
\begin{align}
  \widehat{Q}_k (\vec{Z}) 
    \EqDef \mathrm{Step}_{f,g}\left(\frac{1}{4tL}
      \sum_{j\in \Loc_k} Z_j H_j\right) 
        \qquad \qquad k=1,2,\ldots, m.
\label{eq:qz}
\end{align}
Note that $\widehat{Q}_k (1,1,\ldots ,1)=\widehat{Q}_k$ which can be
see from Eqs.~(\ref{def:Qhat},\ref{eq:hiloc},\ref{eq:qz}). We define
\begin{align}
  K(\vec{Z}) \EqDef \left[p_{\mathrm{AND}}
    \left(\widehat{Q}_1(\vec{Z}),\widehat{Q}_2(\vec{Z}),\ldots, 
    \widehat{Q}_m(\vec{Z})\right) Q_{\mathrm{rest}}\right]^\ell ,
\label{eq:wsymz}
\end{align}
where for brevity, we have suppressed the dependence of $K(\vec{Z})$
on $m,t,\ell$. The operator $K(\vec{Z})$ coincides with
$K(m,t,\ell)$ when $\vec{Z}=(1,1,\ldots,1)$, and therefore by upper
bounding $\SR\big(K(\vec{Z})\big)$ for general $\vec{Z}$, we also
upper bound $\SR\big(K(m,t,\ell)\big)$. To this end, we use Eqs.~(\ref{eq:qz},~\ref{eq:wsymz}) and the fact that
$\mathrm{Step}_{f,g}$ is a polynomial to expand $K(\vec{Z})$ as a
multinomial in complex variables with operator coefficients:
\begin{align}
\label{eq:opexpansion}
  K(\vec{Z}) = \sum_{\vec{\beta} = 
    \{\beta_j\}_{j\in \Loc}} \op(\vec{\beta})
    \prod_{j\in \Loc}\br{Z_j}^{\beta_j},
\end{align}
where $\beta_j\in \{0,1,2,\ldots\}$ counts the number of times $H_j$
appears in the operator $\op(\vec{\beta})$. The following lemma
upper bounds $\SR\big(K(\vec{Z})\big)$ in terms of the maximum
Schmidt rank of one of the operators appearing on the right-hand
side of \Eq{eq:opexpansion}.
\begin{lemma}
\label{lem:SR2a}
  \begin{align}
    \SR(K(\vec{Z})) \leq M\cdot \max_{\vec{\beta}} \; 
      \SR\left(\op(\vec{\beta})\right) ,
  \label{eq:M1}
  \end{align}
  where
  \begin{align}
  \label{eq:M}
    M\EqDef \left(3+\frac{3rf\ell}{4tm}\right)^{4tm}.
  \end{align}
\end{lemma}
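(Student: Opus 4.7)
The plan is to view $K(\vec Z)$ as a polynomial in the $|\Loc|=4tm$ formal commuting variables $\{Z_j\}_{j\in\Loc}$ with operator-valued coefficients $\op(\vec\beta)$, as in \Eq{eq:opexpansion}. For any fixed $\vec Z$ the monomials $\prod_j(Z_j)^{\beta_j}$ are just scalars, so subadditivity of the Schmidt rank gives
\[
\SR\bigl(K(\vec Z)\bigr)\;\le\;\left|\{\vec\beta:\op(\vec\beta)\neq 0\}\right|\;\cdot\;\max_{\vec\beta}\SR\bigl(\op(\vec\beta)\bigr),
\]
so the lemma reduces to upper bounding the number of monomials appearing in the expansion by $M=(3+3rf\ell/(4tm))^{4tm}$.

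To count these monomials I would first bound the total degree of $K(\vec Z)$ in $\vec Z$. By \Eq{eq:qz}, each $\widehat Q_k(\vec Z)$ is obtained by substituting the linear expression $\frac{1}{4tL}\sum_{j\in\Loc_k}Z_jH_j$ into the degree-$f$ step polynomial from Fact~\ref{fact:polyD}, so it has total degree at most $f$ in $\vec Z$ (and moreover depends only on the disjoint block $\{Z_j\}_{j\in\Loc_k}$). Since $p_{\mathrm{AND}}$ has degree $r=11m$ in its $m$ arguments by Theorem~\ref{corrobust}, the composition $p_{\mathrm{AND}}(\widehat Q_1(\vec Z),\ldots,\widehat Q_m(\vec Z))$ has total degree at most $rf$ in $\vec Z$. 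The factor $Q_{\mathrm{rest}}$ is independent of $\vec Z$, and raising to the $\ell$th power multiplies total degree by $\ell$, so $K(\vec Z)$ has total degree at most $rf\ell$ in the $4tm$ variables $\vec Z$.

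Hence only multi-indices $\vec\beta=\{\beta_j\}_{j\in\Loc}$ with nonnegative integer entries satisfying $\sum_{j\in\Loc}\beta_j\le rf\ell$ can contribute, and the number of such multi-indices in $N\EqDef 4tm$ variables is $\binom{N+rf\ell}{N}$. Applying the standard estimate $\binom{N+D}{N}\le\bigl(e(1+D/N)\bigr)^N$ together with $e<3$ yields
\[
\binom{4tm+rf\ell}{4tm}\;\le\;\left(3+\frac{3rf\ell}{4tm}\right)^{4tm}\;=\;M,
\]
which combined with the subadditivity step above completes the proof. There is no serious conceptual obstacle; the main technical point is the degree accounting, in particular noting that each $\widehat Q_k(\vec Z)$ contributes degree only $f$ (and not $rf$) to $\vec Z$ because its argument is already linear in $\vec Z$, and choosing a binomial estimate loose enough to match the stated form of $M$ but tight enough to be useful when $rf\ell/(4tm)$ is large.
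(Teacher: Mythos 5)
Your proof is correct and follows essentially the same route as the paper: reduce to counting the monomials in the expansion \eqref{eq:opexpansion} via subadditivity of Schmidt rank, bound the total degree of $K(\vec Z)$ by $rf\ell$, and apply the binomial estimate $\binom{a}{b}\le(ea/b)^b$ with $e<3$. The only tiny imprecision is writing $|\Loc|=4tm$ where the paper has $|\Loc|\le 4tm$ (cf.\ \Eq{eq:sizeloc}), but since $\binom{p+q}{p}$ is increasing in $p$ this does not affect the bound.
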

\begin{proof}
  It suffices to show that the number of nonzero terms on the RHS of
  the expansion \eqref{eq:opexpansion} is at most $M$. Recall that
  each $\widehat{Q}_j$ is a polynomial of degree $f$ and $p_{AND}$
  is a polynomial of degree $r$. Therefore, the operator
  $p_{\mathrm{AND}}\left(\widehat{Q}_1(\vec{Z}),\widehat{Q}_2(\vec{Z}),
  \ldots, \widehat{Q}_m(\vec{Z})\right)$ has total degree
  of at most $rf$ in the $\vec{Z}$ variables and by
  definition \Eq{eq:wsymz}, the total degree of $K(\vec{Z})$ is at
  most $fr\ell$. Comparing with \Eq{eq:opexpansion} we see that
  \begin{align}
    \sum_{j\in \Loc} \beta_j \leq rf\ell
  \label{eq:degreeupperbound}
  \end{align}
  for any tuple $\vec{\beta}$ that appears on the right-hand side of
  \Eq{eq:opexpansion}.  Therefore, the number of nonzero terms in
  the expansion \eqref{eq:opexpansion} is upper bounded by the
  number of tuples of non-negative integers $\{\beta_j\}_{j\in
  \Loc}$ satisfying \Eq{eq:degreeupperbound}. 

  The number of such tuples is\footnote{Recall from elementary
  combinatorics that the number of $p$-tuples of non-negative
  integers $(c_1, c_2, \ldots, c_p)$ such that $\sum_{j=1}^{p} c_j
  \leq q$ is equal to ${p+q \choose p}$.} 
  \begin{align*}
    {|\Loc|+fr\ell \choose |\Loc|} \leq {4mt+fr\ell \choose 4mt} 
      \leq \left(e\cdot \frac{4mt+fr\ell}{4mt}\right)^{4mt}\leq M,
  \end{align*}
  which completes the proof. Here in the first inequality we used
  $|\Loc|\leq 4mt$ and in the second we used the fact that ${ a
  \choose b}\leq (e\cdot a/b)^b$.
\end{proof}

The natural next step is to upper bound
$\SR\big(\op(\vec{\beta})\big)$ for any $\vec{\beta}$ appearing in
\Eq{eq:opexpansion}. Note that $\op(\vec{\beta})$ can be expressed
as a linear combination of products of the operators taken from the
set $\{H_j: j\in \Loc\}\cup \{Q_{\mathrm{rest}}\}$.  For example, it
may contain the product $H_5 H_1 Q_{\mathrm{rest}} H_2
Q_{\mathrm{rest}} H_1 \cdots$. By definition, any such product only
appears in $\op(\vec{\beta})$ if the number of occurrence of $H_j$
is equal to $\beta_j$, and the number of occurrence of
$Q_{\mathrm{rest}}$ is equal to $\ell$. Equipped with this expansion
of $\op(\vec{\beta})$, we can try to upper bound its Schmidt rank by
the number of terms in the expansion multiplied by the maximum
Schmidt rank of any term. Unfortunately, this strategy does not
provide a useful upper bound on $\SR\big(\op(\vec{\beta})\big)$
because the number of terms in the expansion is too large.

Instead of expressing $\op(\vec{\beta})$ as a linear combination of
products of operators from the set $\{H_j: j\in \Loc\}\cup
\{Q_{\mathrm{rest}}\}$, we will show that $\op(\vec{\beta})$ can be
written as a linear combination of a relatively small number of
well-structured operators of a certain form described below. For
each of these well-structured operators there is a column label $k$
(which is close to $c$) such that the Schmidt rank of the operator
across the vertical cut $(k,k+1)$ is small. We will see that this in
turn implies a small Schmidt rank for $\op(\vec{\beta})$ across the
cut $(c,c+1)$ of interest.
 
For any $k\in \Loc$ and positive number $R$, we define the
aforementioned well-structured operators as follows:
\begin{align}
  K_{k}^{\leq R}(\vec{Z}) 
    \EqDef \sum_{\{\beta_j\}_{j\in \Loc}: 
      \beta_k\leq R}\op(\vec{\beta})
        \prod_{j\in \Loc}\br{Z_j}^{\beta_j},
\label{eq:Wkr}
\end{align}
which consists of all the terms in \Eq{eq:opexpansion} satisfying
the additional constraint $\beta_k \leq R$. The following lemma
shows how the Schmidt rank of $\op(\vec{\beta})$ is related to that
of one of these well-structured operators.
\begin{lemma}
\label{lem:SR2}
  Let
  \begin{align*}
    R\EqDef \frac{fr\ell}{2mt}.
  \end{align*}
  For any $\op(\vec{\beta'})$ in the
  expansion~\eqref{eq:opexpansion} there exists a column label $k\in
  \Ind$ and a complex vector $\vec{X}=\{X_j\}_{j\in \Loc}$ such that
  \begin{align}
  \label{eq:M2}
    \SR\big(\op(\vec{\beta'})\big)
      \le M \cdot \SR\left(K_{k}^{\leq R} (\vec{X})\right) ,
  \end{align}
 where $M$ is defined in \Eq{eq:M}.
\end{lemma}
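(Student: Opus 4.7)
The plan is to combine a pigeonhole argument that locates a column $k\in\Ind$ on which the multiplicity $\beta'_k$ is small, with a sparse polynomial interpolation that writes $\op(\vec{\beta'})$ as a linear combination of at most $M$ values of the operator-valued polynomial $K_k^{\leq R}(\vec{Z})$. Schmidt rank subadditivity then converts this into the inequality \Eq{eq:M2}.

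For the pigeonhole step, recall from the proof of \Lem{lem:SR2a} that every $\vec{\beta'}$ appearing in the expansion \Eq{eq:opexpansion} satisfies $\sum_{j\in\Loc}\beta'_j\le fr\ell$, since $K(\vec{Z})$ has total degree at most $fr\ell$ in the formal variables. A quick check of the definitions shows that $\Ind\subseteq\Loc$ (modulo a trivial boundary adjustment): any $j\in\Ind_i$ is a column lying inside the support of $Q_i$, and the neighbouring column $j+1$ is also in the support of $Q_i$, so the column Hamiltonian $H_j$ is a term of $h_i$ and hence $j\in\Loc_i$. Since $|\Ind|=2mt$ by \Eq{eq:ind}, averaging produces some $k\in\Ind$ with $\beta'_k\le fr\ell/(2mt)=R$. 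In particular, $\op(\vec{\beta'})$ is one of the operator coefficients of $K_k^{\leq R}(\vec{Z})$ defined in \Eq{eq:Wkr}.

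To extract this coefficient I would use polynomial interpolation. By the same counting as in \Lem{lem:SR2a}, $K_k^{\leq R}(\vec{Z})$ is a sum of at most $M$ distinct operator monomials in $\vec{Z}$. Choose $M$ points $\vec{X}^{(1)},\ldots,\vec{X}^{(M)}\in\mathbb{C}^{|\Loc|}$ generically enough that the $M\times M$ matrix of monomial values, restricted to the monomials that actually appear, is invertible; a Ben-Or--Tiwari-style substitution $\vec{X}^{(\alpha)}_j=z_\alpha^{d_j}$ with distinct scalars $z_\alpha$ and exponents $d_j$ spread so that the induced univariate exponents are distinct reduces this to a standard Vandermonde determinant. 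Inverting the resulting linear system yields scalars $c_1,\ldots,c_M$ with
\[
\op(\vec{\beta'})=\sum_{\alpha=1}^{M}c_\alpha\,K_k^{\leq R}(\vec{X}^{(\alpha)}).
\]
Schmidt rank is invariant under scalar multiplication and is subadditive, so
\[
\SR\bigl(\op(\vec{\beta'})\bigr)\le\sum_{\alpha=1}^{M}\SR\bigl(K_k^{\leq R}(\vec{X}^{(\alpha)})\bigr)\le M\cdot\max_{\alpha}\SR\bigl(K_k^{\leq R}(\vec{X}^{(\alpha)})\bigr),
\]
and taking $\vec{X}$ to be the maximizing $\vec{X}^{(\alpha^{\ast})}$ gives the desired bound.

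The main obstacle is the interpolation step: one has to verify that $M$ evaluation points actually suffice (rather than the much larger total-degree count $\binom{|\Loc|+fr\ell}{|\Loc|}$) and that a concrete choice exists for which the sparse-monomial Vandermonde matrix is invertible. This is routine in spirit but requires bookkeeping care, particularly because the restriction $\beta_k\le R$ baked into $K_k^{\leq R}(\vec{Z})$ must be preserved throughout, and the support of monomials that survive this restriction must still be contained in the collection counted by $M$.
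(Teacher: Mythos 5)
Your proposal is correct and follows essentially the same route as the paper: the same pigeonhole over $\Ind$ to find $k$ with $\beta'_k\le R$, the same observation that $K_k^{\leq R}(\vec{Z})$ has at most $M$ operator-monomial coefficients, and the same interpolation-plus-subadditivity conclusion. The only (minor) divergence is the invertibility argument: the paper shows $\det(G)$ is a not-identically-zero multinomial in the entries of $\vec{X}^{(1)},\ldots,\vec{X}^{(M)}$ (so generic evaluation points work), whereas you propose an explicit Ben-Or--Tiwari substitution $X^{(\alpha)}_j=z_\alpha^{d_j}$ reducing to a Vandermonde; both are standard and valid, and neither changes the structure of the proof.
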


\begin{proof}
  Consider any operator $\op(\vec{\beta'})$ appearing in
  \Eq{eq:opexpansion} and recall that $\sum_j \beta'_j\le rf\ell$.
  By \Eq{eq:ind}, the subset of column labels
  $\Ind\subset \Loc$ has size $|\Ind|=2mt$ and therefore
  \begin{align*}
    \sum_{j\in \Ind} \beta'_j 
      \leq \sum_{j\in \Loc}\beta'_j \leq rf\ell.
  \end{align*}
  It follows that there must exist some column label $k\in \Ind$
  such that
  \begin{align}
    \beta'_k \leq \frac{r f \ell}{|\Ind|}=\frac{rf\ell}{2mt} = R.
  \label{eq:betaprime}
  \end{align}
  So let $k$ be fixed to the column label satisfying the above, and
  consider the operator $K_k^{\leq R}(\vec{Z})$ defined in
  \Eq{eq:Wkr}. Note that since the tuple $\vec{\beta}'$ satisfies
  \Eq{eq:betaprime}, it appears as one of the terms in the sum in
  \Eq{eq:Wkr}.  We have the following:

  \begin{claim}
  \label{clm:invertmat} 
    There exists a collection of complex tuples $\vec{X}^{(1)},
    \vec{X}^{(2)}, \ldots, \vec{X}^{(M)}$ such that
    $\op(\vec{\beta}')$ is a linear combination (with complex
    coefficients) of the operators 
    \begin{align}
      K^{\leq R}_k(\vec{X}^{(1)}), K^{\leq R}_k(\vec{X}^{(2)}), 
        \ldots, K^{\leq R}_k(\vec{X}^{(M)}).
    \label{eq:lincomb}
    \end{align}
  \end{claim}

  \begin{proof}
    Let $\cT$ be the set of all tuples of nonnegative integers
    $\vec{\beta}=\{\beta_j\}_{j\in \Loc}$ such that 
    \Eq{eq:degreeupperbound} is satisfied and $\beta_k\leq R$. That
    is, 
    \begin{align*}
      \cT=\{\vec{\beta} =
        \{\beta_j\}_{j\in \Loc}: \sum_{j\in \Loc} 
          \beta_j\leq rf\ell \qquad \text{and } \beta_k\leq R \}.
    \end{align*}
    The set $\cT$ has size upper bounded as $|\cT|\leq M$ where $M$ is
    given by \Eq{eq:M}. Consider the following system of equations.
    \begin{align*}
      &K^{\leq R}_k(\vec{X}^{(1)})
        = \sum_{\vec{\beta}\in \cT} \op(\vec{\beta})
            \prod_{j\in \Loc}\br{X^{(1)}_j}^{\beta_j},\\
      &K^{\leq R}_k(\vec{X}^{(2)})
        = \sum_{\vec{\beta}\in \cT} \op(\vec{\beta})
            \prod_{j\in \Loc}\br{X^{(2)}_j}^{\beta_j},\\
      &\vdots \\
      &K^{\leq R}_k(\vec{X}^{(|\cT|)})
        = \sum_{\vec{\beta}\in \cT} \op(\vec{\beta})
            \prod_{j\in \Loc}\br{X^{(|\cT|)}_j}^{\beta_j} .
    \end{align*}
    We now show that there exists at least one choice of
    $\vec{X}^{(1)}, \ldots, \vec{X}^{(|\cT|)}$ such that this system
    of equations can be inverted to obtain $\op(\vec{\beta})$ as a
    linear combination of the operators appearing on the
    left-hand-side, for any $\vec{\beta}\in \cT$. This is sufficient
    to complete the proof, as $|\cT|\leq M$. 

    Consider the (square) matrix
    \begin{align*}
      G_{\alpha, \vec{\beta}}\EqDef 
        \prod_{j\in \Loc}\br{X^{(\alpha)}_j}^{\beta_j} 
          \qquad \alpha=1,2,\ldots, |\cT| \qquad \vec{\beta}\in \cT
    \end{align*}
    We show that this matrix has non-zero determinant for some
    choice of $\vec{X}^{(1)}, \ldots, \vec{X}^{(|\cT|)}$q. This
    implies the matrix is invertible and completes the proof. 

    Fix some order over $\vec{\beta}\in \cT$ and let
    $\vec{\beta}(\alpha)$ be the $\alpha$-th $\vec{\beta}$ in this
    order. We have
    \begin{align*}
      \det(G) = \sum_{\pi}(-1)^{sign(\pi)}
        \prod_{\alpha}G_{\alpha, \vec{\beta}(\pi(\alpha))} = \sum_{\pi}(-1)^{sign(\pi)}
        \prod_{\alpha}\br{\prod_{j\in \Loc}
          \br{X^{(\alpha)}_j}^{\beta(\pi(\alpha))_j}},  
    \end{align*}
    where $\pi$ is a permutation over the set $[1:|\cT|]$. We would
    like to show that there exists at least one choice of
    $\vec{X}^{(1)}, \ldots, \vec{X}^{(|\cT|)}$ such $\det(G)\neq 0$.
    To that aim, we consider $\det(G)$ as a multinomial over the
    variables $\vec{X}^{(1)}, \ldots, \vec{X}^{(|\cT|)}$ and show
    that it is not identically zero. Indeed, as the tuples
    $\vec{\beta}(\alpha)$ are distinct for different $\alpha$'s, it
    follows that for any two distinct permutations $\pi_1, \pi_2$,
    the multinomials $\prod_{\alpha}\br{\prod_{j\in
    \Loc}\br{X^{(\alpha)}_j}^{\beta(\pi_1(\alpha))_j}}$ and
    $\prod_{\alpha}\br{\prod_{j\in
    \Loc}\br{X^{(\alpha)}_j}^{\beta(\pi_2(\alpha))_j}}$ are
    distinct.  Therefore, $\det(G)$ is a sum of distinct
    multinomials with coefficient in $\{-1,1\}$, which implies in
    particular that $\text{det}(G)$ is not identically zero.
  \end{proof}
  Now Claim~\ref{clm:invertmat} implies in particular that
  $\op(\vec{\beta}')$ can be expressed as a linear combination of
  $M$ operators \Eq{eq:lincomb} and therefore has Schmidt rank upper
  bounded by $M$ times the maximum Schmidt rank of one of these
  operators. This establishes \Eq{eq:M2} and completes the proof of
  Lemma \ref{lem:SR2}.
\end{proof}
Combining \Lem{lem:SR2a} with \Lem{lem:SR2} we obtain the
following corollary:
\begin{corollary}
\label{cor:SR2}
  Let $R=\frac{fr\ell}{2mt}$. There exists a column label
  $k\in \Ind$ and a complex vector $\vec{X}=\{X_j\}_{j\in \Loc}$
  such that
  \begin{align*}
    \SR\big(K(\vec{Z})\big)
      \le \left(3+\frac{3fr\ell}{4mt}\right)^{8mt} 
        \SR\left(K_{k}^{\leq R} (\vec{X})\right).
  \end{align*}
  for all complex vectors $\vec{Z}$.  
\end{corollary}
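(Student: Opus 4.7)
The plan is to directly chain Lemma~\ref{lem:SR2a} and Lemma~\ref{lem:SR2}. The key observation is that the operator coefficients $\op(\vec{\beta})$ appearing in the expansion~\eqref{eq:opexpansion} are fixed operators independent of $\vec{Z}$, so the quantity $\max_{\vec{\beta}} \SR(\op(\vec{\beta}))$ is itself a $\vec{Z}$-independent number. This means Lemma~\ref{lem:SR2a} gives a uniform bound
\[
\SR(K(\vec{Z})) \le M \cdot \max_{\vec{\beta}} \SR(\op(\vec{\beta}))
\]
in terms of an intrinsic quantity that does not vary with $\vec{Z}$.

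From there I would let $\vec{\beta}^*$ be a tuple achieving the maximum above and feed it into Lemma~\ref{lem:SR2}. That lemma produces a column label $k\in \Ind$ and a complex vector $\vec{X}=\{X_j\}_{j\in \Loc}$ with
\[
\SR(\op(\vec{\beta}^*)) \le M \cdot \SR\left(K_{k}^{\le R}(\vec{X})\right),
\]
where $R=\frac{fr\ell}{2mt}$ as required. Substituting this into the previous inequality yields $\SR(K(\vec{Z})) \le M^2 \cdot \SR(K_{k}^{\le R}(\vec{X}))$ for every $\vec{Z}$, since neither $k$ nor $\vec{X}$ depend on $\vec{Z}$ (they only depend on the fixed $\vec{\beta}^*$). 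Finally, using the explicit form $M=\bigl(3+\tfrac{3rf\ell}{4tm}\bigr)^{4tm}$ from~\eqref{eq:M} one obtains $M^2=\bigl(3+\tfrac{3fr\ell}{4mt}\bigr)^{8mt}$, matching the stated prefactor.

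There is no real obstacle here: the corollary is essentially a bookkeeping step combining the two lemmas. The only subtle point worth flagging is the order of quantifiers, namely that the choice of $k$ and $\vec{X}$ must be made without reference to $\vec{Z}$ for the final statement to hold uniformly in $\vec{Z}$. This is immediate from the fact that $\vec{\beta}^*$ is defined via the $\vec{Z}$-independent maximum in Lemma~\ref{lem:SR2a}, and Lemma~\ref{lem:SR2} then produces $k$ and $\vec{X}$ as a function of $\vec{\beta}^*$ alone.
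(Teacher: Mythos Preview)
Your proposal is correct and matches the paper's own argument, which simply states that the corollary follows by combining Lemma~\ref{lem:SR2a} with Lemma~\ref{lem:SR2}. Your explicit handling of the quantifier order (choosing $\vec{\beta}^*$ independently of $\vec{Z}$, then obtaining $k$ and $\vec{X}$ from it) is a welcome clarification of a point the paper leaves implicit.
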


The last ingredient we will use to prove \Thm{thm:sr} is a bound on the
Schmidt rank of $K_k^{\le R}(\vec{X})$.
\begin{lemma}
\label{lem:SR3} 
  Let $k\in \Ind$ be a column label, $N$ be a positive integer, and
  $\vec{Z}=\{Z_j\}_{j\in \Loc}$ be a tuple of complex numbers. Then
  \begin{align}
    \SR(K_k^{\leq N} (\vec{Z}))
      \leq 2^{N+\ell} L^N d^{4N+6mtL} r^{3\ell}\left(3
        + \frac{3fr\ell}{2N}\right)^{2\ell+2N}.
  \label{eq:lemSR}
  \end{align}
\end{lemma}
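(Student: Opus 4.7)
The plan is to bound $\SR\bigl(K_k^{\leq N}(\vec{Z})\bigr)$ first across the cut $(k,k+1)$, exploiting the special structure of $k\in \Ind$, and then transfer the bound to the cut $(c,c+1)$ via a cut-moving argument that costs at most a factor of $d^{3mtL}$ since $|c-k|\leq 3mt$. The essential geometric fact I would use is that because $k\in \Ind$, there is a unique index $i^*\in[m]$ whose coarse-grained Hamiltonian $h_{i^*}$ contains $H_k$; for every $j\neq i^*$, $\widehat{Q}_j(\vec{Z})$ is supported on a contiguous block of columns lying entirely on one side of $(k,k+1)$ and hence has Schmidt rank $1$ across this cut.

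The first step is to apply the tensor-product decomposition of $p_{\mathrm{AND}}$ from \Thm{corrobust} to write
\begin{align*}
  K(\vec{Z}) = \sum_{\vec{i}^{(1)},\ldots,\vec{i}^{(\ell)}}
    \prod_{s=1}^{\ell}\left[\prod_{j=1}^{m}
      A_{i_j^{(s)}}\!\bigl(\widehat{Q}_j(\vec{Z})\bigr)\right]
        Q_{\mathrm{rest}},
\end{align*}
where each $\vec{i}^{(s)}=(i_1^{(s)},\ldots,i_m^{(s)})$ ranges over tuples of non-negative integers with $\sum_j i_j^{(s)}\leq 5m$. Since only the factors with $j=i^*$ depend on $Z_k$, I would further expand each $A_{i_{i^*}^{(s)}}\!\bigl(\widehat{Q}_{i^*}(\vec{Z})\bigr)$ as a sum of ordered products of operators in $\{H_j\}_{j\in\Loc_{i^*}}$ and regroup those sums by the number $a^{(s)}$ of $H_k$ factors they contain. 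The restriction $\beta_k\leq N$ in the definition of $K_k^{\leq N}$ then corresponds precisely to the constraint $\sum_s a^{(s)}\leq N$.

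The second step is the per-configuration Schmidt rank bound across $(k,k+1)$. Three contributions have to be tracked: (i) the factors $A_{i_j^{(s)}}\!\bigl(\widehat{Q}_j(\vec{Z})\bigr)$ for $j\neq i^*$ sit entirely on one side and contribute $1$; (ii) the operator $Q_{\mathrm{rest}}$ contributes at most $d^{O(tL)}$ since at most a constant number of its constituent projectors straddle the cut, and this contribution can be absorbed into the $d^{6mtL}$ cut-moving factor; (iii) the $H_k$ occurrences, summing to at most $N$, contribute at most $(Ld^4)^N=L^Nd^{4N}$ because $H_k=\sum_{j}P_{kj}$ is a sum of $L-1$ plaquette projectors each of Schmidt rank at most $d^4$ across $(k,k+1)$. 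Finally, a tuple-counting argument bounds the number of configurations: the outer sum over the $(\vec{i}^{(s)})_s$ gives the $r^{3\ell}$ factor, while a stars-and-bars count over the distribution of $H_k$-placements together with the ordering structure inside each $A_{i_{i^*}^{(s)}}\!(\widehat{Q}_{i^*})$ gives the $\left(3+\frac{3fr\ell}{2N}\right)^{2\ell+2N}$ factor, with the small $2^{N+\ell}$ factor absorbing the Chebyshev coefficients in the step-polynomial expansion.

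The principal technical obstacle is calibrating this expansion. A naive monomial-by-monomial expansion of $\prod_{s,j}A_{i_j^{(s)}}(\widehat{Q}_j)$ would produce a number of terms so large that the count dominates the per-term Schmidt rank, yielding a useless bound; conversely, keeping the product too grouped does not expose enough of the $H_k$ structure to control the Schmidt rank at all. The right balance, in the spirit of the polynomial-interpolation technique of \cite{AradKLV13}, is to expand only enough to isolate the $H_k$ occurrences (the operators that actually straddle $(k,k+1)$) while keeping the rest of each factor packaged on one side of the cut. The constraint $\beta_k\leq N$ is exactly what makes the resulting configuration count tractable, and verifying that the balance indeed produces the bound \eqref{eq:lemSR} is the main effort of this section.
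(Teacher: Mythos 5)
Your plan follows the paper's proof essentially step by step: bound the Schmidt rank across $(k,k+1)$ first, use the uniqueness of $u$ with $k\in\Ind_u$ to put all factors $A_{i_j^{(s)}}(\hQ_j)$ with $j\neq u$ on one side of that cut, expand only $\hQ_u$ enough to isolate the $H_k$ occurrences (paper: writing $h_u(\vec Z)=C+H_kZ_k+D$ with $C$ on the left, $D$ on the right), count configurations respecting $\beta_k\le N$, and finally pay $d^{6mtL}$ to move the cut to $(c,c+1)$. Two small misattributions are worth fixing: the Schmidt rank of $Q_{\mathrm{rest}}$ across $(k,k+1)$ is exactly $1$, not $d^{O(tL)}$ --- for $k\in\Ind_u$, every projector in $Q_{\mathrm{rest}}$ (including the odd-layer ones) is supported strictly to one side of $(k,k+1)$, so there is nothing to absorb into the cut-moving factor; and the $2^{N+\ell}$ factor does not come from Chebyshev coefficients but from the count $\binom{N+\ell}{\ell}\le 2^{N+\ell}$ of tuples $(T_1,\ldots,T_\ell)$ recording how the $\le N$ occurrences of $H_kZ_k$ are distributed across the $\ell$ powers of $\hQ_u$, while the $\left(3+\frac{3fr\ell}{2N}\right)^{2\ell+2N}$ factor counts the placements of the commuting $C,D$ blocks around them. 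Neither issue affects the correctness of the final bound.
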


\begin{proof}

  To bound $\SR(K_k^{\leq  N} (\vec{Z}))$, which is defined 
  with respect to the cut $(c,c+1)$, we will first 
  bound the Schmidt rank of $K_k^{\leq N}
  (\vec{Z})$ across the cut $(k,k+1)$, which we write as
  \begin{align}
    \SR_{k,k+1}(K_k^{\leq N} (\vec{Z})).
  \label{eq:srk}
  \end{align}
  Since the column $c$ sits in the middle of the $6mt-2t$ columns
  that support $Q_1, Q_2, \ldots, Q_m$ (see \Fig{fig:agsp}), it
  follows that the distance between $c$ and $k$ must not exceed
  $3mt$. Using the fact that for any
  operator $O$ we have
  \begin{align*}
    \SR_{c,c+1}(O)\leq (d^L)^{2|c-k|}\SR_{k,k+1}(O),
  \end{align*}
  which follows from the fact that the Hilbert space of each column
  has dimension $d^L$, we find that the Schmidt rank across the cut
  $(c,c+1)$
  is bounded as
  \begin{align}
    \SR(K_k^{\leq N} (\vec{Z})) \leq d^{6mtL}
      \SR_{k,k+1}(K_k^{\leq N} (\vec{Z})).
  \label{eq:sric}
  \end{align}

  Let us then proceed with bounding the the Schmidt rank across the 
  $(k,k+1)$ cut. By definition of the set $\Ind$, for the
  given column label $k\in \Ind$, there is a unique $u\in
  \{1,2,\ldots, m\}$ such that $k\in \Ind_u$.  Below, we decompose
  the polynomial operator
  $p_{\mathrm{AND}}\left(\widehat{Q}_1(\vec{Z}),
  \widehat{Q}_2(\vec{Z}),\ldots, \widehat{Q}_m(\vec{Z})\right)$,
  which appears in the definition of $K(\vec{Z})$ in
  \eqref{eq:wsymz}, in powers of $\widehat{Q}_u(\vec{Z})$. Using the
  fact that the operators $\widehat{Q}_1(\vec{Z}),
  \widehat{Q}_2(\vec{Z}),\ldots, \widehat{Q}_m(\vec{Z})$ commute
  with each other, \Thm{corrobust} implies
  \begin{align}
    p_{\mathrm{AND}}\left(\widehat{Q}_1(\vec{Z}),
      \widehat{Q}_2(\vec{Z}),\ldots, \widehat{Q}_m(\vec{Z})\right)
        &= \sum_{\substack{\{i_1, \ldots i_m\}\\ 
            i_1+\ldots +i_m \leq 5m}} 
          A_{i_1}\big(\widehat{Q}_1(\vec{Z})\big)
            \cdot A_{i_2}\big(\widehat{Q}_2(\vec{Z})\big)
              \cdot\ldots\cdot A_{i_m}\big(\widehat{Q}_m(\vec{Z})\big)
            \nonumber\\
    &= \sum_{i_u=0}^{5m}\sum_{s=0}^{5m-i_u}\mathcal{L}(s) A_{i_u}\big(\widehat{Q}_{u}(\vec{Z})\big) \mathcal{R}(s)
  \label{eq:leftright}
  \end{align}
  where 
  \begin{align*}
    \mathcal{L}(s) \EqDef \sum_{i_1+\ldots+ i_{u-1} =s} 
      A_{i_1}\br{\widehat{Q}_1(\vec{Z})}\cdot\ldots\cdot
        A_{i_{u-1}}\br{\widehat{Q}_{u-1}(\vec{Z})}
  \end{align*}
  is supported only on the columns $j<k$ to the left of $k$, and 
  \begin{align*}
    \mathcal{R}(s) \EqDef \sum_{i_{u+1}+\ldots+ i_m \leq 5m-s-i_u} 
      A_{i_{u+1}}\br{\widehat{Q}_{u+1}(\vec{Z})}\cdot\ldots\cdot 
        A_{i_m}\br{\widehat{Q}_m(\vec{Z})}
  \end{align*}
  is supported only on the columns $j\geq k+1$ 
  to the right of $k$. In particular, neither $\mathcal{L}(s)$ nor
  $\mathcal{R}(s)$ increases the Schmidt rank across the cut
  $(k,k+1)$.
  
  Next, recall from \Thm{corrobust} that each $A_i$ is a polynomial
  of degree $2i+1$. Expanding
  $A_{i_u}\big(\widehat{Q}_{u}(\vec{Z})\big)$ in powers of
  $\widehat{Q}_{u}(\vec{Z})$ in \Eq{eq:leftright} and using the fact
  that all operators commute, we see that
  $p_{\mathrm{AND}}\left(\widehat{Q}_1(\vec{Z}),\widehat{Q}_2(\vec{Z}),
  \ldots, \widehat{Q}_m(\vec{Z})\right)$ can be expressed as a
  linear combination of at most
  \begin{align*}
    5m\cdot 5m\cdot (2\cdot 5m+1) \leq (11m)^3=r^3
  \end{align*}
  terms of the form 
  \begin{align*}
    \left(\widehat{Q}_u(\vec{Z})\right)^a \mathcal{L}\;\mathcal{R}
  \end{align*}
  where $a$ is a non-negative integer, the operator $\mathcal{L}$ is
  supported only on columns $j<k$ and the operator $\mathcal{R}$ is
  supported only on columns $j>k+1$. Both $\mathcal{L}$ and
  $\mathcal{R}$ depend on $\vec{Z}$, but neither of them depend on
  the variable $Z_k$ corresponding to column $k$. Therefore 
  \begin{align*}
    K(\vec{Z})=\left(p_{\mathrm{AND}}\left(\widehat{Q}_1(\vec{Z}),
      \widehat{Q}_2(\vec{Z}),\ldots, 
      \widehat{Q}_m(\vec{Z})\right) Q_{\mathrm{rest}}\right)^\ell.
  \end{align*}
  can be expressed as a linear combination of at most $r^{3\ell}$
  terms of the form
  \begin{align}
  \label{eq:wsymterms}
    \left(\widehat{Q}_u(\vec{Z})\right)^{a_1} 
      \mathcal{L}^{(1)}\mathcal{R}^{(1)} Q_{\mathrm{rest}}
      \left(\widehat{Q}_u(\vec{Z})\right)^{a_2} 
      \mathcal{L}^{(2)}\mathcal{R}^{(2)} Q_{\mathrm{rest}}\ldots 
      \left(\widehat{Q}_u(\vec{Z})\right)^{a_\ell} 
      \mathcal{L}^{(\ell)}\mathcal{R}^{(\ell)} Q_{\mathrm{rest}},
  \end{align}
  corresponding to possibly different choices of operators
  $\{\mathcal{L}^{(j)}, \mathcal{R}^{(j)}\}$ and powers $\{a_j\}$
  satisfying
  \begin{align}
    \sum_{q=0}^{\ell} a_q\leq r\ell.
  \label{eq:sumofas}
  \end{align}
  By expanding each of the polynomials $\widehat{Q}_u(\vec{Z})$ we
  may expand each term \Eq{eq:wsymterms} as a polynomial in $Z_k$
  with operator coefficients. We are interested in $K_{k}^{\leq
  N}(\vec{Z})$ which includes only those terms with at most $N$
  powers of $Z_k$ (see the definition in \Eq{eq:Wkr}).
  In the following we fix a term \Eq{eq:wsymterms} (i.e., a choice
  of $\{\mathcal{L}^{(j)},\mathcal{R}^{(j)}\}$ and $\{a_j\}$) and
  bound the Schmidt rank of all such operators with at most $N$
  powers of $Z_k$ arising from it. Then we multiply by $r^{3\ell}$
  to obtain the desired upper bound on the Schmidt rank of
  $K_{k}^{\leq N}(\vec{Z})$.

  So let us fix a term \Eq{eq:wsymterms}. Now, $\hQ_u(\vec{Z})$ is a
  polynomial of degree $f$ in the subregion operator
  \begin{align*}
    h_u(\vec{Z})=\sum_{j\in \Loc_{u}}H_jZ_j.
  \end{align*}
  Let 
  \begin{align*}
    C\EqDef \sum_{j\in \Loc_{u}: j<k}H_jZ_j 
      \qquad \text{ and }\qquad 
      D\EqDef \sum_{j\in \Loc_{u}: j>k}H_jZ_j,
  \end{align*}
  so that $h_u(\vec{Z})=C+H_kZ_k+D$.  Since $[C,D]=0$, each
  degree-$(a_qf)$ polynomial $\left(\hQ_u(\vec{Z})\right)^{a_q}$
  appearing in \Eq{eq:wsymterms} is a linear combination of terms of
  the form
  \begin{align*}
    \br{C^{\alpha^{(0)}_q}D^{\beta^{(0)}_q}}H_{k}Z_k
      \br{C^{\alpha^{(1)}_q}D^{\beta^{(1)}_q}} H_{k}Z_k
        \ldots \br{C^{\alpha^{(T_q-1)}_q}
        D^{\beta^{(T_q-1)}_q}}H_kZ_k \br{C^{\alpha^{(T_q)}_q}
        D^{\beta^{(T_q)}_q}}
  \end{align*}
  where $0\leq T_q\leq a_qf$, and the nonnegative integers
  $\{\alpha^{(j)}_q, \beta^{(j)}_q\}$ satisfy $\sum_{j=0}^{T_q}
  \left(\alpha^{(j)}_q+\beta^{(j)}_q\right) \leq a_qf$.
  Equation~\eqref{eq:wsymterms} then expands into terms of the form
  \begin{align}
  \label{eq:SRreduct4}
    &\br{C^{\alpha^{(0)}_1}D^{\beta^{(0)}_1}}H_{k}Z_k
      \br{C^{\alpha^{(1)}_1}D^{\beta^{(1)}_1}}
      H_{k}Z_k\ldots \br{C^{\alpha^{(T_1)}_1}D^{\beta^{(T_1)}_1}} \br{\mathcal{L}^{(1)} \mathcal{R}^{(1)}}
        Q_{\mathrm{rest}}\cdot \nonumber\\
    &\br{C^{\alpha^{(0)}_2}D^{\beta^{(0)}_2}}H_{k}Z_k
      \br{C^{\alpha^{(1)}_2}D^{\beta^{(1)}_2}}
      H_{k}Z_k\ldots  \br{C^{\alpha^{(T_2)}_2}D^{\beta^{(T_2)}_2}}\br{\mathcal{L}^{(2)}\mathcal{R}^{(2)}}
        Q_{\mathrm{rest}}\cdot\ldots\nonumber\\ 
    &\br{C^{\alpha^{(0)}_\ell}D^{\beta^{(0)}_\ell}}H_{k}Z_k
      \br{C^{\alpha^{(1)}_\ell}D^{\beta^{(1)}_\ell}}
      H_{k}Z_k\ldots  \br{C^{\alpha^{(T_\ell)}_\ell}D^{\beta^{(T_\ell)}_\ell}}\br{\mathcal{L}^{(\ell)}\mathcal{R}^{(\ell)}}Q_{\mathrm{rest}}.
  \end{align}
  where
  \begin{align}
    \sum_{q=1}^{\ell} \sum_{j=0}^{T_{q}} 
      \left(\alpha^{(j)}_q+\beta^{(j)}_q\right)
        \leq \sum_{q=1}^{\ell} a_q f \leq f r \ell,
  \label{eq:sumalpha}
  \end{align}
  and in the second inequality we used \Eq{eq:sumofas}.  Since we
  are concerned with $K_{k}^{\leq N}(\vec{X})$, we only consider the
  terms of the form \eqref{eq:SRreduct4}, in which $H_kZ_k$ occurs
  at most $N$ times, that is,
  \begin{align}
    \sum_{q=1}^{\ell} T_q \leq N.
  \label{eq:Tconstraint}
  \end{align}
  Let us now count the number of such terms that satisfy the
  constraints Eqs.~(\ref{eq:sumalpha}, \ref{eq:Tconstraint}). There
  are ${N+\ell \choose \ell}\leq 2^{N+\ell}$ tuples $(T_1,\ldots,
  T_\ell)$ of nonnegative integers satisfying \Eq{eq:Tconstraint}.
  For a fixed tuple $(T_1,\ldots, T_\ell)$, note that the left-hand
  side of \Eq{eq:sumalpha} is a sum of at most
  $$2\sum_{q=1}^{\ell}(T_q+1)=2(N+\ell)$$
  nonnegative integers. Thus, for each tuple $(T_1,\ldots, T_\ell)$,
  the number of choices for these nonnegative integers
  $\{\alpha^{(j)}_q, \beta^{(j)}_q\}$ satisfying \Eq{eq:sumalpha} is
  at most
  \begin{align*}
    { 2(N+\ell)+fr\ell \choose 2(N+\ell)}
      \leq \left(e\cdot 
        \frac{2(N+\ell)+fr\ell}{2(N+\ell)}\right)^{2N+2\ell}
      \leq \left(3+\frac{3fr\ell}{2N}\right)^{2N+2\ell} .
  \end{align*}
  Each choice for $(T_1,\ldots, T_\ell)$ and $\{\alpha^{(j)}_q,
  \beta^{(j)}_q\}$ corresponds to exactly one operator as given in
  \Eq{eq:SRreduct4}. Note that the operator $H_k$ is a sum of at
  most $L$ projectors $P_{ij}$ which each have Schmidt rank at most
  $d^4$ across the cut $k,k+1$. Therefore the operator
  \Eq{eq:SRreduct4} has Schmidt rank at most $(Ld^4)^{N}$ across the
  cut $(k,k+1)$, as the term $H_k$ occurs at most $N$ times, and the
  operators `$\mathcal{L},\mathcal{R},C,D$' and $Q_{\mathrm{rest}}$
  do not increase the Schmidt rank. Collecting all the contributions
  to the Schmidt rank across the cut $(k,k+1)$, we find that
  \begin{align*}
    \SR_{k,k+1}(K_k^{\leq N} (\vec{Z}))
      \leq \underbrace{\left(r^{3\ell}\right)}_{\text{\# 
        of terms \Eq{eq:wsymterms}}} \cdot \quad 
     \underbrace{2^{N+\ell}\left(3+\frac{3fr\ell}{2N}
       \right)^{2N+2\ell}}_{\substack{\text{\# of 
        operators \Eq{eq:SRreduct4}}\\ 
       \text{ with $\leq N$ powers of $Z_k$}\\ 
       \text{arising from each term \Eq{eq:wsymterms}}}}
  \quad \cdot \underbrace{(Ld^4)^{N}.}_{\substack{\text{SR 
    of each op. \Eq{eq:SRreduct4} }\\ 
      \text{with $\leq N$ powers of $Z_k$}}}
  \end{align*}
   Plugging this into \Eq{eq:sric} we obtain the desired bound
   \Eq{eq:lemSR} on the Schmidt rank across the cut $(c,c+1)$.
\end{proof}

\begin{proof}[Proof of \Thm{thm:sr}]
  Combining Corollary~\ref{cor:SR2} and \Lem{lem:SR3} with
  $N=R=\frac{fr\ell}{2mt}$, we see that
  \begin{align}
    \SR(K(m,t,\ell)) &\leq 
      \left[ \left(3+\frac{3fr\ell}{4mt}\right)^{8mt}\right]
      \left[2^{\ell+\frac{fr\ell}{2mt}} L^{\frac{fr\ell}{2mt}}
        d^{\frac{2fr\ell}{mt}+6mtL}r^{3\ell}
      \left(3+3mt\right)^{2\ell +\frac{fr\ell}{mt}}\right] .
  \label{eq:3lems}
  \end{align}
  Here the two terms in square parentheses come from the
  corollary and lemma, respectively. Using \Eq{eq:frlbound} we see that
  \begin{align}
    2^{\ell}\left(3+3mt\right)^{2\ell+\frac{fr\ell}{mt}}
      d^{\frac{2fr\ell}{mt}+6mtL}L^{\frac{fr\ell}{2mt}} 
        & \leq 2^{\ell}(6mt)^{2\ell+mtL} d^{8mtL}L^{mtL/2}\nonumber\\
    &\leq (6mt)^{3\ell+mtL}(dL)^{8mtL} ,
  \label{eq:s2}
  \end{align}
  and using \Eq{eq:frlbound} again we get 
  \begin{align}
  \label{eq:s3}
    2^{\frac{fr\ell}{2mt}}\left(3+\frac{3fr\ell}{4mt}\right)^{8mt}
      \leq 2^{mtL/2}\left(3+\frac{3mtL}{4}\right)^{8mt}
      \leq \left(2^{1/16}\cdot 4\cdot mtL\right)^{8mtL} 
      \leq \left(6mtL\right)^{8mtL}.
  \end{align}
  Plugging the bounds Eqs.~(\ref{eq:s2},\ref{eq:s3}) into
  \Eq{eq:3lems}, we arrive at \Eq{SRmainbound} and complete the
  proof.
\end{proof}

%%%%%%%%%%%%%%%%%%%%%%%%%%%%%%%%%%%%%%%%%%%%%%%%%%%%%%%%%%%%%%%%%%%%%%%%5
\section{Proof of the subvolume law for a vertical cut}
\label{sec:arealaw1D}

We now prove \Thm{thm:subv_cut}.
\begin{proof}
  Let us begin by specifying choices for the positive integers $t,
  \ell$ and odd positive integer $m$ which determine the AGSP
  $K(m,t,\ell)$. We choose the coarse-graining parameter as follows:
  \begin{align}
    t=\left\lceil \frac{25m}{\sqrt{\gamma}}\right\rceil.
  \label{eq:tchoice}
  \end{align}
  With this choice, the bound on the shrinking factor $\Delta$ of
  $K(m,t,\ell)$ from \Eq{eq:Kpart1} can be simplified to
  \begin{align}
    \Delta\leq \Delta' \EqDef 3^{2\ell}e^{-2m\ell}.
  \label{eq:shrsimplified}
  \end{align}
  For future reference we note that since $m$ is a positive integer
  and $\gamma\leq 1$ we have
  \begin{align}
    t\leq \frac{26m}{\sqrt{\gamma}}.
    \label{eq:tupperbound}
  \end{align}
  We choose 
  \begin{align}
    \ell=\left \lfloor \frac{m^2t^2L}{fr}\right\rfloor
  \label{eq:ellchoice}
  \end{align}
  so that the condition \Eq{eq:frlbound} is satisfied. For future
  reference we note that
  \begin{align}
    (\ell+1)\geq \frac{m^2t^2L}{fr}
      = \frac{mt^2L}{11f}\geq \frac{mt^{3/2}\sqrt{L\gamma}}{55}
      \geq \frac{(25)^{3/2}}{55}
        \cdot \frac{m^{5/2}\sqrt{L}}{\gamma^{1/4}},
  \end{align} 
  where the second inequality uses
  $f= \lceil 4\sqrt{\frac{tL}{\gamma}}\rceil \leq
  5\sqrt{\frac{tL}{\gamma}}$. Since $25^{3/2}/55>2$ and
  $m,L,\gamma^{-1}\geq 1$ we see that
  \begin{align}
    \ell\geq \frac{2m^{5/2}\sqrt{L}}{\gamma^{1/4}}-1 
      \geq \frac{m^{5/2}\sqrt{L}}{\gamma^{1/4}}.
  \label{eq:ell_lowerb}
  \end{align}

  It remains to choose $m$. Let us choose it to ensure that the
  parameters of the AGSP $K(m,t,\ell)$ satisfy $D\cdot \Delta\leq
  1/2$ so that \Thm{thm:AGSParealaw} can be applied. 

  Here $D$ is the upper bound on $\mathrm{SR}(K(m,t,\ell))$ given by \Thm{thm:sr}
  and $\Delta$ is upper bounded in \Eq{eq:shrsimplified}. Using
  these bounds, plugging in $r=11m$, and taking logs we see that
  \begin{align}
    &D\cdot \Delta\leq \frac{1}{2} \quad \text{ if the following 
      condition holds:} \nonumber\\ 
    & 3\ell\log(66m^2t)+16mtL\log(6mtdL)-2m\ell+2\ell\log(3)
      \leq -\log(2) . \label{eq:ddconstraint}
  \end{align}
  We now choose
  \begin{align}
    m \EqDef \left \lceil \frac{10^4 L^{1/3}}{\gamma^{1/6}} 
      \log^{2/3}(dL\gamma^{-1})\right\rceil_{\mathrm{Odd}} ,
  \label{eq:mchoice}
  \end{align}
  where $\left \lceil x \right \rceil_{\mathrm{Odd}}$ denotes the
  smallest odd integer which is at least $x$ (recall that in the
  definition of $K(m,t,\ell)$, we require $m$ to be an odd positive
  integer). Note that since $\gamma\leq 1$, $L\geq 1$, and $d\geq 1$
  we have
  \begin{align}
    10^4\leq  m\leq \frac{2\cdot 10^4 L^{1/3}}{\gamma^{1/6}} 
      \log^{2/3}(dL\gamma^{-1}) .
  \label{eq:mbounds}
  \end{align}
  
  \begin{claim}  
  \label{claim:choices} 
    The chosen parameters $m,t,\ell$ given by
    Eqs.~(\ref{eq:mchoice}, \ref{eq:tchoice}, \ref{eq:ellchoice})
    satisfy the inequality \Eq{eq:ddconstraint}.
  \end{claim}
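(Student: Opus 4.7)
The goal is to verify the inequality \eqref{eq:ddconstraint}, which after rearrangement reads
\begin{align*}
3\ell\log(66m^2t)+2\ell\log(3)+16mtL\log(6mtdL)+\log(2)\leq 2m\ell.
\end{align*}
My plan is to split the left-hand side into two groups and show that each is bounded by $m\ell$. Concretely, it suffices to establish the two inequalities
\begin{align*}
3\log(66m^2t)+2\log(3)\leq m \qquad\text{and}\qquad 16mtL\log(6mtdL)+\log(2)\leq m\ell,
\end{align*}
since multiplying the first by $\ell$ bounds the $\ell$-proportional part of the left-hand side by $m\ell$, while the second handles the remaining terms. Throughout, I would make repeated use of the upper bound $t\leq 26m/\sqrt{\gamma}$ from \eqref{eq:tupperbound}, the lower bound $\ell\geq m^{5/2}\sqrt{L}/\gamma^{1/4}$ from \eqref{eq:ell_lowerb}, and the bounds $10^4\leq m\leq 2\cdot 10^4\,L^{1/3}\gamma^{-1/6}\log^{2/3}(dL\gamma^{-1})$ from \eqref{eq:mbounds}.

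For the first inequality, substituting the bound on $t$ gives $\log(66m^2t)=O(\log m+\log\gamma^{-1})$ with explicit constants, and both $\log m$ and $\log\gamma^{-1}$ can in turn be bounded by $O(\log(dL\gamma^{-1}))$ using the upper bound on $m$. Since $m\geq 10^4$, and in fact $m\geq 10^4\log^{2/3}(dL\gamma^{-1})$, the target inequality $3\log(66m^2t)+2\log 3\leq m$ follows by an elementary comparison with ample room to spare from the prefactor $10^4$.

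The second inequality is the main step and is where the specific form of $m$ in \eqref{eq:mchoice} is actually exploited. I would substitute $t\leq 26m/\sqrt{\gamma}$ into $16mtL\log(6mtdL)$ and use $\ell\geq m^{5/2}\sqrt{L}/\gamma^{1/4}$ to reduce the desired bound, after clearing common factors and absorbing the additive $\log 2$, to an inequality of the form
\begin{align*}
m^{3/2}\geq C\sqrt{L}\,\gamma^{-1/4}\log(6mtdL)
\end{align*}
for an explicit universal constant $C$. Taking the $2/3$-th power and bounding the nested logarithm by $\log(6mtdL)=O(\log(dL\gamma^{-1}))$ (again via \eqref{eq:mbounds} and \eqref{eq:tupperbound}) reduces this to $m\geq C' L^{1/3}\gamma^{-1/6}\log^{2/3}(dL\gamma^{-1})$ for some universal $C'$, which is precisely what \eqref{eq:mchoice} guarantees as long as $10^4\geq C'$.

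The only real difficulty is constant-tracking in this last step, since the logarithm contains $m$ and $t$ themselves and the reduction must be iterated once to replace them by explicit functions of $d, L, \gamma^{-1}$. Because $m$ grows polynomially in $L$ and $\gamma^{-1}$ while all positive contributions are merely polylogarithmic, the generous prefactor $10^4$ in \eqref{eq:mchoice} leaves an order-of-magnitude safety margin, so the verification is mechanical once the algebraic reduction is in place.
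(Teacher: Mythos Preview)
Your proposal is correct and follows essentially the same approach as the paper. The only cosmetic difference is in the bookkeeping: the paper first absorbs the small terms $2\ell\log 3+\log 2$ into $-2m\ell$ to reduce the target to $3\ell\log(66m^2t)+16mtL\log(6mtdL)\leq m\ell$ and then splits into two halves $m\ell/2$, whereas you keep those small terms and split into two pieces of size $m\ell$ each; and for the first sub-inequality the paper bounds $\gamma^{-1/2}\leq m^3$ directly (so everything becomes a function of $m$ alone) rather than routing through $\log(dL\gamma^{-1})$ as you do, but both routes work with plenty of room from the $10^4$ prefactor.
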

  
  The proof of the Claim is provided below. Let us now see how it
  implies the theorem. First consider the special case where the cut
  $(c,c+1)$ satisfies $c\mod 6t=2t$.  Since we have $D\cdot \Delta
  \leq D\cdot \Delta'\leq \frac{1}{2}$ we may apply
  \Thm{thm:AGSParealaw} which states that the entanglement entropy
  of $\ket{\Omega}$ across the cut $(c,c+1)$ is upper bounded by
  \begin{align}
    10\log(D) \leq 10\log\left(\frac{1}{2\Delta'}\right)
      = 10\log\left(\frac{3^{-2\ell}}{2} e^{2m\ell}\right)
      \leq 20m\ell.
  \label{eq:eec}
  \end{align}
  Now substituting $f\geq 4\sqrt{tL/\gamma}$, $r=11m$, and
  \Eq{eq:tupperbound} in \Eq{eq:ellchoice} gives
  \begin{align}
    \ell \leq \frac{m\sqrt{L} t^{3/2}}{44} \sqrt{\gamma}
      \leq \frac{26^{3/2}}{44} \frac{m^{5/2} \sqrt{L}}{\gamma^{1/4}}
      \leq \frac{4m^{5/2}\sqrt{L}}{\gamma^{1/4}}.
  \label{eq:lupperbound}
  \end{align}
  Plugging \Eq{eq:lupperbound} into \Eq{eq:eec} and using
  \Eq{eq:mbounds} gives 
  \begin{align}
    10\log(D)\leq 80\cdot \frac{m^{7/2}\sqrt{L}}{\gamma^{1/4}}
      \leq \frac{91\cdot 10^{15} L^{5/3}}{\gamma^{5/6}} 
      \log^{7/3}(dL\gamma^{-1}).
  \label{eq:Dterm}
  \end{align}
  This completes the proof of the theorem in the special case where
  $c\mod 6t=2t$. If $c\mod 6t \neq 2t$ then we find the nearest $c$
  that satisfies this property, losing an entanglement entropy of
  $3tL\log d$, by the subadditivity of entropy. Note that
  \begin{align}
    3tL\log(d)\leq \frac{78mL}{\sqrt{\gamma}}\log(d)
      \leq \frac{156\cdot 10^4 L^{4/3}}{\gamma^{2/3}}
        \log^{5/3}(dL\gamma^{-1}) 
      \leq \frac{156\cdot 10^4L^{5/3}}{\gamma^{5/6}}
        \log^{7/3}(dL\gamma^{-1})
  \label{eq:subadd}
  \end{align}
  where in the first inequality we used \Eq{eq:tupperbound}, in the
  second one we used the upper bound \Eq{eq:mbounds} and the fact
  that $\log(d)\leq \log(dL\gamma^{-1})$, and in the third
  inequality we used the facts that $\gamma\leq 1$ and $L\geq 1$.
  The entanglement entropy across the cut of interest is then at
  most
  \begin{align}
    3tL\log(d)+ 10\log(D) 
      \leq \frac{10^{17}L^{5/3}}{\gamma^{5/6}}
        \log^{7/3}(dL\gamma^{-1}),
  \label{eq:lastinequality}
  \end{align}
  where we used Eqs.~(\ref{eq:Dterm}, \ref{eq:subadd}), completing
  the proof.
\end{proof}

\begin{proof}[Proof of Claim \ref{claim:choices}]
  Note that for any $m\geq 2$ (Cf. \Eq{eq:mbounds}) and any
  $\ell\geq 1$ we have
  \begin{align}
    -2m\ell+2\ell\log(3)+\log(2) \leq -m\ell.
  \label{eq:m}
  \end{align}
  Thus it remains to show that 
  \begin{align}
    3\ell \cdot \log(66m^2t)+16mtL\log(6mtdL)-m\ell\leq 0.
  \label{eq:verify}
  \end{align}
  Below we show that
  \begin{align}
    3\ell\log(66m^2t)\leq \frac{m\ell}{2} 
      \qquad \text {and} \qquad 16mtL\log(6mtdL)
        \leq \frac{m\ell}{2},
  \label{eq:ineq}
  \end{align}
  from which~\eqref{eq:verify} follows directly.

  It remains to establish \Eq{eq:ineq}. The first part follows using
  \Eq{eq:tupperbound} and \Eq{eq:mbounds} which give
  \begin{align}
    3\log(66m^2t)\leq 3\log(1716m^3\gamma^{-1/2})\leq 3\log(1716 m^6)\leq m/2
  \end{align}
  where in the second inequality we used the fact that
  $\gamma^{-1/2}\leq m^3$ and in the third inequality we used the
  fact that $3\log(1716m^6)\leq m/2$ for $m\geq 10^4$. The fact that $\gamma^{-1/2}\leq m^3$ follows from our definition
  of $m$ in \eqref{eq:mchoice}, which implies $m\le \frac{10^4
  L^{1/3}}{\gamma^{1/6}} \log^{2/3}(dL\gamma^{-1})\le
  \gamma^{-1/6}$.
  To establish
  the second part of \Eq{eq:ineq}, we use \Eq{eq:ell_lowerb} and
  \Eq{eq:mchoice} to get
  \begin{align}
    \frac{m\ell}{2}\geq \frac{m^{7/2}\sqrt{L}}{2\gamma^{1/4}}
      \geq \frac{(10^{4})^{7/2}}{2} 
        \cdot\frac{L^{5/3}}{\gamma^{5/6}} \log^{7/3}(dL\gamma^{-1}).
  \label{eq:ml}
  \end{align}
  Also note, using Eqs.~(\ref{eq:tupperbound},\ref{eq:mbounds}),
  that
  \begin{align}
    mtL\leq 26m^2L\gamma^{-1/2}
      \leq \frac{104\cdot 10^{8}L^{5/3}}{\gamma^{5/6}} 
        \log^{4/3}(dL\gamma^{-1})
  \label{eq:mtl}
  \end{align}
  and therefore
  \begin{align}
    16mtL\log(6mtdL)&\leq 16mtL
      \left( \log\br{\frac{dL^{5/3}}{\gamma^{5/6}}}
        + \log(6\cdot 104\cdot 10^{8})
        + \frac{4}{3}\log(\log(dL\gamma^{-1}))\right)\nonumber\\
     &\leq 16mtL \left(3\log(dL\gamma^{-1})+11
      + \frac{4}{3}\log(dL\gamma^{-1})\right)\nonumber\\
     &\leq (16\cdot 16)mtL \log(dL\gamma^{-1})\nonumber\\
     &\leq \frac{256\cdot 104\cdot 10^8L^{5/3}}{\gamma^{5/6}}
       \log^{7/3}(dL\gamma^{-1}),
  \label{eq:mt}
  \end{align}
  where in the first and last steps we used \Eq{eq:mtl}. Combining
  Eqs.~(\ref{eq:ml}, \ref{eq:mt}) and using the fact that
  $256\cdot104\cdot10^8 < 10^{14}/2$ establishes the second part of
  \Eq{eq:ineq} and completes the proof.
\end{proof}

%%%%%%%%%%%%%%%%%%%%%%%%%%%%%%%%%%%%%%%%%%%%%%%%%%%%%%%%%%%%%%%%%%%%%
\section{Subvolume law for rectangular regions}
\label{sec:arealawrectangle}

In this Section we consider bipartitions of the 2D grid into a
rectangular region and its complement (see \Fig{fig:cut} (b)) and
prove \Thm{thm:subr_cut}. Since the proof closely follows that of
\Thm{thm:subv_cut}, we shall describe the (minor) modifications
needed. 

The main observation that we will need is that the construction of
the AGSP $K(m,t,\ell)$ and the proof of \Thm{thm:subv_cut} are
essentially one-dimensional, as they are entirely based upon the
expression \Eq{eq:twoloc} for the Hamiltonian as a 1D
nearest-neighbor chain of columns.  In particular, we may reproduce
the proofs and definitions in Sections
\ref{sec:agsp}-\ref{sec:arealaw1D} to bound the entanglement entropy
for any bipartition of the 2D grid with the following properties:
\begin{description}
  \item [1.] We can partition the qudits of the 2D grid into 
    subsets $S_1,S_2, \ldots, $ such that the Hamiltonian takes the
    form $H=\sum_{i} H_i$, where $H_i$ is a sum of projectors which
    act nontrivially only on subsets $S_i$ and $S_{i+1}$.
    
  \item [2.] The positive integer $L$ is an upper bound on the number 
    of qudits in each subset, and on the number of projectors in
    each nearest-neighbor term $H_i$.
    
  \item [3.] The bipartition of interest corresponds to a bipartition
    separating subsets $S_i$ with $i\leq c$ from those with $i\geq
    c+1$.
\end{description}
Under these conditions we obtain an upper bound
$\frac{CL^{5/3}}{\gamma^{5/6}} \log^{7/3}(dL\gamma^{-1})$ on the
entanglement entropy of the ground state, for some universal
constant $C>0$. 

Looking more closely, note that \textbf{1., 3.} allow us to define
the coarse grained projectors and AGSP as in \Sec{sec:agsp} and the
proof then only requires the following slightly weaker version of
condition \textbf{2.} which concerns only a region of $O(mt)$ 
subsets $\{S_i\}$ of the qudits centered around the cut.

\begin{description}
  \item[$\mathbf{2^{\prime}}.$] Let $J$ be the set of positive 
    integers $i$ such that $S_i$ intersects the support of the
    coarse-grained projectors $Q_1,Q_2,\ldots, Q_m$ centered around
    the cut. Then $L$ is an upper bound on the number of qudits in
    any subset $S_i$ with $i\in J$, and an upper bound on the number
    of projector terms in $H_i$ whenever $i\in J$.
\end{description}

To establish \Thm{thm:subr_cut} we will show that conditions
$\mathbf{1.,2'.,3.}$ can be satisfied by a decomposition of the 2D
grid into concentric bands as shown in \Fig{fig:transformation} (c).

\begin{theorem}[\textbf{Subvolume scaling for a rectangular cut}]
\label{thm:subr_cut2}
  Let $\ket{\Omega}$ be the unique ground state of a
  frustration-free Hamiltonian \Eq{eq:twoloc} on an $n\times L$ grid
  of qudits with local dimension $d$. Its entanglement entropy
  across a rectangular cut with the inner region $R\EqDef \{a+1,
  \ldots a+A \}\times \{b+1, b+2, \ldots b+B\}$ is at most 
  \begin{align}
    \frac{10^{17}(4A+4B)^{5/3}}{\gamma^{5/6}} 
      \log^{7/3}(4d(A+B)\gamma^{-1}).
  \label{eq:rectangle}
  \end{align}
\end{theorem}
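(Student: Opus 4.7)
The plan is to reduce Theorem \ref{thm:subr_cut2} to the one-dimensional machinery already established for Theorem \ref{thm:subv_cut}, using the observation (already laid out by the authors before the theorem) that all that matters for the AGSP construction and its analysis are the abstract conditions \textbf{1.}, \textbf{$2'.$}, \textbf{3.}. First I would organize the $n \times L$ grid as a 1D chain of concentric rectangular bands around the inner rectangle $R = \{a+1,\ldots,a+A\}\times\{b+1,\ldots,b+B\}$. Concretely, I index bands by integers so that $S_c$ is the innermost ring of $R$ adjacent to $\partial R$, $S_{c-1},S_{c-2},\ldots$ are the successive rings marching inward inside $R$ (until $R$ is exhausted), and $S_{c+1},S_{c+2},\ldots$ are the successive rings marching outward through $\bar R$ (each defined by Chebyshev distance to $\partial R$). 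By construction the cut of interest is exactly between $S_c$ and $S_{c+1}$, which gives condition~\textbf{3.}

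Next I would verify condition~\textbf{1.} Each plaquette projector $P_{ij}$ is supported on a $2\times 2$ block of qudits, so its support lies in the union of at most two consecutive bands $S_i \cup S_{i+1}$ (two qudits in bands whose index differs by $\geq 2$ have Chebyshev distance $\geq 2$ and cannot share a plaquette). Grouping all plaquette projectors whose support crosses the $(i,i+1)$ band boundary into a single term $H_i$ gives the required nearest-neighbor decomposition $H = \sum_i H_i$. For condition~\textbf{$2'.$}, observe that the bands touched by the coarse-grained projectors $Q_1,\ldots,Q_m$ lie within band-distance $O(mt)$ of $\partial R$, and each such band is a rectangular frame whose perimeter is at most $2(A+B)+O(mt)$. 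Setting $L_{\mathrm{eff}} \EqDef 4(A+B)$, the same choices \eqref{eq:mchoice}, \eqref{eq:tchoice} as before give $mt = \tilde O(L_{\mathrm{eff}}^{2/3}\gamma^{-5/6}) \ll L_{\mathrm{eff}}$ in the interesting regime; the small-$L_{\mathrm{eff}}$ regime is handled trivially by the volume-law entropy bound $|\partial R|\cdot O(\log d)$, which is already dominated by the right-hand side of \eqref{eq:rectangle}.

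With these three conditions in hand, the construction of the AGSP $K(m,t,\ell)$ in Section~\ref{sec:agsp}, the shrinking-factor bound (Theorem~\ref{thm:shrink}), the Schmidt-rank bound (Theorem~\ref{thm:sr}), and the final parameter optimization (Section~\ref{sec:arealaw1D}) all carry over verbatim, with $L$ replaced throughout by $L_{\mathrm{eff}} = 4(A+B)$. Note also that the local-gap hypothesis is unaffected by the regrouping, since $\gamma$ is defined as the minimum local gap over contiguous subsets of plaquettes and grouping those plaquettes into band-indexed $H_i$ terms does not alter the underlying Hamiltonian. Plugging $L_{\mathrm{eff}} = 4(A+B)$ into the bound of Theorem~\ref{thm:subv_cut} reproduces exactly \eqref{eq:rectangle}.

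The main obstacle is purely geometric rather than analytic: one has to give a clean, unambiguous description of the bands $\{S_i\}$ near the corners of $R$ (where two sides of the rectangular frame meet and the simple ``ring'' picture has to be defined carefully), and verify that after this grouping every plaquette projector really does sit across at most one band boundary. Once this bookkeeping is in place, no new estimates are required beyond those already proved in Sections~\ref{sec:agsp}--\ref{sec:arealaw1D}, and only the mild re-indexing $L\mapsto L_{\mathrm{eff}}$ remains.
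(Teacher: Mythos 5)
Your proposal is correct and follows essentially the same route as the paper: decompose the grid into concentric rectangular bands around $R$, verify conditions \textbf{1.}, \textbf{$2'.$}, \textbf{3.} with $L$ replaced by $L_{\mathrm{eff}}=4(A+B)$, invoke the machinery of \Thm{thm:subv_cut}, and fall back on the trivial volume bound when $mt$ is not small compared to $L_{\mathrm{eff}}$. The one piece of the ``geometric bookkeeping'' you flag but do not carry out is handled in the paper by first padding the grid with ancilla qudits (with new local terms that preserve $\gamma$ and tensor the ground state with $\ket{0}^{\otimes N_{\mathrm{anc}}}$) so that $R$ is centered and the concentric bands are well-defined rectangular frames all the way out; the paper also pins down the innermost band explicitly according to the parity of $B$. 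These are exactly the technical steps needed to make your band description rigorous, but they introduce no new analytic content, so your reduction is sound.
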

\begin{proof}[Proof outline]

Without loss of generality, assume that $A>B$. For convenience, we
shall consider a larger rectangular 2D grid obtained by adding
ancilla qudits to ensure the following (see
\Fig{fig:transformation}):
\begin{itemize}
  \item The lattice is a rectangle of dimensions 
    $(A+2n')\times (B+2n')$, for some large positive integer $n'$.
    As before (see the remark after \Eq{def:Kmtl}) we will need the
    system size $n'$ to be sufficiently large in order for our AGSP
    of interest to be well defined.
    
  \item The region $R$ is centered with respect to the lattice. 
\end{itemize}
We add local terms to the Hamiltonian $H$ for each new plaquette, in
such a way that (a) the local spectral gap $\gamma$ is unchanged and
(b) the new Hamiltonian has a unique ground state
$\ket{\Omega}\otimes \ket{0}^{\otimes N_\mathrm{anc}}$ where
$N_{\mathrm{anc}}$ is the number of ancillary qudits added to the
grid. Note that the entanglement entropy of the ground state across
the given cut is therefore unchanged.  The new terms added to the
Hamiltonian are as follows: for each plaquette with $q<4$ old qudits
from original $n\times L$ grid and $4-q$ new ancilla qudits, we add
the projector $\id^{\otimes q}\otimes (\id - \ketbra{0}^{\otimes
4-q})$.

Now, as shown in \Fig{fig:transformation} (c), we group the
vertices of the lattice into concentric bands. Let the bands be
indexed by positive integers in increasing order, from smallest to
largest. The smallest band is the yellow rectangle in Figure
\ref{fig:transformation} (c), of dimensions $(A- B+1)\times 1$ if
$B$ is odd and $(A-B+2)\times 2$ if $B$ is even. We may then write
the Hamiltonian $H$ as 
\begin{align}
  H=\sum_{i}H'_{i}
  \label{eq:bandH}
\end{align}
where $H'_i$ contains all terms of $H$ which acts nontrivially
between the $i$th and $i+1$th band. Viewing \Eq{eq:bandH} as a 1D
chain of bands, we are interested in the entanglement entropy of the
ground state across the given cut separating the $c$th and $c+1$th
band, where
\begin{align*}
  c\EqDef \left\lceil\frac{B-2}{2}\right\rceil
\end{align*}
The decomposition \Eq{eq:bandH} therefore satisfies conditions
$\mathbf{1., 3.}$ defined above with respect to the partition of the
grid into bands. However, note that the number of qudits in the
$i$th band and the number of local projectors in each term $H'_i$
increases with the index $i$. Previously, for the 1D chain of
columns described by \Eq{eq:twoloc}, each column consisted of $L$
qudits and each local term $H_i$ contained at most $L$ local
projectors. Now, from Figure \ref{fig:transformation} (c), it can be
noted that the number of qudits in the $i+1$th band is $8$ more than
the number in the $i$th band. Similarly, the number of projectors in
$H'_{i+1}$ is at most $8$ more than the number of projectors in
$H'_{i}$. Write
\begin{align*}
  L_0=2(A+B)
\end{align*}
for the number of projectors in the term $H'_c$ which crosses the
cut of interest, and note that the number of qudits in the $c$th
band is $L_0-4$.

Now consider an operator $K(m,t,\ell)$ and choices of $m,t,\ell$
defined exactly the same way as in Sections \ref{sec:agsp} and
\ref{sec:arealaw1D}, but with the replacements $H_i \rightarrow
H'_i$, ``column" $\rightarrow$ ``band", and $L\rightarrow
L'=4(A+B)$. Note that with these replacements, the coarse grained
projectors $Q_j$ for $j=1,2,\ldots, m$ have support contained in a
contiguous region around the cut consisting of the bands
\begin{align}
  i\in \{c- 3mt+t+1,\ldots, c,c+1,\ldots, \ldots c+ 3mt-t\}.
\label{eq:interval}
\end{align}
The number of qudits in each of the bands \Eq{eq:interval} is at
most 
\begin{align}
  L_0-4+8(3mt-t)\leq 2(A+B)+24mt
\label{eq:bandupper}
\end{align}
and the number of local Hamiltonian terms in $H_i$ for $i$ in the
set \Eq{eq:interval} is also upper bounded by the right-hand side of
\Eq{eq:bandupper}. As long as $2(A+B)+24mt = L'/2+24mt$ is at most $L'$ (given the prescribed
choices of $m$ and $t$), the Hamiltonian \Eq{eq:bandH} satisfies
conditions $\mathbf{1.,2'.,3.}$ with $L\rightarrow L'$ and the proof
of \Thm{thm:subv_cut} goes through exactly the same as before. In
this case we obtain the bound \Eq{eq:rectangle} on the entanglement
entropy which is just the right-hand side of \Eq{eq:lastinequality}
with the replacement $L\rightarrow L'$.  If, on the other hand, we
find that the prescribed choices of $m$ and $t$ lead to the opposite
inequality
\begin{align*}
  L'/2<24mt,
\end{align*}
then, substituting Eqs.~(\ref{eq:tupperbound},~\ref{eq:mbounds})
with $L\rightarrow L'$, we get
\begin{align*}
  (L')^2< 48 mtL'\leq 48\cdot 26m^2L'\gamma^{-1/2}
    \leq \frac{5\cdot 10^{11}L^{5/3}}{\gamma^{5/6}} 
      \log^{4/3}(dL\gamma^{-1})
\end{align*}
In this case the trivial volume bound upper bounds the entanglement
entropy as
\begin{align*}
  (2(A+B))^2\log(d)\leq L'^2 \log(d)
    \leq \frac{5\cdot 10^{11}L^{5/3}}{\gamma^{5/6}} 
      \log^{7/3}(dL\gamma^{-1}),
\end{align*}
completing the proof.

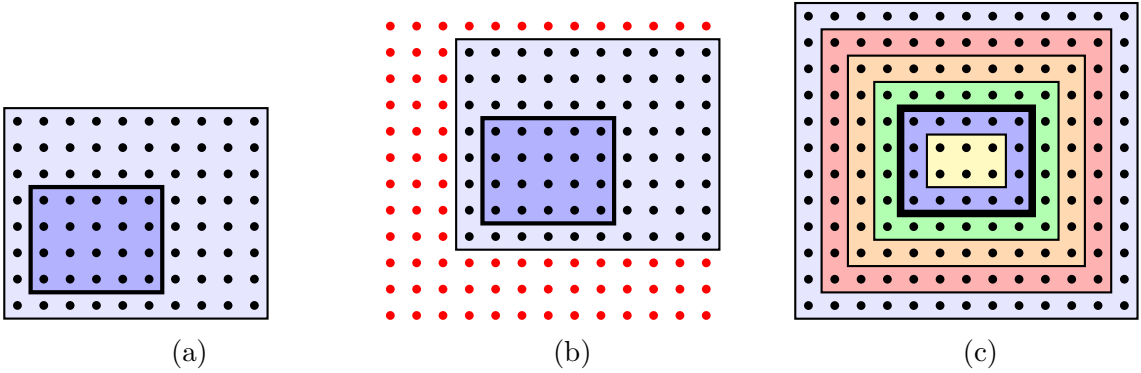
\begin{figure}
\centering
\begin{subfigure}[b]{0.3\textwidth}

\begin{tikzpicture}[xscale=0.35,yscale=0.35]

% original lattice and partition.

\draw [fill=blue!10!white, thick] (0.5,0.5+2) rectangle (10.5, 8.5+2);

\draw [fill=blue!30!white, ultra thick] (1.5,1.5+2) rectangle (6.5,5.5+2);

\foreach \i in {1,...,10}
{
\foreach \j in {1,...,8}
   \draw (\i, \j+2) node[circle, fill=black, scale=0.3]{};
}
\end{tikzpicture}
\caption{}
\end{subfigure}
\begin{subfigure}[b]{0.3\textwidth}
\begin{tikzpicture}[xscale=0.35,yscale=0.35]
%adding ancillas and trivial hamiltonian.

\draw [fill=blue!10!white, thick] (0.5,0.5-9) rectangle (10.5, 8.5-9);

\draw [fill=blue!30!white, ultra thick] (1.5,1.5-9) rectangle (6.5,5.5-9);

\foreach \i in {1,...,10}
{
\foreach \j in {1,...,8}
   \draw (\i, \j-9) node[circle, fill=black, scale=0.3]{};
}

\foreach \i in {-2,...,0}
{
\foreach \j in {-2,...,8}
   \draw (\i, \j-9) node[circle, fill=red, scale=0.3]{};
}

\foreach \i in {1,...,10}
{
\foreach \j in {-2,...,0}
   \draw (\i, \j-9) node[circle, fill=red, scale=0.3]{};
}

\foreach \i in {-2,...,10}
   \draw (\i, 0) node[circle, fill=red, scale=0.3]{};
\end{tikzpicture}
\caption{}
\end{subfigure}
\hspace{0.2cm}
\begin{subfigure}[b]{0.3\textwidth}

\begin{tikzpicture}[xscale=0.35,yscale=0.35]

\draw [thick, fill=blue!10!white] (0.5, 0.5) rectangle (13.5,12.5);

\draw [thick, fill=red!30!white] (1.5,1.5) rectangle (12.5,11.5);

\draw [thick, fill=orange!30!white] (2.5,2.5) rectangle (11.5,10.5);

\draw [thick, fill=green!30!white] (3.5,3.5) rectangle (10.5,9.5);

\draw [line width=1mm, fill=blue!30!white] (4.5,4.5) rectangle (9.5,8.5);

\draw [thick, fill=yellow!30!white] (5.5,5.5) rectangle (8.5,7.5);

\foreach \i in {1,...,13}
{
\foreach \j in {1,...,12}
   \draw (\i, \j) node[circle, fill=black, scale=0.3]{};
}
\end{tikzpicture}
\caption{}
\end{subfigure}
  \caption{(a) \small The original lattice and region $R$. (b) Added
  ancilla qudits (in red) are used to transform the lattice into one
  of similar shape, such that $R$ is centered. (c) The lattice can
  be divided into a family of concentric rectangular bands. The cut
  bipartitioning the lattice into the region $R$ and its complement
  is shown in bold.  \label{fig:transformation}}
\end{figure}

\end{proof}

%%%%%%%%%%%%%%%%%%%%%%%%%%%%%%%%%%%%%%%%%%%%%%%%%%%%%%%%%%%%%%%%%%%%
\section{Acknowledgments} 

IA acknowledges the support of the Israel Science Foundation (ISF) under the Individual Research Grant No.~1778/17. DG thanks Justin Thaler for helpful discussions about polynomials. DG acknowledges the support of the Natural Sciences and Engineering Research Council of Canada (NSERC) under Discovery grant number RGPIN-2019-04198.  IQC and PI are supported in part by the Government of Canada and the Province of Ontario.

\bibliographystyle{abbrv} 
\bibliography{references1}

%%%%%%%%%%%%%%%%%%%%%%%%%%%%%%%%%%%%%%%%%%%%%%%%%%%%%%%%%%%%%%%%%%%%
%                                                                  %
%                      A P P E N D I C E S                         %
%                                                                  %
%%%%%%%%%%%%%%%%%%%%%%%%%%%%%%%%%%%%%%%%%%%%%%%%%%%%%%%%%%%%%%%%%%%%

\appendix

%%%%%%%%%%%%%%%%%%%%%%%%%%%%%%%%%%%%%%%%%%%%%%%%%%%%%%%%%%%%%%%%%%%%%
\section{Robust AND polynomial}
\label{append:ANDpoly}

We now provide a proof of \Thm{corrobust}, following \Ref{Sherstov12}.
\begin{proof}[Proof of \Thm{corrobust}]
For $t\in \mathbbm{R}\setminus{\{0\}}$ let $\sign(t)=t/|t|$ denote the sign of $t$. We may equivalently write
\begin{align*}
\sign(t)=\frac{t}{\sqrt{1+(t^2-1)}},
\end{align*}
and we may then use the binomial series to expand the denominator (see, e.g., Eq. 3.2 of \cite{Sherstov12}). This gives the following series expansion which converges for $0<|t|< \sqrt{2}$ 
\begin{align}
\sign(t)=t\sum_{i=0}^{\infty} \left(-\frac{1}{4}\right)^i {2i\choose i}\br{t^2-1}^i \qquad \quad 0<|t|< \sqrt{2}.
\label{eq:signfunc}
\end{align}

Now consider the following robust function for the Boolean monomial:
$$\inte(x) = \frac{1+\sign(2x-1)}{2} = \begin{cases} 1 & \text{ if }  x>\frac{1}{2}\\  0 & \text{ if } x< \frac{1}{2}  \end{cases}$$
Define
\begin{align*}
S=\left\{ x\in \mathbbm{R}: 0<|2x-1| < \sqrt{2}\right\}.
\end{align*}
For $x\in S$ we may use \Eq{eq:signfunc} and separate out the $i=0$ term to express $\inte(x)$ as 
\begin{align*}
\inte(x) =x+\frac{2x-1}{2}\sum_{i=1}^{\infty} {2i\choose i}\br{x(1-x)}^i=\sum_{i=0}^{\infty} A_i(x)
\end{align*}
where we define polynomials
$$A_0(x)\EqDef x \quad \text{and} \quad A_{i}(x) \EqDef \frac{2x-1}{2}{2i\choose i}\br{x(1-x)}^i \text{ for } i\geq 1.$$  Observe that $A_i$ has real coefficients and degree $2i+1$, for all $i\geq 0$. For $(x_1, x_2, \ldots x_m) \in S^m$,
\begin{eqnarray}
\label{eq:prodint}
\inte(x_1)\cdot \inte(x_2)\ldots \inte(x_m) &=& \sum_{i_1, i_2, \ldots i_m} A_{i_1}(x_1)A_{i_2}(x_2)\ldots A_{i_m}(x_m)\nonumber\\
&=&\sum_{n=0}^{\infty}\sum_{i_1, i_2, \ldots i_m: i_1+\ldots+ i_m=n} A_{i_1}(x_1)A_{i_2}(x_2)\ldots A_{i_m}(x_m)\nonumber\\
&\EqDef& \sum_{n=0}^{\infty} \xi_{n}(x_1, \ldots x_m).
\end{eqnarray}
Below we shall establish the following claim:
\begin{claim}
\label{clm:xierror}
For $(x_1, x_2,\ldots x_m)\in \br{\Br{-\frac{1}{20}, \frac{1}{20}} \cup [1-1/20,1+1/20]}^m$,
\begin{equation*}
|\xi_n(x_1, x_2, \ldots x_m)| \leq 3^m \br{\frac{3}{5}}^n.
\end{equation*}
\end{claim}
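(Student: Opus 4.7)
The strategy is to derive a uniform term-wise bound on each $|A_i(x)|$ over the relevant domain and then use elementary counting to bound the sum defining $\xi_n$. First I would note that both intervals $[-1/20,1/20]$ and $[1-1/20,1+1/20]$ lie inside the convergence region $|2x-1|<\sqrt{2}$ (since $|2x-1|\le 21/20<\sqrt{2}$), so the series manipulations that produced the $A_i$'s are valid; but this is actually not needed for the claim itself, which is a finite statement about $\xi_n$ for each fixed $n$.

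The key step is to bound $|A_i(x)|$ uniformly for $x$ in the domain $[-1/20,1/20]\cup[1-1/20,1+1/20]$. Using the explicit formulas, together with $\binom{2i}{i}\le 4^i$, $|2x-1|\le 21/20$, and (critically) $|x(1-x)|\le 21/400$ on both connected components of the domain, I obtain for $i\ge 1$
\begin{equation*}
|A_i(x)| \le \frac{21}{40}\cdot 4^i\cdot\left(\frac{21}{400}\right)^i = \frac{21}{40}\left(\frac{21}{100}\right)^i,
\end{equation*}
while for $i=0$, $|A_0(x)|=|x|\le 21/20$. A common upper bound covering both cases is $a_i := \frac{21}{20}\left(\frac{21}{100}\right)^i$. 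The bound $|x(1-x)|\le 21/400$ needs a brief verification on each interval: on $[-1/20,1/20]$ one has $|x|\le 1/20$ and $|1-x|\le 21/20$; on $[1-1/20,1+1/20]$ the roles are swapped, giving the same product bound by the $x\leftrightarrow 1-x$ symmetry of $x(1-x)$.

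Having the term-wise bound, I would substitute into the definition of $\xi_n$ and use the standard estimate $\binom{m+n-1}{n}\le 2^{m+n-1}$ on the number of weak compositions of $n$ into $m$ parts:
\begin{equation*}
|\xi_n(x_1,\ldots,x_m)| \le \sum_{i_1+\cdots+i_m=n}\prod_{j=1}^m a_{i_j} = \left(\frac{21}{20}\right)^m\left(\frac{21}{100}\right)^n \binom{m+n-1}{n} \le \frac{1}{2}\left(\frac{21}{10}\right)^m\left(\frac{21}{50}\right)^n.
\end{equation*}
Since $21/10\le 3$ and $21/50\le 3/5$, this chain terminates at $\tfrac{1}{2}\cdot 3^m(3/5)^n\le 3^m(3/5)^n$, which is exactly the claim.

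I do not anticipate any genuine obstacle here: the argument is a crude term-wise estimate followed by a stars-and-bars count, and the slack built into the bounds $\binom{2i}{i}\le 4^i$ and $\binom{m+n-1}{n}\le 2^{m+n-1}$ is easily absorbed by the gap between $21/10$ and $3$, and between $42/100$ and $60/100$. The only subtle point is confirming the product $|x(1-x)|\le 21/400$ on the ``far'' component $[1-1/20,1+1/20]$, which the $x\mapsto 1-x$ symmetry makes immediate.
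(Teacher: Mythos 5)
Your proof is correct and follows essentially the same route as the paper: a uniform term-wise bound $|A_i(x)| \le \frac{21}{20}\left(\frac{21}{100}\right)^i$ on $J=[-1/20,1/20]\cup[1-1/20,1+1/20]$, followed by a stars-and-bars count of compositions and the crude estimate $\binom{m+n-1}{n}\le 2^{m+n}$. The only differences are cosmetic: you use the slightly tighter $|2x-1|/2\le 21/40$ in place of the paper's $\le 1$, and you state the composition count as $\binom{m+n-1}{n}$ where the paper's $\binom{m+n-1}{n-1}$ is a harmless typo.
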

Let us define the robust polynomial $p_{AND}$ by truncating the sum in Equation \ref{eq:prodint} to $n\leq 5m$:
\begin{align}
\label{eq:panddef}
p_{AND}(x_1, \ldots x_m) \EqDef \sum_{n=0}^{5m} \xi_{n}(x_1, \ldots x_m).
\end{align}
 Since each $A_i$ is a univariate polynomial with real coefficients and degree $2i+1$, $p_{AND}$ has real coefficients and degree
\begin{align}
\label{eq:degpAND}
\max_{i_1+i_2+\ldots i_m \leq 5m} \br{(2i_1+1)+(2i_2+1)+\ldots (2i_m+1)} = 2(5m) + m = 11m.
\end{align}
In addition,
\begin{align}
\label{eq:pAND1111}
p_{AND}(1,1,\ldots 1) = \sum_{n=0}^{5m} \xi_{n}(1,1 \ldots 1) = \xi_0(1,1,\ldots 1) =1,
\end{align}
where we used the identity $A_i(1)=0$ for $i\geq 1$ and $A_0(1)=1$. Finally, suppose $x=(x_1,x_2,\ldots, x_m)=y+\epsilon$ where $y\in \{0,1\}^m$ and $\epsilon\in [-1/20, 1/20]^m$. Then for each $1\leq i\leq m$ we have
\begin{align*}
x_i\in  [-1/20,1/20]\cup [1-1/20, 1+1/20] \subset S
\end{align*}
and 
\begin{align*}
|p_{AND}(y+\epsilon) - y_1y_2\ldots y_m|=|p_{AND}(x_1,x_2,\ldots, x_m)-\inte(x_1)\inte(x_2)\ldots \inte(x_m)|.
\end{align*}
Using Eqs.~(\ref{eq:prodint}, \ref{eq:panddef}) and the triangle inequality to bound the right-hand side gives
\begin{align*}
|p_{AND}(y+\epsilon) - y_1y_2\ldots y_m|&\leq \sum_{n=5m+1}^{\infty} |\xi_{n}(x_1, \ldots x_m)|\\
&\leq 3^m\sum_{n=5m+1}^{\infty} \br{\frac{3}{5}}^n\\
&=3^m \left(\frac{3}{5}\right)^{5m} \cdot \frac{3}{2}\\
&\leq \left(3\cdot (3/5)^5\cdot (3/2)\right)^m.
\end{align*}
Noting that $3\cdot (3/5)^5\cdot (3/2)\leq e^{-1}$ we arrive at \Eq{eq:robusteq} and complete the proof.
\end{proof}

\begin{proof}[Proof of Claim \ref{clm:xierror}]
Define $J=[-1/20,1/20]\cup[1-1/20, 1+1/20]$ and note that for all $i\geq 1$ we have
\begin{align}
\max_{x\in J}|A_i(x)| = {2i\choose i}\max_{x\in J}\left|\frac{2x-1}{2}\br{x(1-x)}^i\right|  \leq 4^i\cdot \left(\frac{1}{20}\right)^i\left(\frac{21}{20}\right)^i \leq \br{\frac{21}{100}}^i,
\label{eq:i1}
\end{align}
where we used the fact that ${2i\choose i}\leq 4^i$, $\max_{x\in J} |\frac{2x-1}{2}|\leq 1$, and $\max_{x\in J}|x(1-x)|\leq (1/20)(21/20)$. Furthermore,
\begin{align}
\max_{x\in J}|A_0(x)| = \max_{x\in J}|x| \leq \frac{21}{20}.
\label{eq:i0}
\end{align}
Combining Eqs~(\ref{eq:i1}, \ref{eq:i0}) we see that for all $i\geq 0$,
\begin{align}
\label{eq:Aerror}
\max_{x\in J}|A_i(x)| \leq \left(\frac{21}{20}\right)\br{\frac{21}{100}}^i.
\end{align}
Consequently, for $(x_1, \ldots x_m)\in J^m$, using the definition of $\xi_n$ and the triangle inequality, we get
\begin{align}
|\xi_n(x_1, \ldots x_m)|&\leq \sum_{i_1, i_2, \ldots i_m: i_1+\ldots i_m=n} \left|A_{i_1}(x_1)A_{i_2}(x_2)\ldots A_{i_m}(x_m)\right|\nonumber\\
&\leq \left(\frac{21}{20}\right)^m \sum_{i_1, i_2, \ldots i_m: i_1+\ldots i_m=n} \br{\frac{21}{100}}^{i_1+i_2+\ldots i_m}\\
&= \left(\frac{21}{20}\right)^m \left(\frac{21}{100}\right)^n {m+n-1 \choose n-1}.\label{eq:almostdone}
\end{align}
where we used \Eq{eq:Aerror} and the fact that the number of tuples $(i_1,i_2,\ldots, i_m)$ of nonnegative integers satisfying $i_1+i_2+\ldots +i_m=n$ is given by ${m+n-1 \choose n-1}$. Finally, we substitute the bound ${m+n-1 \choose n-1}\leq 2^{m+n}$ into \Eq{eq:almostdone} to arrive at
\begin{align*}
|\xi_n(x_1, \ldots x_m)|\leq \left(\frac{42}{20}\right)^m \left(\frac{42}{100}\right)^n\leq 3^m (3/5)^n.
\end{align*}
\end{proof}

\section{Proof of Lemma \ref{lem:DL}}
\label{append:DLproof}
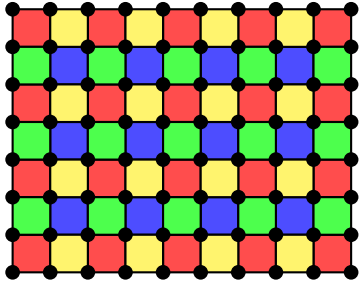
\begin{figure}
\centering
\begin{tikzpicture}[xscale=0.5,yscale=0.5]

%\draw [ultra thick] (0.5,0.5) rectangle (10.5,8.5);

\foreach \i in {1,...,5}
{
\foreach \j in {1,...,4}
\draw [fill=white!30!red, thick] (2*\i-1,2*\j-1) rectangle (2*\i,2*\j);
}

\foreach \i in {1,...,4}
{
\foreach \j in {1,...,3}
\draw [fill=white!30!blue, thick] (2*\i,2*\j) rectangle (2*\i+1,2*\j+1);
}

\foreach \i in {1,...,5}
{
\foreach \j in {1,...,3}
\draw [fill=white!30!green, thick] (2*\i-1,2*\j) rectangle (2*\i,2*\j+1);
}

\foreach \i in {1,...,4}
{
\foreach \j in {1,...,4}
\draw [fill=white!30!yellow, thick] (2*\i,2*\j-1) rectangle (2*\i+1,2*\j);
}

\foreach \i in {1,...,10}
{
\foreach \j in {1,...,8}
   \draw (\i, \j) node[circle, fill=black, scale=0.5]{};
}

\end{tikzpicture}
  \caption{\small The local projectors can be divided into $4$
  groups, where the projectors in each group commute with each
  other. \label{fig:commutegroup}}
\end{figure}

The proof is similar to that given in \cite{AAV16}, which uses a Chebyshev polynomial function of the detectability operator, as suggested in \cite{GossetH15}.  The projectors $\{P_{ij}\}$ can be divided into $4$ groups as follows (see Figure \ref{fig:commutegroup}), with the property that the projectors in each group commute with each other: 
$$\cG_1\EqDef \{P_{ij}: i=\mathrm{odd}, j=\mathrm{odd}\}, \quad \cG_2\EqDef \{P_{ij}: i=\mathrm{even}, j=\mathrm{odd}\},$$ $$\cG_3\EqDef \{P_{ij}: i=\mathrm{odd}, j=\mathrm{even}\}, \quad \cG_4\EqDef \{P_{ij}: i=\mathrm{even}, j=\mathrm{even}\}.$$
We also define
\begin{eqnarray}
\label{eq:DLlayers}
DL_k&\EqDef& \prod_{P_{ij}\in \cG_k}\br{\id-P_{ij}}, \qquad \quad 1\leq k\leq 4,
\end{eqnarray}
and define $DL \EqDef DL_4\cdot DL_3\cdot DL_2\cdot DL_1$. From \cite[Corollary 3]{AAV16}, it holds that for any $\psi$ satisfying $\langle \psi|\Omega\rangle=0$, we have 
\begin{align}
\label{eq:dl2DSV}
\|DL\ket{\psi}\|^2 \leq \frac{1}{1+\frac{\gamma}{8^2}} = \frac{1}{1+\frac{\gamma}{64}}.
\end{align} 
Here we used the fact that, for every projector $P_{ij}$, at most $8$ projectors do not commute with it. Now, we have the following claim, which is proved towards the end.  It uses the `light cone' argument from \cite{AharonovALV08}.
\begin{claim}
\label{clm:lightcone}
Let $F$ be any univariate polynomial of degree at most $t/6$ satisfying $F(0)=1$. Then
\begin{align}
DL(t)=\br{Q'_{2t}\cdot Q'_{8t}\cdot Q'_{14t}\cdot\ldots}\cdot F\left(\id- DL^{\dagger}DL\right)\cdot\br{Q'_{5t}\cdot Q'_{11t}\cdot Q'_{17t}\cdot\ldots}.
\label{eq:absorbF}
\end{align}
\end{claim}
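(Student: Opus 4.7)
The plan is to establish the operator identity $\Pi \cdot F(\id - DL^{\dagger}DL) \cdot \Pi' = \Pi \cdot \Pi' = DL(t)$, where $\Pi \EqDef Q'_{2t} Q'_{8t} Q'_{14t} \cdots$ and $\Pi' \EqDef Q'_{5t} Q'_{11t} Q'_{17t} \cdots$. Since $F(0) = 1$, writing $F(x) = 1 + \sum_{k=1}^{\lfloor t/6 \rfloor} c_k x^k$, it suffices to show
$$\Pi \cdot (\id - DL^{\dagger}DL)^k \cdot \Pi' = 0 \qquad \text{for every } 1 \leq k \leq t/6 .$$
The key tool is the pair of absorption identities $Q'_j (\id - P_{ab}) = Q'_j$ and $Q'_j P_{ab} = 0$ (together with their Hermitian conjugates), valid whenever the support of the plaquette projector $P_{ab}$ lies inside the $4t$-column patch $\{j-2t+1, \ldots, j+2t\}$ defining $Q'_j$.

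The first step is to decompose $\id - DL^{\dagger}DL$ into local pieces. Applying the telescoping identity $\id - A_1 \cdots A_7 = \sum_{i=1}^{7} A_1 \cdots A_{i-1}(\id - A_i) A_{i+1} \cdots A_7$ to $DL^{\dagger}DL = DL_1 DL_2 DL_3 DL_4 DL_3 DL_2 DL_1$, and then expanding each $\id - DL_a = \id - \prod_{(i,j) \in \cG_a}(\id - P_{ij})$ via inclusion-exclusion, yields $\id - DL^{\dagger}DL$ as a sum of terms, each of which contains at least one explicit $P_{ij}$ factor sandwiched between full $DL_b$ operators. Multiplying $k$ such pieces expands $(\id - DL^{\dagger}DL)^k$ into a sum of operators, each carrying at least $k$ explicit $P_{ij}$ seeds. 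The intermediate $DL_b$ operators act on the entire lattice, but the purpose of the light-cone estimate (following \cite{AharonovALV08}) is to show that once $\Pi$ and $\Pi'$ are appended on the two sides, the pieces of the $DL_b$'s lying strictly inside the various $Q'_j$ patches are absorbed, leaving a residual whose effective support is a contiguous window of at most $O(k) \leq O(t/6)$ columns around the seed plaquettes.

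The second step is absorption. The coarse-grained indices $\{2t, 5t, 8t, 11t, \ldots\}$ are spaced exactly $3t$ apart, and each $Q'_j$ covers $4t$ columns, so any contiguous window of fewer than $t$ columns lies strictly inside the interior of at least one such $Q'_j$. Given a residual operator $T$ from the expansion above, fix such a $Q'_j$ whose interior contains the support of $T$. Within $\Pi$ (and within $\Pi'$) the individual projectors have pairwise disjoint supports and therefore commute, so $Q'_j$ can be slid adjacent to $T$ without disturbing the rest of $\Pi$ or $\Pi'$. Iteratively applying the absorption rules to $Q'_j \cdot T$ from the left strips off preceding $(\id - P)$ factors (each absorbed into $Q'_j$) until the first explicit $P_{ab}$ factor is encountered, at which point $Q'_j P_{ab} = 0$ kills the whole expression. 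Summing over all terms in the expansion gives the desired vanishing.

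The main obstacle I anticipate is the light-cone estimate in the first step: one must carefully track the boundary residuals produced when $\Pi$ and $\Pi'$ absorb the intermediate $DL_b$ operators, and verify that the effective support of each order-$k$ term grows by at most a constant number of columns per application of $\id - DL^{\dagger}DL$. The precise value of this constant determines the $t/6$ threshold appearing in the statement, and ensures that the worst-case spread stays inside the $t$-wide interior overlaps between consecutive $Q'_j$'s, so that the absorption argument of the second step applies uniformly to every term in the expansion.
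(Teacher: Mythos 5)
Your reduction to showing $\Pi(\id - DL^\dagger DL)^k \Pi' = 0$ for $1 \le k \le \lfloor t/6\rfloor$ is logically correct (with $\Pi, \Pi'$ the two halves of $DL(t)$), and it is equivalent, via the binomial transform, to the paper's key identity $\Pi(DL^\dagger DL)^p \Pi' = \Pi \Pi'$ for all $p \le t/6$. The paper proves that identity by absorbing $DL^\dagger DL$ into $\Pi$ one commuting layer at a time, obtaining $\Pi(DL^\dagger DL)^p = \Pi\, DL_1^{S_0} DL_2^{S_1}\cdots DL_1^{S_{6p}}$, where $S_\alpha$ grows the residual set $S_0$ (the complement of $\Pi$'s support) by one column per layer. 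When $6p \le t$, the set $S_{6p}$ is still contained in the support of $\Pi'$, so the entire residual is absorbed into $\Pi'$ in one step, without ever needing to track or cancel any explicit $P_{ij}$ factors.

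Your seed-decomposition route is a genuinely different strategy, but the way you invoke the light cone does not quite work as written. After expanding $(\id - DL^\dagger DL)^k$ into terms with $\ge k$ explicit seeds interleaved with full $DL_b$'s, absorbing into $\Pi$ and $\Pi'$ does \emph{not} leave a residual supported on a single contiguous $O(k)$-column window around "the seeds'': the residual lives near the boundaries of \emph{every} $Q'_j$ in $\Pi$ and $\Pi'$ (the sets $S_\alpha$ and $S'_\alpha$ above), and the seeds themselves can sit anywhere on the lattice, in several widely separated places within a single term. So the claim "fix a $Q'_j$ whose interior contains the support of $T$'' does not have a literal $T$ to apply to. What can be made to work is weaker and more delicate: for each term one must locate \emph{some} seed $P_{ab}$ sitting strictly inside the support of \emph{some} $Q'_j$ (from $\Pi$ or $\Pi'$), and then show $Q'_j$ can be slid up to $P_{ab}$ — commuting past everything outside its support, absorbing $(\id - P)$ factors inside its support, and, crucially, not being blocked by any other seed or by the light-cone residuals that straddle the boundary of $Q'_j$'s support. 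This is where your $t/6$ budget is actually spent, and it requires a case analysis (on where the blocking seeds and residuals sit relative to the chosen $Q'_j$) that your sketch does not supply. The paper's formulation avoids all of this because it never needs to single out a killing $Q'_j$: every operator it produces is a product of $(\id - P_{ij})$'s, all of which get uniformly swallowed by $\Pi'$.
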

Before proving this claim, we show how it can be used to establish Lemma \ref{lem:DL}. We apply the Claim with
 $F=\mathrm{Step}_{\frac{t}{6}, \frac{\gamma}{64+\gamma}}$ where the right-hand side is the polynomial from Fact \ref{fact:polyD}. From this we see that for any $\psi \in G_{\perp}$
\begin{align*}
\|DL(t)\ket{\psi}\|^2 \leq \|\mathrm{Step}_{\frac{t}{6}, \frac{\gamma}{64+\gamma}}\br{\id- DL^{\dagger}DL}\ket{\psi'}\|^2
\end{align*}
where $\psi'\in G_{\perp}$ is the state $$\ket{\psi'}=\br{Q'_{5t}\cdot Q'_{11t}\cdot Q'_{17t}\cdot\ldots}\ket{\psi}/\|\br{Q'_{5t}\cdot Q'_{11t}\cdot Q'_{17t}\cdot\ldots}\ket{\psi}\|.$$ But Eq. \eqref{eq:dl2DSV} ensures that the eigenvalues of $DL^{\dagger}DL$ in $G_\perp$ are at most $\frac{1}{1+ \frac{\gamma}{64}} = 1- \frac{\gamma}{64+\gamma}$. Using Fact \ref{fact:polyD} and the fact that $\gamma\leq 1$, we get
\begin{align*}
\|DL(t)\ket{\psi}\| \leq 2e^{-\frac{t}{3}\sqrt{\frac{\gamma}{64+\gamma}}}\leq 2e^{-\frac{t}{3}\sqrt{\frac{\gamma}{65}}}\leq 2e^{-\frac{t\sqrt{\gamma}}{25}}
\end{align*}
\begin{proof}[Proof of Claim \ref{clm:lightcone}] For every $i\in [n-1]$ and $k\in [4]$, let $$\Pi_{i,k}\EqDef \prod_{\id-P_{ij}\in \cG_k: Supp(P_{ij})\in \{i,i+1\}}(\id-P_{ij})$$ be the product of projectors from $\cG_k$ that are supported only on columns $\{i,i+1\}$. Since all projectors in $\cG_k$ commute, $\Pi_{i,k}$ is also a projector and we can write $DL_k = \prod_{i\in [n-1]}\Pi_{i,k}$. For any $S\subset [n]$, define 
$$DL_k^S\EqDef \prod_{i:Supp(\Pi_{i,k})\cap S \neq \phi} \Pi_{i,k}$$
as the product of projectors $\Pi_{i,k}$ that have their support overlapping with $S$.

The argument below has been illustrated in Figure \ref{fig:absorb}. Let $S_0$ be the complement of the support of $\br{Q'_{2t}\cdot Q'_{8t}\cdot Q'_{14t}\cdot\ldots}$.  Observe, using frustration-freeness, that for any $\Pi_{i,k}$ whose support is contained in the support of $\br{Q'_{2t}\cdot Q'_{8t}\cdot Q'_{14t}\cdot\ldots}$, we have 
$$\br{Q'_{2t}\cdot Q'_{8t}\cdot Q'_{14t}\cdot\ldots}\Pi_{i,k} = \br{Q'_{2t}\cdot Q'_{8t}\cdot Q'_{14t}\cdot\ldots}.$$ 
This implies the following identity (c.f. Figure \ref{fig:absorb} (b)):
\begin{align}
\label{eq:firstabsorb}
\br{Q'_{2t}\cdot Q'_{8t}\cdot Q'_{14t}\cdot\ldots}DL_1 = \br{Q'_{2t}\cdot Q'_{8t}\cdot Q'_{14t}\cdot\ldots}DL_1^{S_0}.
\end{align}
For all integers $\alpha\geq 1$, recursively define $S_{\alpha}$ as the set of all columns
 at distance at most $1$ from $S_{\alpha-1}$. Clearly, we have the inclusion $S_0 \subset S_1 \subset S_2 \ldots$. Similar to Eq. \eqref{eq:firstabsorb}, we can `absorb' some of the projectors in $DL_1DL_2$ and obtain the identity:
\begin{align}
\label{eq:secondabsorb}
\br{Q'_{2t}\cdot Q'_{8t}\cdot Q'_{14t}\cdot\ldots}DL_1DL_2=\br{Q'_{2t}\cdot Q'_{8t}\cdot Q'_{14t}\cdot\ldots}DL_1^{S_0}DL_2^{S_1}.
\end{align}
Applying the same argument recursively, and using the fact that $$\br{DL^{\dagger}DL}^p = \br{DL_1\cdot DL_2\cdot DL_3\cdot DL_4\cdot DL_3\cdot DL_2}^p\cdot DL_1,$$ we conclude (c.f. Figure \ref{fig:absorb} (c))
\begin{align}
\label{eq:evenabsorb}
\br{Q'_{2t}\cdot Q'_{8t}\cdot Q'_{14t}\cdot\ldots}\br{DL^{\dagger}DL}^{p} = \br{Q'_{2t}\cdot Q'_{8t}\cdot Q'_{14t}\cdot\ldots}DL_1^{S_0}\cdot DL_2^{S_1}\cdot DL_3^{S_2}\ldots DL_1^{S_{6p}}.
\end{align}
If $6p\leq t$, the set $S_{6p}$ is contained in the support of $\br{Q'_{5t}\cdot Q'_{11t}\cdot Q'_{17t}\cdot\ldots}$. Furthermore, if $\Pi_{i,k}$ is in the support of $\br{Q'_{5t}\cdot Q'_{11t}\cdot Q'_{17t}\cdot\ldots}$, we have 
$$\Pi_{i,k}\br{Q'_{5t}\cdot Q'_{11t}\cdot Q'_{17t}\cdot\ldots} = \br{Q'_{5t}\cdot Q'_{11t}\cdot Q'_{17t}\cdot\ldots}.$$
Thus, all the projectors in $DL_1^{S_0}\cdot DL_2^{S_1}\cdot DL_3^{S_2}\ldots DL_1^{S_{6p}}$ can be `absorbed' in $\br{Q'_{5t}\cdot Q'_{11t}\cdot Q'_{17t}\cdot\ldots}$, which can be formalized as:
\begin{align}
\label{eq:oddabsorb}
DL_1^{S_0}\cdot DL_2^{S_1}\cdot DL_3^{S_2}\ldots DL_1^{S_{6p}}\br{Q'_{5t}\cdot Q'_{11t}\cdot Q'_{17t}\cdot\ldots}= \br{Q'_{5t}\cdot Q'_{11t}\cdot Q'_{17t}\cdot\ldots}.
\end{align}
Combining Eqs. \eqref{eq:evenabsorb} and \eqref{eq:oddabsorb}, we find that
\begin{align}
\br{Q'_{2t}\cdot Q'_{8t}\cdot Q'_{14t}\cdot\ldots}\br{DL^{\dagger}DL}^{p}\br{Q'_{5t}\cdot Q'_{11t}\cdot Q'_{17t}\cdot\ldots}=\br{Q'_{2t}\cdot Q'_{8t}\cdot Q'_{14t}\cdot\ldots}\br{Q'_{5t}\cdot Q'_{11t}\cdot Q'_{17t}\cdot\ldots}
\label{eq:powerabsorb}
\end{align}
for any $p\leq t/6$. Thus, any such power $(DL^{\dagger}DL)^p$ can be replaced by $1$ whenever it is sandwiched between the products of projectors in \Eq{eq:powerabsorb}. This implies that for a polynomial $F$ of degree at most $t/6$, we have
\begin{align}
\br{Q'_{2t}\cdot Q'_{8t}\cdot Q'_{14t}\cdot\ldots}F(I-DL^{\dagger}DL)&\br{Q'_{5t}\cdot Q'_{11t}\cdot Q'_{17t}\cdot\ldots}=\\&\br{Q'_{2t}\cdot Q'_{8t}\cdot Q'_{14t}\cdot\ldots}F(0)\br{Q'_{5t}\cdot Q'_{11t}\cdot Q'_{17t}\cdot\ldots},
\label{eq:powerabsorb}
\end{align}
and using the fact that $F(0)=1$ completes the proof.
\end{proof}
\begin{figure}
\begin{tikzpicture}[xscale=1,yscale=1]

%%%%%%%%%%%%%%%%%%%%%%%% FIGURE (A) %%%%%%%%%%%%%%%%%%%%%

%%%%%%%%% GREEN COARSE GRAINED LAYER
\draw [ultra thick, fill=green!30!white] (0.5, -1) rectangle (1.8,-0.5);
\draw[ultra thick, fill=green!30!white] (4.8, -1) rectangle (11.9,-0.5);
\draw[ultra thick, fill=green!30!white] (14.9, -1) rectangle (16.4,-0.5);

%%%%%%%%% RED COARSE GRAINED LAYER
\draw[ultra thick, fill=red!30!white] (0.5,2.2) rectangle (6.8,1.7);
\draw[ultra thick, fill=red!30!white] (9.8,2.2) rectangle (16.4,1.7);

\draw[thick, fill=black, <->] (9.8, 2.5)--(16.4, 2.5);
\node at (13.5, 3){$4t$};
\draw[thick, fill=black, <->] (9.8, 3)--(11.9, 3);
\node at (11, 3.5){$t$};

%%%%%%%%% LIGHT BLUE LAYER
\foreach \k in {0,...,14}
{
\draw [fill=white!70!blue] (0.9+ \k, -0.2) rectangle (1.75+\k, 0.05);
}
\draw [fill=white!70!blue] (0.5, -0.2) rectangle (0.75, 0.05);
\draw [fill=white!70!blue] (15.9, -0.2) rectangle (16.4, 0.05);

\foreach \k in {1,...,15}
{
\draw [fill=white!70!blue] (0.4+ \k, -0.35) rectangle (1.25+\k, -0.1);
}
\draw [fill=white!70!blue] (0.5, -0.35) rectangle (1.25, -0.1);

%%%%%%%%% DARK BLUE LAYER
\foreach \k in {0,...,14}
{
\draw [fill=white!50!blue] (0.9+ \k, 0.3) rectangle (1.75+\k, 0.55);
}
\draw [fill=white!50!blue] (0.5, 0.3) rectangle (0.75, 0.55);
\draw [fill=white!50!blue] (15.9, 0.3) rectangle (16.4, 0.55);

\foreach \k in {1,...,15}
{
\draw [fill=white!50!blue] (0.4+ \k, 0.15) rectangle (1.25+\k, 0.4);
}
\draw [fill=white!50!blue] (0.5, 0.15) rectangle (1.25, 0.4);

%%%%%%%%% LIGHT YELLOW LAYER
\foreach \k in {0,...,14}
{
\draw [fill=white!70!yellow] (0.9+ \k, 0.8) rectangle (1.75+\k, 1.05);
}
\draw [fill=white!70!yellow] (0.5, 0.8) rectangle (0.75, 1.05);
\draw [fill=white!70!yellow] (15.9, 0.8) rectangle (16.4, 1.05);

\foreach \k in {1,...,15}
{
\draw [fill=white!70!yellow] (0.4+ \k, 0.65) rectangle (1.25+\k, 0.9);
}
\draw [fill=white!70!yellow] (0.5, 0.65) rectangle (1.25, 0.9);

%%%%%%%%% DARK YELLOW LAYER
\foreach \k in {0,...,14}
{
\draw [fill=white!50!yellow] (0.9+ \k, 1.3) rectangle (1.75+\k, 1.55);
}
\draw [fill=white!50!yellow] (0.5, 1.3) rectangle (0.75, 1.55);
\draw [fill=white!50!yellow] (15.9, 1.3) rectangle (16.4, 1.55);

\foreach \k in {1,...,15}
{
\draw [fill=white!50!yellow] (0.4+ \k, 1.15) rectangle (1.25+\k, 1.40);
}
\draw [fill=white!50!yellow] (0.5, 1.15) rectangle (1.25, 1.40);

\node at (8.5, -1.5) {(a)};

%%%%%%%%%%%%%%%%%%%%%%%%%%%% FIGURE (B) %%%%%%%%%%%%%

%%%%%%%%% GREEN COARSE GRAINED LAYER
\draw [ultra thick, fill=green!30!white] (0.5, -1-5) rectangle (1.8,-0.5-5);
\draw[ultra thick, fill=green!30!white] (4.8, -1-5) rectangle (11.9,-0.5-5);
\draw[ultra thick, fill=green!30!white] (14.9, -1-5) rectangle (16.4,-0.5-5);

%%%%%%%%% RED COARSE GRAINED LAYER
\draw[ultra thick, fill=red!30!white] (0.5,2.2-5) rectangle (6.8,1.7-5);
\draw[ultra thick, fill=red!30!white] (9.8,2.2-5) rectangle (16.4,1.7-5);

%%%%%%%%% LIGHT BLUE LAYER
\foreach \k in {0,...,14}
{
\draw [fill=white!70!blue] (0.9+ \k, -0.2-5) rectangle (1.75+\k, 0.05-5);
}
\draw [fill=white!70!blue] (0.5, -0.2-5) rectangle (0.75, 0.05-5);
\draw [fill=white!70!blue] (15.9, -0.2-5) rectangle (16.4, 0.05-5);

\foreach \k in {1,...,15}
{
\draw [fill=white!70!blue] (0.4+ \k, -0.35-5) rectangle (1.25+\k, -0.1-5);
}
\draw [fill=white!70!blue] (0.5, -0.35-5) rectangle (1.25, -0.1-5);

%%%%%%%%% DARK BLUE LAYER
\foreach \k in {0,...,14}
{
\draw [fill=white!50!blue] (0.9+ \k, 0.3-5) rectangle (1.75+\k, 0.55-5);
}
\draw [fill=white!50!blue] (0.5, 0.3-5) rectangle (0.75, 0.55-5);
\draw [fill=white!50!blue] (15.9, 0.3-5) rectangle (16.4, 0.55-5);

\foreach \k in {1,...,15}
{
\draw [fill=white!50!blue] (0.4+ \k, 0.15-5) rectangle (1.25+\k, 0.4-5);
}
\draw [fill=white!50!blue] (0.5, 0.15-5) rectangle (1.25, 0.4-5);

%%%%%%%%% LIGHT YELLOW LAYER
\foreach \k in {0,...,14}
{
\draw [fill=white!70!yellow] (0.9+ \k, 0.8-5) rectangle (1.75+\k, 1.05-5);
}
\draw [fill=white!70!yellow] (0.5, 0.8-5) rectangle (0.75, 1.05-5);
\draw [fill=white!70!yellow] (15.9, 0.8-5) rectangle (16.4, 1.05-5);

\foreach \k in {1,...,15}
{
\draw [fill=white!70!yellow] (0.4+ \k, 0.65-5) rectangle (1.25+\k, 0.9-5);
}
\draw [fill=white!70!yellow] (0.5, 0.65-5) rectangle (1.25, 0.9-5);

%%%%%%%%% DARK YELLOW LAYER
\foreach \k in {6,...,8}
{
\draw [fill=white!50!yellow] (0.9+ \k, 1.3-5) rectangle (1.75+\k, 1.55-5);
}

\foreach \k in {6,...,9}
{
\draw [fill=white!50!yellow] (0.4+ \k, 1.15-5) rectangle (1.25+\k, 1.40-5);
}

\node at (8.5, -1.5-5) {(b)};

%%%%%%%%%%%%%%%%%%%%% FIGURE (C) %%%%%%%%%%%%%%%%%%

%%%%%%%%% GREEN COARSE GRAINED LAYER
\draw [ultra thick, fill=green!30!white] (0.5, -1-10) rectangle (1.8,-0.5-10);
\draw[ultra thick, fill=green!30!white] (4.8, -1-10) rectangle (11.9,-0.5-10);
\draw[ultra thick, fill=green!30!white] (14.9, -1-10) rectangle (16.4,-0.5-10);

%%%%%%%%% RED COARSE GRAINED LAYER
\draw[ultra thick, fill=red!30!white] (0.5,2.2-10) rectangle (6.8,1.7-10);
\draw[ultra thick, fill=red!30!white] (9.8,2.2-10) rectangle (16.4,1.7-10);

%%%%%%%%% LIGHT BLUE LAYER
\foreach \k in {4,...,10}
{
\draw [fill=white!70!blue] (0.9+ \k, -0.2-10) rectangle (1.75+\k, 0.05-10);
}
\draw [fill=white!70!blue] (0.5, -0.2-10) rectangle (0.75, 0.05-10);
\draw [fill=white!70!blue] (0.9, -0.2-10) rectangle (1.75, 0.05-10);
\draw [fill=white!70!blue] (14.9, -0.2-10) rectangle (15.75, 0.05-10);
\draw [fill=white!70!blue] (15.9, -0.2-10) rectangle (16.4, 0.05-10);

\foreach \k in {5,...,10}
{
\draw [fill=white!70!blue] (0.4+ \k, -0.35-10) rectangle (1.25+\k, -0.1-10);
}
\draw [fill=white!70!blue] (0.5, -0.35-10) rectangle (1.25, -0.1-10);
\draw [fill=white!70!blue] (15.4, -0.35-10) rectangle (16.25, -0.1-10);

%%%%%%%%% DARK BLUE LAYER
\foreach \k in {5,...,9}
{
\draw [fill=white!50!blue] (0.9+ \k, 0.3-10) rectangle (1.75+\k, 0.55-10);
}
\draw [fill=white!50!blue] (0.5, 0.3-10) rectangle (0.75, 0.55-10);
\draw [fill=white!50!blue] (15.9, 0.3-10) rectangle (16.4, 0.55-10);

\foreach \k in {5,...,10}
{
\draw [fill=white!50!blue] (0.4+ \k, 0.15-10) rectangle (1.25+\k, 0.4-10);
}
\draw [fill=white!50!blue] (0.5, 0.15-10) rectangle (1.25, 0.4-10);
\draw [fill=white!50!blue] (15.4, 0.15-10) rectangle (16.25, 0.4-10);

%%%%%%%%% LIGHT YELLOW LAYER
\foreach \k in {5,...,9}
{
\draw [fill=white!70!yellow] (0.9+ \k, 0.8-10) rectangle (1.75+\k, 1.05-10);
}
\draw [fill=white!70!yellow] (0.5, 0.8-10) rectangle (0.75, 1.05-10);
\draw [fill=white!70!yellow] (15.9, 0.8-10) rectangle (16.4, 1.05-10);

\foreach \k in {6,...,9}
{
\draw [fill=white!70!yellow] (0.4+ \k, 0.65-10) rectangle (1.25+\k, 0.9-10);
}

%%%%%%%%% DARK YELLOW LAYER
\foreach \k in {6,...,8}
{
\draw [fill=white!50!yellow] (0.9+ \k, 1.3-10) rectangle (1.75+\k, 1.55-10);
}

\foreach \k in {6,...,9}
{
\draw [fill=white!50!yellow] (0.4+ \k, 1.15-10) rectangle (1.25+\k, 1.40-10);
}

\node at (8.5, -1.5-10) {(c)};\end{tikzpicture}
  \caption{\small Graphical description of Eq. \eqref{eq:evenabsorb}. \textbf{(a)} The operators $DL_1, DL_2, DL_3, DL_4$ correspond to the dark yellow, light yellow, dark blue and light blue layers, respectively. Within each layer, all the projectors (small rectangles representing $\Pi_{i,k}$) mutually commute, although they need not have disjoint support. Two projectors from different layers may not commute if they have overlapping support. {\bf(b):} Some projectors in $DL_1$ are `absorbed' by the red coarse-grained layer. Resulting operator is $DL_1^{S_0}$ from Eqn.~\eqref{eq:firstabsorb}. {\bf (c):} The same process occurs for $4$ steps, with projectors from $DL_2, DL_3, DL_4$ absorbed in the red coarse-grained layer. The resulting operator is $DL_1^{S_0}DL_2^{S_1}DL_3^{S_2}DL_4^{S_3}$ and the support of `unabsorbed' projectors increases its boundary by one at each step. All the remaining projectors can be absorbed in the green coarse-grained layer, as they are contained in its support. \label{fig:absorb}}
\end{figure}
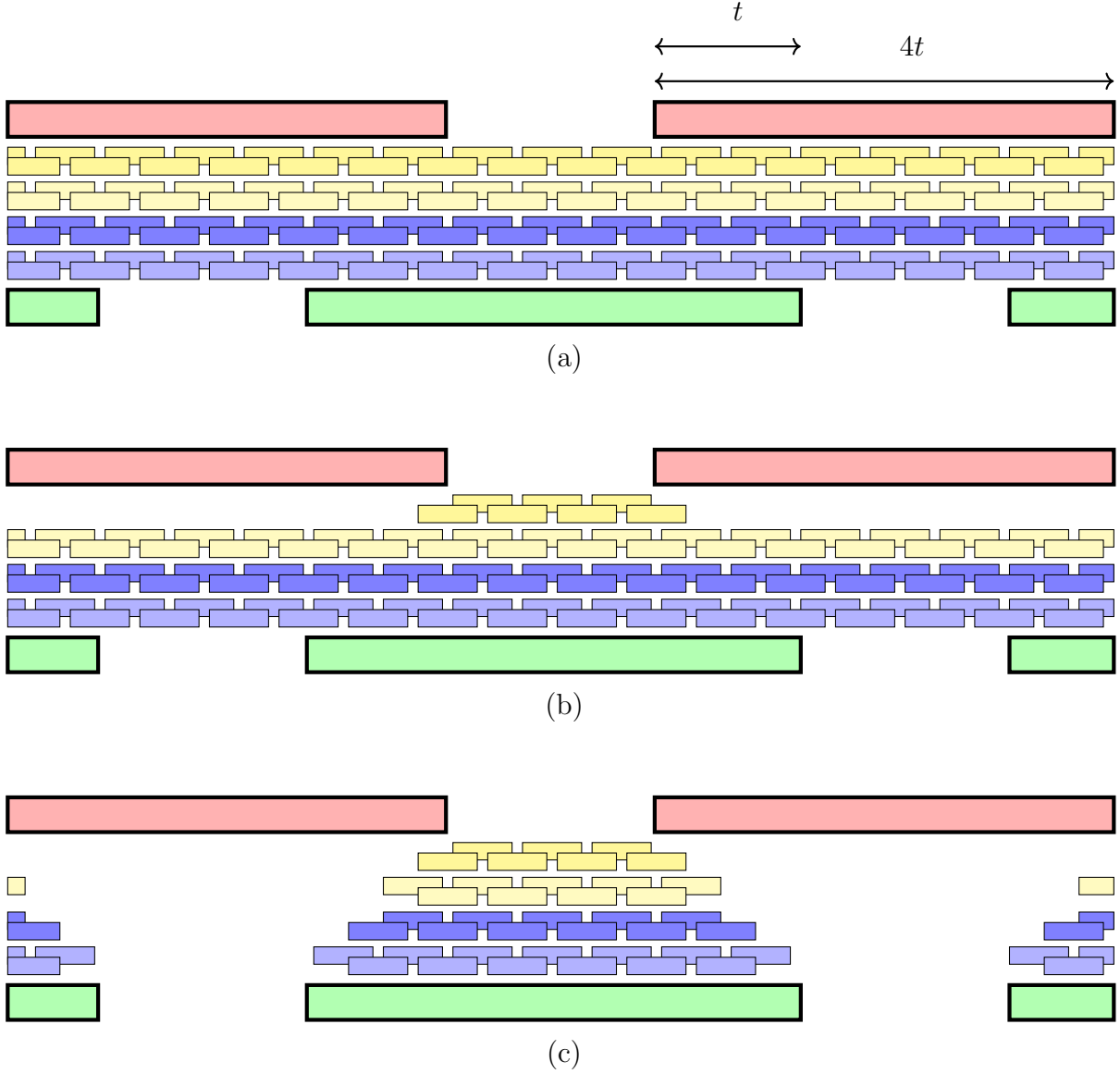

\section{R{\'e}nyi entanglement entropy and PEPS description}
\label{app:huangbound}
The R{\'e}nyi entropy of order $\alpha\in (0,1)$ is defined as 
$$S_{\alpha}(\rho) = \frac{1}{1-\alpha}\log\Tr(\rho^{\alpha}).$$
Below we show that the R{\'e}nyi entropy of order $1/2$ satisfies the same subvolume law as in \Thm{thm:subr_cut}. This implies, via an argument from \Ref{Huang19}, the following PEPS description of the ground state. In the following we say an operator $O$ is geometrically local if its support is contained in a contiguous region of $O(1)$ qudits.
\begin{theorem}
Let $\delta \in (0,1)$ and suppose $\gamma, d = \Omega(1)$. There exists a PEPS state $\ket{\phi}$ with bond dimension $e^{O\br{\frac{1}{\delta^5}\log^{21}\br{\frac{1}{\delta}}}}$ such that 
$$|\bra{\Omega}O\ket{\Omega} - \bra{\phi}O\ket{\phi}|\leq \delta,$$ for all geometrically local operators $O$ satisfying $\|O\|\leq 1$.
\end{theorem}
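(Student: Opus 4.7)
The plan has two main components: first establishing a subvolume bound on the R\'enyi-$1/2$ entanglement entropy of the ground state, and second converting that bound into a PEPS approximation via the argument of Huang~\cite{Huang19}.

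For the first step, observe that the proof of \Thm{thm:subr_cut} constructs a $(D,\Delta)$-AGSP with $D\cdot \Delta \leq 1/2$ and deduces a bound on the von Neumann entropy via \Thm{thm:AGSParealaw}. As shown in \cite{AradLV12}, the same AGSP in fact forces a geometric-style decay of the Schmidt coefficients $\{\mu_i\}$ of the ground state across the cut, of the form $\mu_{kD+1}\leq (D\Delta)^{k}$. Such decay gives a uniform bound on the R\'enyi entropy of every order $\alpha \in (0,1)$. In particular, I would verify that it implies $S_{1/2}(\rho_A)\le C\log D$, which combined with the bounds on $D$ derived in \Sec{sec:arealaw1D} and \Sec{sec:arealawrectangle} yields, under the assumption $\gamma,d=\Omega(1)$, the subvolume bound $S_{1/2}(\rho_A)\leq C'|\partial A|^{5/3}\log^{7/3}|\partial A|$ for any bipartition into a rectangular region $A$ and its complement.

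For the second step, I would apply the construction of Huang~\cite{Huang19} essentially verbatim. His argument partitions the $n\times L$ lattice into square blocks of side length $\ell$ and obtains a PEPS by performing sequential Schmidt-truncations on the boundaries of these blocks. Given the subvolume bound on $S_{1/2}$, truncating a single block boundary to bond dimension $D$ incurs a trace-distance error bounded by $e^{C\ell^{5/3}\log^{7/3}\ell}/\sqrt{D}$, which follows from the standard inequality $\sum_{i>D}\mu_i \le e^{S_{1/2}}/\sqrt{D}$. Using Huang's accounting---the locality of the observable implies that only $O(1)$ block boundaries near the support of $O$ affect $\bra{\phi}O\ket{\phi}$---the net error on a local observable $O$ with $\|O\|\le 1$ is bounded by $\mathrm{poly}(\ell)\cdot e^{C\ell^{5/3}\log^{7/3}\ell}/\sqrt{D}$. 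Setting this at most $\delta$ gives $\log D = O(\ell^{5/3}\log^{7/3}\ell + \log(1/\delta))$.

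The last step is to choose the block size $\ell$ as a function of $\delta$. In Huang's argument $\ell$ is chosen so that other boundary contributions arising from the iterated truncation-propagation analysis are themselves at most $\delta$; tracing through his analysis, this corresponds to $\ell = \Theta(\delta^{-3}\mathrm{polylog}(1/\delta))$. Substituting this choice into the area law $S_{1/2}=O(\ell)$ recovers his bond-dimension bound $e^{O(1/\delta)}$, while substituting it into our weaker subvolume bound gives
\[
\log D \;=\; O\br{\ell^{5/3}\log^{7/3}\ell} \;=\; O\br{\delta^{-5}\log^{21}(1/\delta)},
\]
which matches the theorem. The main obstacle I expect is to check that none of Huang's intermediate steps require a genuine area-law scaling (as opposed to the subvolume scaling we have available), and to confirm the precise polynomial and polylogarithmic dependence of his block size on $\delta$, since it is this dependence---together with our $L^{5/3}\log^{7/3}L$ entropy bound---that produces the final $\delta^{-5}\log^{21}(1/\delta)$ exponent.
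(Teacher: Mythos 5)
Your high-level plan matches the paper's: establish a subvolume bound on the R\'enyi--$1/2$ entanglement entropy via the $(D,\Delta)$-AGSP of \Thm{thm:subr_cut}, then feed it into Huang's PEPS construction. The first step is essentially right, and it is carried out the same way in the paper's Claim~C.2 (bounding $\sum_i\lambda_i\le 5D$ from the Schmidt-coefficient decay, giving $S_{1/2}\le 10\log D$).

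However, there is a genuine error in your final step, and it lands exactly where you flagged your ``main obstacle.'' You treat the block size $\ell$ as being fixed by Huang's analysis independently of the entropy scaling, and you assert that the same $\ell=\Theta(\delta^{-3}\mathrm{polylog}(1/\delta))$ substituted into an area law $S_{1/2}=O(\ell)$ recovers Huang's bond dimension $e^{O(1/\delta)}$. That is inconsistent with your own formula $\log D = O(\ell^{5/3}\log^{7/3}\ell+\log(1/\delta))$: with $S_{1/2}=O(\ell)$ and $\ell=\Theta(\delta^{-3}\mathrm{polylog}(1/\delta))$ you would get $e^{O(\delta^{-3}\mathrm{polylog}(1/\delta))}$, not $e^{O(1/\delta)}$. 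The resolution is that the block size \emph{does} depend on the entropy bound. In Huang's construction (as reproduced in the paper, see \Fig{fig:periodicsquare}) the purifying strip adjacent to a size-$m\times m$ block must contain enough qudits to hold the approximate purifier of the block's reduced state, which forces its width $b$ to satisfy $mb\gtrsim S_{1/2}(m)\log(1/\delta)$, i.e.\ $b\sim S_{1/2}(m)\log(1/\delta)/m$. The error from a randomly shifted partition placing the given local observable on a strip is $O(b/m)$, so the constraint that fixes $m$ is $S_{1/2}(m)\log(1/\delta)/m^2\le\delta$. With the area law $S_{1/2}(m)=O(m)$ this gives $m=\Theta(\delta^{-1}\log(1/\delta))$ and hence Huang's $e^{O(1/\delta)}$; with the subvolume law $S_{1/2}(m)=O(m^{5/3}\log^{7/3}m)$ it gives $m^{-1/3}\log^{7/3}(m)\log(1/\delta)\le\delta$, hence $m=\Theta(\delta^{-3}\log^{11}(1/\delta))$, and then
\begin{align*}
\log D \;=\; O\!\left(m^{5/3}\log^{7/3}m\right) \;=\; O\!\left(\delta^{-5}\log^{21}(1/\delta)\right).
\end{align*}
So your final exponents are correct, but the step that pins down $\ell$---and the claim that the same $\ell$ serves both the area-law and subvolume cases---is wrong as stated and must be replaced by the $b/m\le\delta$ calculation.

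A minor secondary remark: the paper's (and Huang's) construction is phrased as an approximate purification of each block's reduced state into the adjacent strip, not a sequential Schmidt truncation with error $e^{S_{1/2}}/\sqrt D$; the two viewpoints give the same $\log D=O(S_{1/2})$, but the inequality you cite should read $\sum_{i>D}\mu_i\le e^{S_{1/2}}/D$ (from $\sqrt{\mu_D}\le e^{S_{1/2}/2}/D$ and $\sum_i\sqrt{\mu_i}=e^{S_{1/2}/2}$), giving a trace-distance truncation error $e^{S_{1/2}/2}/\sqrt D$. This is harmless for the asymptotics but worth getting right.
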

\begin{proof}[Proof sketch]
We essentially follow the proof of \cite[Theorem 3]{Huang19}, with a minor modification arising due to the fact that we are not considering periodic boundary conditions. Consider a partition of the lattice into regions shown in Figure \ref{fig:periodicsquare}. From Claim \ref{clm:renyi}, we conclude that $S_{\frac{1}{2}}$ across any of the blue rectangular bi-partitions (with perimeter $4m$) is upper bounded by $O(m^{5/3}\log^{7/3}(m))$. The width $b$ of green rectangles is chosen so that that the reduced ground state in any blue region can be purified (up to an error of $\frac{\delta}{2}$) in the associated green region. Using the upper bound on $S_{\frac{1}{2}}$ from Claim \ref{clm:renyi}, the choice $b= O(m^{2/3}\log^{7/3}(m)\log\frac{1}{\delta})$ suffices, see Eq. (36) of Ref.~\cite{Huang19}. The PEPS state is constructed in the same manner as given in \cite[Theorem 3]{Huang19}. The error in approximating the local expectation value arises in two ways: first error of $\frac{\delta}{2}$ incurred in approximate purification of the reduced density matrix and the second error of $O\br{\frac{b}{m}}$ incurred if a local operator overlaps a green region for some choice of $a,a'$ (see Figure \ref{fig:expandedlat}). This gives a total error of (cf. Eq. (38) of Ref.~\cite{Huang19}) $$\frac{\delta}{2}+ O\br{\frac{b}{m}}=\frac{\delta}{2} + O\br{\frac{m^{5/3}\log^{7/3}(m)\log\frac{1}{\delta}}{m^2}} \leq \delta,$$ if we choose $m= \Omega\br{\frac{1}{\delta^3}\log^{11}\br{\frac{1}{\delta}}}$. This implies the existence of a PEPS state $\ket{\phi}$ with bond dimension $e^{O\br{m^{5/3}\log^{7/3}(m)}} = e^{O\br{\frac{1}{\delta^5}\log^{21}\br{\frac{1}{\delta}}}}$, which completes the proof.

\begin{figure}
\centering
\begin{tikzpicture}[xscale=0.5,yscale=0.5]

\draw [fill=blue!10!white, thick] (0.4,1.4) rectangle (14.6, 11.6);

\foreach \j in {1,...,2}
\draw [fill=blue!30!white, thick] (0.7,0.6+3*\j) rectangle (2.3,3.3+3*\j);
\draw [fill=blue!30!white, thick] (0.7,0.6+1) rectangle (2.3,2.3+1);
\draw [fill=blue!30!white, thick] (0.7,0.6+9) rectangle (2.3,2.3+9);

\foreach \j in {1,...,2}
\draw [fill=green!30!white, thick] (2.7,0.6+3*\j) rectangle (3.3,3.3+3*\j);
\draw [fill=green!30!white, thick] (2.7,0.6+1) rectangle (3.3,2.3+1);
\draw [fill=green!30!white, thick] (2.7,0.6+9) rectangle (3.3,2.3+9);

\foreach \j in {1,...,2}
\draw [fill=blue!30!white, thick] (3.7,0.6+3*\j) rectangle (6.3,3.3+3*\j);
\draw [fill=blue!30!white, thick] (3.7,0.6+1) rectangle (6.3,2.3+1);
\draw [fill=blue!30!white, thick] (3.7,0.6+9) rectangle (6.3,2.3+9);

\foreach \j in {1,...,2}
\draw [fill=green!30!white, thick] (6.7,0.6+3*\j) rectangle (7.3,3.3+3*\j);
\draw [fill=green!30!white, thick] (6.7,0.6+1) rectangle (7.3,2.3+1);
\draw [fill=green!30!white, thick] (6.7,0.6+9) rectangle (7.3,2.3+9);

\foreach \j in {1,...,2}
\draw [fill=blue!30!white, thick] (7.7,0.6+3*\j) rectangle (10.3,3.3+3*\j);
\draw [fill=blue!30!white, thick] (7.7,0.6+1) rectangle (10.3,2.3+1);
\draw [fill=blue!30!white, thick] (7.7,0.6+9) rectangle (10.3,2.3+9);

\foreach \j in {1,...,2}
\draw [fill=green!30!white, thick] (10.7,0.6+3*\j) rectangle (11.3,3.3+3*\j);
\draw [fill=green!30!white, thick] (10.7,0.6+1) rectangle (11.3,2.3+1);
\draw [fill=green!30!white, thick] (10.7,0.6+9) rectangle (11.3,2.3+9);

\foreach \j in {1,...,2}
\draw [fill=green!30!white, thick] (11.7,0.6+3*\j) rectangle (12.3,3.3+3*\j);
\draw [fill=green!30!white, thick] (11.7,0.6+1) rectangle (12.3,2.3+1);
\draw [fill=green!30!white, thick] (11.7,0.6+9) rectangle (12.3,2.3+9);

\foreach \j in {1,...,2}
\draw [fill=blue!30!white, thick] (12.7,0.6+3*\j) rectangle (14.3,3.3+3*\j);
\draw [fill=blue!30!white, thick] (12.7,0.6+1) rectangle (14.3,2.3+1);
\draw [fill=blue!30!white, thick] (12.7,0.6+9) rectangle (14.3,2.3+9);

\foreach \i in {1,...,14}
{
\foreach \j in {2,...,11}
   \draw (\i, \j) node[circle, fill=black, scale=0.3]{};
}

\draw[thick, |-|] (0, 9.8)--(0,11.2);
\node at (-0.5, 10.5) {$a'$};

\draw[thick, |-|] (0, 6.8)--(0,9.2);
\node at (-0.5, 8) {$m$};

\draw[thick, |-|] (0.7, 12)--(2.2,12);
\node at (1.5, 12.5) {$a$};

\draw[thick, |-|] (3.7, 12)--(6.2,12);
\node at (5, 12.5) {$m$};

\draw[thick, |-|] (6.7, 12)--(7.3,12);
\node at (7, 12.5) {$b$};

\end{tikzpicture}
  \caption{\small A partition of the lattice into several blue and green parts. The size of green parts is chosen so that the reduced ground state on blue regions (except the rightmost ones) can be purified in the green regions to their immediate right. The reduced ground state on the rightmost blue regions can be purified in the green regions to their immediate left. The subvolume law in \Thm{thm:subr_cut} ensures that the size of green regions can be chosen much smaller than that of the blue regions. If $a, a'$ are chosen uniformly and independently at random in the range $\{1,2,\ldots m\}$, the probability that a local operator is not supported in a blue region is $O\br{\frac{b}{m}}$. This is depicted in Figure \ref{fig:expandedlat}. \label{fig:periodicsquare}}
\end{figure}
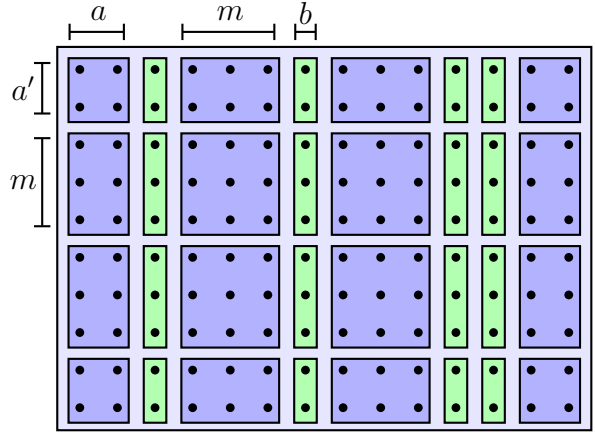
\end{proof}

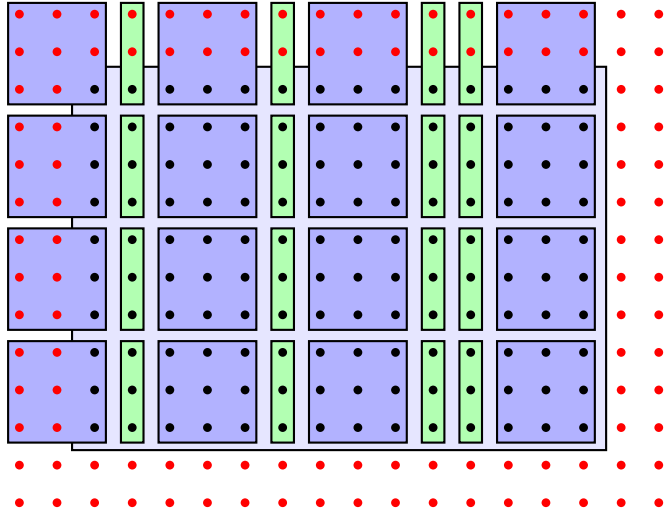
\begin{figure}
\centering
\begin{tikzpicture}[xscale=0.5,yscale=0.5]

\draw [fill=blue!10!white, thick] (0.4,1.4) rectangle (14.6, 11.6);

\foreach \j in {0,...,3}
\draw [fill=blue!30!white, thick] (-1.3,1.6+3*\j) rectangle (1.3,4.3+3*\j);

\foreach \j in {0,...,3}
\draw [fill=green!30!white, thick] (1.7,1.6+3*\j) rectangle (2.3,4.3+3*\j);

\foreach \j in {0,...,3}
\draw [fill=blue!30!white, thick] (2.7,1.6+3*\j) rectangle (5.3,4.3+3*\j);

\foreach \j in {0,...,3}
\draw [fill=green!30!white, thick] (5.7,1.6+3*\j) rectangle (6.3,4.3+3*\j);

\foreach \j in {0,...,3}
\draw [fill=blue!30!white, thick] (6.7,1.6+3*\j) rectangle (9.3,4.3+3*\j);

\foreach \j in {0,...,3}
\draw [fill=green!30!white, thick] (9.7,1.6+3*\j) rectangle (10.3,4.3+3*\j);

\foreach \j in {0,...,3}
\draw [fill=green!30!white, thick] (10.7,1.6+3*\j) rectangle (11.3,4.3+3*\j);

\foreach \j in {0,...,3}
\draw [fill=blue!30!white, thick] (11.7,1.6+3*\j) rectangle (14.3,4.3+3*\j);

\foreach \i in {1,...,14}
{
\foreach \j in {2,...,11}
   \draw (\i, \j) node[circle, fill=black, scale=0.3]{};
}

\foreach \i in {-1,...,16}
{
   \draw (\i, 0) node[circle, fill=red, scale=0.3]{};
   \draw (\i, 1) node[circle, fill=red, scale=0.3]{};
 \draw (\i, 12) node[circle, fill=red, scale=0.3]{};
 \draw (\i, 13) node[circle, fill=red, scale=0.3]{};

}

\foreach \j in {2,...,11}
{  
\draw (-1, \j) node[circle, fill=red, scale=0.3]{};
   \draw (0, \j) node[circle, fill=red, scale=0.3]{};
   \draw (15, \j) node[circle, fill=red, scale=0.3]{};
   \draw (16, \j) node[circle, fill=red, scale=0.3]{};
}

\end{tikzpicture}
  \caption{\small Expand the lattice by adding $m-1$ rows and
  columns of red vertices in each direction. Consider the partition
  of new lattice, as shown above. Translating this partition by
  $a-1$ steps right and $a'-1$ steps down gives the partition of the
  original lattice in Figure \ref{fig:periodicsquare}. Now consider
  any local operator within the original lattice. Translating the
  original partition by $a-1$ steps right and $a'-1$ steps down is
  equivalent to translating the operator by $a-1$ steps left and
  $a'-1$ steps up. If $a,a'$ are chosen at random in $\{1,2,\ldots
  m\}$, the probability that the operator is not contained in a blue
  region can now easily be computed to be
  $O\br{\frac{b}{m}}$.  \label{fig:expandedlat}}
\end{figure}

\begin{claim}
\label{clm:renyi}
Under the conditions of \Thm{thm:subr_cut}, for a bipartition corresponding to a rectangular region $A$ and its complement, we have
\begin{align*}
S_{\frac{1}{2}}(\Omega_A)\leq O\left(\frac{|\partial A|^{5/3}}{\gamma^{5/6}} 
      \log^{7/3}(d|\partial A|\gamma^{-1})\right).
\end{align*}
\end{claim}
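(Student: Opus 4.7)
The strategy is to reuse the AGSP $K(m,t,\ell)$ and the parameter choices $(m,t,\ell)$ made in the proof of \Thm{thm:subr_cut}. Those furnish a $(D,\Delta)$-AGSP with $D\cdot\Delta\le 1/2$ and with $\log D$ already bounded by the right-hand side of the target inequality (up to constants). What must change is the final step: instead of invoking \Thm{thm:AGSParealaw} (which bounds the von Neumann entropy), I would invoke the R\'enyi-entropy analog, which follows from the very same AGSP parameters and so requires no new AGSP construction.

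The key ingredient is the standard Schmidt-coefficient tail bound derived from an AGSP, as used in \cite{AradLV12, AradKLV13}: whenever $D\Delta\le 1/2$, the sorted Schmidt coefficients $\mu_1\ge\mu_2\ge\ldots$ of $\ket{\Omega}$ across the cut satisfy
\begin{align*}
\mu_{kD+1}^2 \;\le\; 2\,(2\Delta)^k \qquad \text{for all } k\ge 0.
\end{align*}
This tail bound is in fact what drives the $S\le 10\log D$ conclusion in \Thm{thm:AGSParealaw}, obtained by optimizing $-\sum_i \mu_i^2\log \mu_i^2$ subject to the tail constraint. For $S_{1/2}$, the same tail bound yields the result by a one-line geometric sum: grouping indices into blocks of size $D$, bounding $\sum_{i=kD+1}^{(k+1)D}\mu_i \le D\sqrt{2(2\Delta)^k}$, and summing over $k$,
\begin{align*}
S_{1/2}(\Omega_A) \;=\; 2\log \sum_i \mu_i \;\le\; 2\log\!\left(D\sum_{k\ge 0} \sqrt{2\,(2\Delta)^k}\right) \;=\; 2\log\!\left(\frac{\sqrt{2}\,D}{1-\sqrt{2\Delta}}\right) \;\le\; 2\log D + O(1),
\end{align*}
where I used $2\Delta \le 1/D \le 1/2$. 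Substituting the bound on $\log D$ from the proof of \Thm{thm:subr_cut} then gives the claim.

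The main obstacle is really only locating or reproducing the Schmidt-coefficient tail bound in the form above; since it is already implicit in (and essentially equivalent to) the proof of \Thm{thm:AGSParealaw}, no new AGSP estimates are required. The rest is the elementary geometric summation displayed above, and an identical argument handles the vertical-cut setting of \Thm{thm:subv_cut} as well.
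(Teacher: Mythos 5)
The high-level strategy is exactly the paper's: reuse the $(D,\Delta)$-AGSP furnished by \Thm{thm:subr_cut}, invoke a Schmidt-coefficient tail bound, and sum to bound $S_{1/2}$. But there is a substantive error in the tail bound you use, and the summation step breaks once it is corrected.

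The correct AGSP tail bound (which the paper cites from Lemma~3.3 of \cite{AradLV12}) has \emph{geometrically} growing indices: applying $K^\ell$ to a well-chosen state of Schmidt rank $1$ yields a state of Schmidt rank at most $D^\ell$, so one obtains
\begin{align*}
\sum_{i>D^{\ell}}\lambda_i^2 \;\le\; 2D\Delta^{\ell} \;\le\; \Delta^{\ell-1}
\end{align*}
for all $\ell\ge 1$. Your bound $\mu_{kD+1}^2 \le 2(2\Delta)^k$ has \emph{linearly} growing indices $kD+1$, which is not derivable from the AGSP: the Schmidt rank of $K^{k}$ is at most $D^{k}$, not $kD$, so the $k$-th step controls the tail at $D^k$, not at $kD$. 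Your version would be a much stronger statement.

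This is not merely cosmetic, because once the indices are corrected the $\ell$-th block $\{D^{\ell}+1,\ldots,D^{\ell+1}\}$ has size $D^{\ell+1}-D^{\ell}$, which grows geometrically, and your ``largest term times block size'' bound gives
\begin{align*}
\sum_{i=D^{\ell}+1}^{D^{\ell+1}}\lambda_i \;\le\; (D^{\ell+1}-D^{\ell})\,\lambda_{D^{\ell}+1} \;\le\; D^{\ell+1}\sqrt{\Delta^{\ell-1}},
\end{align*}
a geometric series with ratio $D\sqrt{\Delta}$. The hypothesis $D\Delta\le 1/2$ does \emph{not} ensure $D\sqrt{\Delta}<1$, so the sum need not converge. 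The paper instead applies Cauchy--Schwarz (equivalently, optimizes the flat distribution within each block), bounding $\sum_{i\in\text{block }\ell}\lambda_i \le \sqrt{(D^{\ell+1}-D^{\ell})\Delta^{\ell-1}}\le D\sqrt{(D\Delta)^{\ell-1}}$, a geometric series with ratio $\sqrt{D\Delta}\le 1/\sqrt{2}$, which does converge and gives $\sum_i\lambda_i\le 5D$ and hence $S_{1/2}\le 10\log D$. So to repair your argument you need both (i) the geometric-index tail bound and (ii) a Cauchy--Schwarz step within blocks in place of the pointwise block bound.
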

\begin{proof}
The proof of \Thm{thm:subr_cut} establishes the existence of a $(D,\Delta)$-AGSP with respect to the given rectangular bipartition such that $D\Delta<1/2$ and (cf. \Thm{thm:AGSParealaw})
\begin{align*}
10\log(D)\leq \frac{C|\partial A|^{5/3}}{\gamma^{5/6}} 
      \log^{7/3}(d|\partial A|\gamma^{-1})
\end{align*}
for some universal constant $C>0$. To complete the proof, we show $S_{1/2}(\Omega_A)\leq 10\log(D)$ for any $(D,\Delta)$-AGSP satisfying $D\Delta<1/2$. 

As shown in \cite[Proof of Lemma 3.3]{AradLV12}, a $(D,\Delta)$ AGSP with $D\Delta<\frac{1}{2}$ implies the following bound on the Schmidt coefficients $\{\lambda_1, \lambda_2, \ldots \}$ (arranged in non-increasing order) of $\ket{\Omega}$, with respect to the cut:
$$\sum_{i\in \{D^{\ell}+1, \ldots D^{\ell+1}\}}\lambda_i^2\leq \sum_{i>D^{\ell}}\lambda_i^2 \leq 2D\Delta^{\ell} \leq \Delta^{\ell-1},$$
for all integers $\ell\geq 1$. We will upper bound $S_{\frac{1}{2}}(\Omega_R) = 2\log\br{\sum_{i}\lambda_i}$, under the above constraint. Following \cite{AradLV12}, we can maximize the R{\'e}nyi entropy by setting $\lambda^2_i = \frac{\Delta^{\ell-1}}{D^{\ell+1}- D^{\ell}}$, whenever $i\in \{D^{\ell}+1, \ldots D^{\ell+1}\}$. With this choice,
\begin{eqnarray*}
\sum_{i}\lambda_i &=&  \sum_{i\leq D}\lambda_i + \sum_{\ell=1}^{\infty}\br{(D^{\ell+1}- D^{\ell})\sqrt{\frac{\Delta^{\ell-1}}{D^{\ell+1}- D^{\ell}}}}\\
&\leq& D + \sum_{\ell=1}^{\infty}\br{\sqrt{D^{\ell+1}\Delta^{\ell-1}}} = D + D \sum_{\ell=1}^{\infty} \sqrt{(D\Delta)^{\ell-1}}\\
&\leq& D\br{1+ \sum_{\ell=1}^{\infty}\br{\frac{1}{\sqrt{2}}}^{\ell-1}} \leq 5D.
\end{eqnarray*}  
Thus, $S_{\frac{1}{2}}(\Omega_R) \leq 2\log(5D) \leq 10\log(D)$, completing the proof.
\end{proof}

\end{document}